\newcommand{\paperkeywords}{cluster, scheduler, virtual machine, vector bin%
    packing, high performance computing, batch scheduling}
\newif\ifremark
\long\def\remark#1{
\ifremark%
        \begingroup%
        \dimen0=\columnwidth
        \advance\dimen0 by -1in%
        \setbox0=\hbox{\parbox[b]{\dimen0}{\protect\em #1}}
        \dimen1=\ht0\advance\dimen1 by 2pt%
        \dimen2=\dp0\advance\dimen2 by 2pt%
        \vskip 0.25pt%
        \hbox to \columnwidth{%
                \vrule height\dimen1 width 3pt depth\dimen2%
                \hss\copy0\hss%
                \vrule height\dimen1 width 3pt depth\dimen2%
        }%
        \endgroup%
\fi}
\newtheorem{theorem}{Theorem}
\newcommand{\jobset}{\mathcal{J}\xspace}
\newcommand{\procset}{\mathcal{P}\xspace}
\newcommand{\njtasks}{\ensuremath{\mathcal{T}_j}\xspace}
\newcommand{\maxstretch}{\mathcal{S}\xspace}
\newcommand{\dates}{\mathbb{D}\xspace}
\newcommand{\Int}{\ensuremath{t}\xspace}
\newcommand{\intset}{\mathcal{I}\xspace}
\newcommand{\length}{\ensuremath{\ell}\xspace}
\newcommand{\quanta}{\mathcal{Q}\xspace}
\newcommand{\R}{\mathcal{R}\xspace}
\newcommand{\demand}{\ensuremath{D}\xspace}
\newcommand{\greedy}{Greedy\xspace}
\newcommand{\greedyp}{GreedyP\xspace}
\newcommand{\greedypm}{GreedyPM\xspace}
\newcommand{\mcb}{MCB8\xspace}
\newcommand{\mcbs}{MCB8-stretch\xspace}
\newcommand{\mcbsp}{stretch-per\xspace}
\newcommand{\activeres}{*\xspace}
\newcommand{\periodic}{per\xspace}
\newcommand{\optavg}{\textsc{opt=avg}\xspace}
\newcommand{\optmin}{\textsc{opt=min}\xspace}
\newcommand{\optmax}{\textsc{opt=max}\xspace}
\newcommand{\mft}[1][\@empty]{%
\ifx\@empty#1%
\textsc{minft}%
\else%
\textsc{minft=#1}%
\fi\xspace}
\newcommand{\mvt}[1][\@empty]{%
\ifx\@empty#1%
\textsc{minvt}%
\else%
\textsc{minvt=#1}%
\fi\xspace}
\newcommand{\equi}{\textsc{EquiPartition}\xspace}
\newcommand{\eqnlabel}[1]{\xdef\@currentlabel{\theequation}\ltx@label{#1}}
\newcommand{\cnstslabel}[2][\thesubcounstraints]{
   \refstepcounter{subcounstraints}%
   \xdef\@currentlabel{#1}%
   \tagform@{#1}%
   \ltx@label{#2}\quad &}
\newcommand{\cnstlabelBIS}[2][\thesubcounstraints]{
   \xdef\@currentlabel{#1}%
   \ltx@label{#2}}
\newcounter{subcounstraints}[equation]
\newenvironment{LinearProgram}[2][Maximize]{%
   \begin{equation}%
     \left\{\begin{array}{l}%
       \textsc{#1 } #2
       \textsc{~under the constraints}\\%
       \begin{aligned}%
 }{%
        \end{aligned}
     \end{array}\right.
   \end{equation}%
}
\newenvironment{LinearProgramWOObjective}{%
   \begin{equation}%
     \left\{\begin{array}{l}%
       \begin{aligned}%
 }{%
        \end{aligned}
     \end{array}\right.
   \end{equation}%
}
\newcommand{\EquationsNumbered}[1]{%
  \let\label=\cnstlabelBIS%
   \setcounter{subcounstraints}{0}
   \renewcommand{\thesubcounstraints}{\theequation\alph{subcounstraints}}
   \def\set@counstraintscounter{
     \refstepcounter{subcounstraints}%
     \xdef\@currentlabel{\thesubcounstraints}%
     \tagform@{\thesubcounstraints}%
   }
   \def\markwithoutsep{\set@counstraintscounter\quad}
   \def\mark{\set@counstraintscounter\quad&}
   \def\n{\\\mark}
   \begin{aligned}%
     #1
   \end{aligned}%
   \let\label=\eqnlabel%
 }
\newcommand{\EquationsNumberedbis}[2]{%
  \let\label=\cnstlabelBIS%
   \setcounter{subcounstraints}{0}
   \renewcommand{\thesubcounstraints}{\theequation\alph{subcounstraints}}
   \def\set@counstraintscounter{
     \refstepcounter{subcounstraints}%
     \xdef\@currentlabel{\thesubcounstraints}%
     \tagform@{\thesubcounstraints}%
   }
   \def\markwithoutsep{\set@counstraintscounter\quad}
   \def\mark{\set@counstraintscounter\quad&}
   \def\n{\\\mark}
   \begin{aligned}%
     #1
   \end{aligned}\\%
   & \begin{aligned}
     #2
   \end{aligned}
   \let\label=\eqnlabel%
 }
\begin{document}

\makeRR

\section{Introduction}
\label{sec.introduction}

The standard method for sharing a cluster among High Performance Computing (HPC)
users is batch scheduling. With batch scheduling, users submit \emph{requests}
to run applications, or \emph{jobs}. Each request is placed in a queue and waits
to be granted an \emph{allocation}, that is, a subset of the cluster's compute
nodes, or \emph{nodes} for short. The job has exclusive access to these nodes
for a bounded duration.

One problem with batch scheduling is that it inherently limits overall resource
utilization. If a job uses only a fraction of a node's resource (e.g., half of
the processor cores, a third of the memory), then the remainder of it is wasted.
It turns out that this is the case for many jobs in HPC workloads. For example,
in a 2006 log of a large Linux cluster~\cite{feitelson-pwa}, more than 95\% of
the jobs use under 40\% of a node's memory, and more than 27\% of the jobs
effectively use less than 50\% of the node's CPU resource. Similar observations
have been made repeatedly in the literature~\cite{setia1999ijm, batat2000gsm,
chiang2001cls, li2004wcm}.  
Additionally, since batch schedulers use integral resource allocations with no
time-sharing of nodes, incoming jobs can be postponed even while some nodes are
sitting idle. 

A second problem is the known disconnect with user concerns (response time,
fairness)~\cite{lee2007pru, schwiegelshohn2000fpj}. While batch schedulers
provide myriad configuration parameters,
these parameters are not directly related to relevant user-centric metrics.

In this work we seek to remedy both of the above problems. We address the first
by allowing fractional resource allocations (e.g., allocating 70\% of a resource
to a job task) that can be modified on the fly (e.g., by changing allocated
fractions, by migrating job tasks to different nodes). We address the second
problem by defining an objective performance metric and developing algorithms
that attempt to optimize it.

Existing job scheduling approaches generally assume that job processing times
are known~\cite{bender2004aaa} or that reliable estimates are
available~\cite{srinivasan2002cbs}. Unfortunately, user-provided job processing
time estimates are often inaccurate~\cite{lee2006ous}, albeit used by batch
schedulers. We take a drastic departure from the literature and assume no
knowledge of job processing times.

Our approach, which we term \emph{dynamic fractional resource scheduling}
(DFRS), amounts to a carefully controlled time-sharing scheme enabled by virtual
machine (VM) technology. Other VM-based scheduling approaches have been
proposed, but the research in this area has focused primarily on technical
issues~\cite{bhatia2007vcm, hermenier2009ecm} or extensions to existing
scheduling schemes, such as combining best-effort and reservation based
jobs~\cite{sotomayor2008cbe}. In this work we:  
\begin{compactitem} 
    \item Define the offline and online DFRS problems and establish their 
          complexity; 
    \item Derive absolute performance bounds for any given problem instance;
    \item Propose algorithms for solving the online non-clairvoyant DFRS 
          problem; 
    \item Evaluate our algorithms in simulation using synthetic and real-world
          HPC workloads; 
    \item Identify algorithms that outperform batch scheduling by orders of 
          magnitude; 
    \item Define a new metric to capture the notion of efficient resource 
          utilization; 
    \item Demonstrate that our algorithms can be easily tuned so that they are  
          as or more resource efficient than batch scheduling while still 
          outperforming it by orders of magnitude.
\end{compactitem}

\noindent We formalize the DFRS problem in Section~\ref{sec.flex}, study its
complexity in Section~\ref{sec.theory},  and propose DFRS algorithms in
Section~\ref{sec.alg}.  We describe our experimental methodology in
Section~\ref{sec.experiments} and present results in Section~\ref{sec.results}.
We discuss related work in Section~\ref{sec.related}, and conclude with a
summary of results and a highlights of future directions in
Section~\ref{sec.conclusion}.

\section{The DFRS Approach}
\label{sec.flex}

DFRS implements fractional allocation of resources, such as CPU cycles, and thus
uses time-sharing. The classical time-sharing solution for parallel applications
is gang scheduling~\cite{ousterhout1982stc}. In gang scheduling, tasks in a
parallel job are executed during the same synchronized time slices across
cluster nodes. Gang scheduling requires distributed synchronized
context-switching, which may require significant overhead and thus long time
slices. Furthermore, in any time slice the use of any node is dedicated to a
single application, which leads to low system utilization for applications that
do not saturate CPU resources. Because of its drawbacks, gang scheduling is used
far less often than batch scheduling for managing HPC clusters. 

To circumvent the problems of batch scheduling without being victim of the
drawbacks of gang scheduling, in this work we opt for time-sharing in an
uncoordinated and low-overhead manner, enabled by virtual machine (VM)
technology. Beyond providing mechanisms for efficient time-sharing, VM
technology also allows for seamless job preemption and migration without any
modification of application code. Both preemption and migration can be used to
increase fairness among jobs and, importantly, to avoid starvation. Migration
can also be used to achieve better load balance, and hence better system
utilization and better overall job performance.

\subsection{System Overview and Use of VM Technology}
\label{sec.system}

We target clusters of homogeneous \emph{nodes} managed by a resource allocation
system that relies on VM technology. The system responds to job requests by
creating collections of VM instances on which to run the jobs. Each VM instance
runs on a physical node under the control of a VM Monitor that can enforce
specific resource fraction allocations for the instance. All VM Monitors are in
turn under the control of a VM Manager that specifies allocated resource
fractions for all VM instances. The VM Manager can also preempt instances, and
migrate instances among physical nodes. Several groups in academia and industry
have developed systems with this conceptual architecture~\cite{mcnett2007uef,
grit2007hvm, nurmi2008eoc, virtual_center, xen_enterprise}. Such use of VM
technology as the main resource consolidation and management mechanism is today
one of the tenets of ``cloud computing.''

VM technology allows for accurate sharing of hardware resources among VM
instances while achieving performance isolation. For instance, the popular Xen
VM Monitor~\cite{barham2003xav} enables CPU-sharing and performance isolation in
a way that is low-overhead, accurate, and rapidly
adaptable~\cite{schanzenbach2008arc}. Furthermore, sharing can be arbitrary. For
instance, the Xen Credit CPU scheduler can allow three VM instances to each
receive 33.3\% of the total CPU resource of a dual-core
machine~\cite{gupta2007ctc}. This allows a multi-core physical node to be
considered as an arbitrarily time-shared single core. Virtualization of other
resources, such as I/O resources, is more challenging~\cite{willmann2007cdn} but
is an active area of research~\cite{wiov08}.  Recent work also targets the
virtualization of full memory hierarchies (buses and
caches)~\cite{nesbit2007vpc}.  In this work we simply assume that one can start
a VM instance on a node and allocate to it reasonably precise fractions of the
resources on that node. Given this capability, whether available today or in the
future, our approach is applicable to many resource dimensions. In our
experiments we include solely CPU and memory resources, the sharing of which is
well supported by VM technology today.

An additional advantage of executing jobs within VM instances is that their
instantaneous resource needs can be discovered via
monitoring~\cite{gupta2005xqm, grit2007hvm}, introspection~\cite{jones2006atp,
jones2006gmb}, and/or configuration variation~\cite{jones2006atp, jones2006gmb}.

\subsection{Problem Statement}
\label{sec.problem}

We consider a homogeneous cluster based on a switched interconnect and with a
network-attached storage. Users submit requests to run jobs that consist of one
or more tasks to be executed in parallel. Each task runs within a VM instance.
Our goal is to make sound resource allocation decisions. These decisions include
selecting initial nodes for VM instances, setting allocated resource fractions
for each instance, migrating instances between nodes, preempting and pausing
instances (by saving them to local or network-attached storage), and postponing
incoming job requests.

Each task has a \emph{memory requirement}, expressed as a fraction of node
memory, and a \emph{CPU need}, which is the fraction of node CPU cycles that the
task needs to run at maximum speed. For instance, a task could require 40\% of
the memory of a node and would utilize 60\% of the node's CPU resource in
dedicated mode. We assume that these quantities are known and do not vary
throughout job execution. Memory requirements could be specified by users or be
discovered on-the-fly, along with CPU needs, using the discovery techniques
described in Section~\ref{sec.system}.  Memory capacities of nodes should not be
exceeded. In other words, we do not allow the allocation of a node to a set of
tasks whose cumulative memory requirement exceeds 100\%. This is to avoid the
use of process swapping, which can have a hard to predict but almost always
dramatic impact on task execution times. We do allow for overloading of CPU
resources, meaning that a node may be allocated to a set of tasks whose
cumulative CPU needs exceed 100\%. Further, the CPU fraction actually allocated
to the task can change over time, e.g., it may need to be decreased due to the
system becoming more heavily loaded. When a task is given less than its CPU need
we assume that its execution time is increased proportionally. The task then
completes once the cumulative CPU resource assigned to it up to the current time
is equal to the product of its CPU need and execution time on a dedicated
system. Note that a task can never be allocated more CPU than its need. In this
work we target HPC workloads, which mostly comprise regular parallel
applications.  Consequently, we assume that all tasks in a job have the same
memory requirements and CPU needs, and that they must progress at the same rate.
We enforce that allocations provide identical instantaneous CPU fractions to all
tasks of a job, as non-identical fractions needlessly waste resources.

One metric commonly used to evaluate batch schedules is the \emph{stretch} (or
slowdown)~\cite{bender1998fsm}. The stretch of a job is defined as its actual
turn-around time divided by its turn-around time had it been alone on the
cluster. For instance, a job that could have run in 2 hours on the dedicated
cluster but instead runs in 4 hours due to competition with other jobs
experiences a stretch of 2. In the literature a proposed way to optimize both
for average performance and for fairness is to minimize the maximum
stretch~\cite{bender1998fsm}, as opposed to simply minimizing average stretch,
the latter being prone to starvation~\cite{legrand2008mss}. Maximum stretch
minimization is known to be theoretically difficult. Even in clairvoyant
settings there does not exist any constant-ratio competitive
algorithm~\cite{legrand2008mss}, as seen in Section~\ref{sec.theory}.
Nevertheless, heuristics can lead to good results in
practice~\cite{legrand2008mss}. 

Stretch minimization, and especially maximum stretch minimization, tends to
favor short jobs, but on real clusters the jobs with shortest running times are
often those that fail at launch time. To prevent our evaluation of schedule
quality from being dominated by these faulty jobs, we adopt a variant of the
stretch called the \emph{bounded stretch}, or ``bounded slowdown'' utilizing the
terminology in~\cite{feitelson1997tpp}. In this variant, the turn-around time of
a job is replaced by a threshold value if this turn-around time is smaller than
that threshold. We set the threshold to 10 seconds, and hereafter we use the
term stretch to mean bounded stretch. 

In this work we do not assume \emph{any} knowledge about job processing times.
Batch scheduling systems require that users provide processing time estimates,
but these estimates are typically (wildly) inaccurate~\cite{lee2006ous}. Relying
on them is thus a losing proposition. Instead, we define a new metric, the
\emph{yield}, that does not use job processing time estimates. The \emph{yield}
of a task is the instantaneous fraction of the CPU resource of the node
allocated to the task divided by the task's CPU need. Since we assume that all
tasks within a job have identical CPU needs and are allocated identical CPU
fractions, they all have the same yield which is then the yield of the job. Both
the yield and the stretch capture an absolute level of job ``happiness'', and
are thus related.  In fact, the yield can be seen as the inverse of an
instantaneous stretch. We contend that yield optimization is more feasible than
stretch optimization given that job processing times are notorious for being
difficult to estimate while CPU needs can be discovered (see
Section~\ref{sec.system}).  In fact, in our experiments we make the conservative
assumption that all jobs are CPU bound (thereby not relying on CPU need
discovery).

Our goal is to develop algorithms that explicitly seek to maximize the minimum
yield. The key questions are whether this strategy will lead to good stretch
values in practice, and whether DFRS will be able to outperform standard batch
scheduling algorithms.

\section{Theoretical Analysis}
\label{sec.theory}

In this section we study the offline scenario so that we can derive a lower
bound on the optimal maximum stretch of any instance assuming a clairvoyant
scenario. We then quantify the difficulty of the online, non-clairvoyant case. 
Even in an offline scenario and even when ignoring CPU needs, memory constraints
make the problem NP-hard in the strong sense since it becomes a bin-packing
problem~\cite{garey1979cig}. Consequently, in this section we assume that all
jobs have null memory requirements, or, conversely, that memory resources are
infinite.

\subsection{The Offline Case}
\label{sec.offline}

Formally, an instance of the offline problem is defined by a set of jobs,
$\jobset$, and a set of nodes, $\procset$. Each job $j$ has a set, $\njtasks$,
of tasks and a CPU need, $c_j$, between $0$ and $1$. It is submitted at its
\emph{release date} $r_j$ and has \emph{processing time} $p_j$, representing its
execution time on an equivalent dedicated system. A target value $\maxstretch$
for the maximum stretch defines a deadline $d_j = r_j + \maxstretch\times p_j$
for the execution of each job $j$. The set of job release dates and deadlines,
$\dates = \bigcup_{j \in \jobset} \{r_j, d_j\}$, gives rise naturally to a set
$\intset$ of consecutive, non-overlapping, left-closed intervals that cover the
time span $\left[\min_{j \in \jobset}r_j,\max_{j \in \jobset}d_j\right)$, where
the upper and lower bounds of each interval in $\intset$ are members of $\dates$
and each member of $\dates$ is either the upper or lower bound of at least one
member of $\intset$. For any interval $\Int$ we define $\length(\Int) = \sup\Int
- \inf\Int$ (i.e., $\length(\Int)$ is the \emph{length} of $\Int$).

A \emph{schedule} is an allocation of processor resources to job tasks over
time. For a schedule to be valid it must satisfy the following conditions: 1)
every task of every job $j$ must receive $c_j \times p_j$ units of work over the
course of the schedule, 2) no task can begin before the release date of its job,
3) at any given moment in time, no more than 100\% of any CPU can be allocated
to running tasks, 4) the schedule can be broken down into a sequence of
consecutive, non-overlapping time spans no larger than some time quantum
$\quanta$, such that over each time span every task of a job $j$ receives the
same amount of work and each of these tasks receives no more than $c_j$ times
the length of the time span units of work. Within these small time spans any
task can fully utilize a CPU resource, regardless of the CPU need of its job,
and the tasks of a job can proceed independently. The exact size of $\quanta$
depends upon the system, but as the timescale of parallel job scheduling is
orders of magnitude larger than that of local process scheduling, we make the
reasonable assumption that for every $\Int \in \intset$, $\quanta <<
\length(\Int)$.

\begin{theorem}\label{thm-opt-stretch}

  Let us consider a system with infinite memory and assume that any task can be
  moved instantaneously and without penalty from one node to another. Then there
  exists a valid schedule whose maximum stretch is no greater than $\maxstretch$
  if and only if the following linear system has a solution, where each variable
  $\alpha^{\Int}_{j}$ represents the portion of job $j$ completed in time 
  interval $\Int$:
  \begin{LinearProgramWOObjective}
    &\EquationsNumbered{\mark
      \label{lp-opt-jobcomplete}\forall j \in \jobset \quad &
        \sum_{\Int \in \intset} \alpha_j^{\Int} = 1;\n%
      \label{lp-opt-release}\forall j \in \jobset, \forall \Int \in \intset
        \quad &
        r_j \geq \sup \Int \Rightarrow \alpha_j^{\Int} =  0;\n%
      \label{lp-opt-deadline}\forall j \in \jobset, \forall \Int \in \intset
        \quad &
        d_j \leq \inf \Int \Rightarrow \alpha_j^{\Int} =  0;\n%
      \label{lp-opt-int}\forall j \in \jobset, \forall \Int \in \intset \quad &
        \alpha_j^{\Int} p_j \leq \length(\Int);\n%
      \label{lp-opt-power}\forall \Int \in \intset \quad &
        \sum_{j \in \jobset} \alpha_j^{\Int} p_j c_j |\njtasks| \leq |\procset|
        \length(\Int).%
    }
    \label{lp-opt}
  \end{LinearProgramWOObjective}

\end{theorem}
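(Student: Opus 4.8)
The plan is to prove the equivalence through the natural correspondence that the theorem itself suggests: set $\alpha_j^{\Int}$ equal to the fraction of job $j$'s total work that is performed during interval $\Int$. Since each task of job $j$ must accumulate $c_j p_j$ units of CPU work to complete, and since the problem enforces identical progress for all tasks of a job, this fraction is well defined and common to every task of $j$. Both directions are then a matter of translating between the four schedule-validity conditions and the five linear constraints.

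For the forward implication I would start from a valid schedule of maximum stretch at most $\maxstretch$ and simply read off the $\alpha_j^{\Int}$. Constraint \eqref{lp-opt-jobcomplete} restates validity condition~1 (every task receives all $c_j p_j$ of its work). Constraint \eqref{lp-opt-release} follows from condition~2, since an interval with $r_j \ge \sup\Int$ lies entirely before the release date. Constraint \eqref{lp-opt-deadline} holds because a stretch bound of $\maxstretch$ forces job $j$ to complete by $d_j = r_j + \maxstretch p_j$, so no work occurs in an interval with $d_j \le \inf\Int$. Constraint \eqref{lp-opt-int} is obtained by summing the per-quantum bound of condition~4 over the quanta composing $\Int$: each task of $j$ receives at most $c_j\length(\Int)$ units of work, which is exactly $\alpha_j^{\Int} c_j p_j \le c_j\length(\Int)$. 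Finally, \eqref{lp-opt-power} is the time-integral over $\Int$ of the instantaneous capacity bound of condition~3. This direction is routine bookkeeping.

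The substance of the proof is the converse: turning a feasible point $(\alpha_j^{\Int})$ into an explicit valid schedule. I would work interval by interval and, inside a fixed $\Int$ of length $\length(\Int)$, assign every task of job $j$ a constant progress rate $\rho_j = \alpha_j^{\Int} c_j p_j / \length(\Int)$. Constraint \eqref{lp-opt-int} then gives $\rho_j \le c_j \le 1$ (each task can absorb this rate), while constraint \eqref{lp-opt-power} gives $\sum_{j} \rho_j |\njtasks| \le |\procset|$ (the aggregate instantaneous CPU demand never exceeds the $|\procset|$ available CPUs). These two inequalities are precisely the hypotheses of McNaughton's wrap-around rule for preemptive scheduling on identical machines. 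I would subdivide $\Int$ into spans of length at most $\quanta$ (possible because $\quanta \ll \length(\Int)$) and, within each span, run each task of $j$ at full CPU for a $\rho_j$-fraction of the span, laying the tasks out contiguously across the $|\procset|$ nodes by wrap-around. Because every per-task demand is at most the span length, no task is ever placed on two nodes at the same instant, so each node runs at most one task at a time; this yields conditions~3 and~4, and all tasks of a job receive identical work per span by construction.

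The main obstacle is exactly this quantum-level realization: I must verify that the wrap-around packing simultaneously respects the single-task constraint (no task on two nodes at once, which hinges on each demand being bounded by the span length) and the capacity constraint, and that the resulting spans of size at most $\quanta$ meet condition~4 together with its ``equal work per task of a job'' requirement. Once the per-interval schedules are concatenated, \eqref{lp-opt-jobcomplete}, \eqref{lp-opt-release} and \eqref{lp-opt-deadline} immediately give condition~1 and confine all of job $j$'s work to $[r_j, d_j)$; hence its turn-around time is at most $d_j - r_j = \maxstretch p_j$, its stretch is at most $\maxstretch$, and the equivalence is established.
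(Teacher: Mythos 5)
Your proposal is correct and follows essentially the same route as the paper: the forward direction is the same constraint-by-constraint bookkeeping, and your converse construction --- constant per-interval rates $\rho_j = \alpha_j^{\Int} c_j p_j/\length(\Int)$, bounded by constraints \eqref{lp-opt-int} and \eqref{lp-opt-power}, realized within sub-$\quanta$ spans by a wrap-around packing across the $|\procset|$ nodes --- is exactly the paper's greedy interval-by-interval schedule, with the only cosmetic difference that you name McNaughton's rule explicitly while the paper spells out the same packing via a common-denominator subdivision of $\Int$ into $\lambda$ pieces of length $\R<\quanta$.
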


\begin{proof}
  The constraints in Linear System~\eqref{lp-opt} are straightforward. They 
  restate a number of the conditions required for a valid schedule in terms of 
  $\alpha^{\Int}_{j}$, and also add a requirement that jobs be processed before 
  their deadlines. In greater detail:
  \begin{compactitem}
  \item Constraint~\eqref{lp-opt-jobcomplete} states that each job must be 
        fully processed;
  \item Constraint~\eqref{lp-opt-release} states that no work can be done on a 
        job before its release date;
  \item Constraint~\eqref{lp-opt-deadline} states that no work can can be done
        on a job after its deadline;
  \item Constraint~\eqref{lp-opt-int} states that a task cannot run longer 
        during a time interval than the length of the time interval;
  \item Constraint~\eqref{lp-opt-power} states that the cumulative computational
        power used by the different tasks during a time interval cannot exceed 
        what is available.
  \end{compactitem}
  These conditions are necessary. We now show that they suffice to insure the 
  existence of a schedule that achieves the desired maximum stretch (i.e., there
  exists a valid schedule in which each job completes before its deadline). 

  From any solution of Linear System~\eqref{lp-opt} we can build a valid 
  schedule. We show how to build the schedule for a single interval $\Int$, the
  whole schedule being obtained by concatenating all interval schedules. For 
  each job $j$, each of its tasks receives a cumulative computational 
  power equal to $\alpha^{\Int}_{j}p_j c_j$ during interval $\Int$. Let 
  $a^{\Int}_{j}$ and $b^{\Int}_{j}$ be two integers such that 
  $\frac{a^{\Int}_{j}}{b^{\Int}_{j}} = \alpha^{\Int}_{j}p_j c_j$. Without loss 
  of generality, we can assume that all integers $b^{\Int}_{j}$ are equal to a 
  constant $b$: 
  $\forall j \in \jobset, \forall \Int \in \intset, b^{\Int}_{j} = b$. Let 
  $\R$ be a value smaller than $\quanta$ such that there exists an integer
  $\lambda > 0$ where $\length(\Int) = \lambda \times \R$. 
  As $\R$ is smaller than $\quanta$, during any of the $\lambda$ sub-intervals
  of size $\R$ the different tasks of a job can fully saturate the CPU and be
  run in any order. During each of these sub-intervals, we greedily schedule the
  tasks on the nodes in any order: starting at time 0, we first run the first 
  task on the first node at full speed (100\% CPU utilization) for a time 
  $\frac{a^{\Int}_{j_1}}{b\lambda}$, where $j_1$ is the job this task belongs 
  to. Then we run the second task on the first node at full speed (100\% CPU 
  utilization) for a time $\frac{a^{\Int}_{j_2}}{b\lambda}$, where $j_2$ is the 
  job this task belongs to. If there is not enough remaining time on the first 
  node to accommodate the second task, we schedule the second task on the first 
  node for all the remaining time, and we schedule the remaining of the task on 
  the second node starting at the beginning of the sub-interval (thanks to our 
  assumption on task migration). We proceed in this manner until every task is
  scheduled. We now show that this schedule is valid.

  Note that our construction ensures that no task runs simultaneously on two 
  different nodes. Indeed, this could only happen if, for a task of some job 
  $j$, we had:
  \[ 
    \frac{a^{\Int}_{j}}{\lambda b} > \R \quad \Leftrightarrow \quad %
    \alpha^{\Int}_{j}p_j c_j = \frac{a^{\Int}_{j}}{b} > \length(\Int)\;,
  \]
  which is forbidden by Constraint~\eqref{lp-opt-int}. Then, by construction, 
  1) every task of every job $j$ receives $\sum_{\Int \in
  \intset}\alpha_j^{\Int}p_j c_j = p_j c_j$ units of work, 2) no task is started
  before the release date of its job, 3) at any given time no more than 100\% of
  any CPU is allocated to running tasks, and 4) the schedule can be broken down
  into a sequence of non-overlapping time spans of size no larger than
  $\quanta$, such that over each time span every task of each job $j$ receives 
  the same amount of work and no task receives more than $c_j$ times the length 
  of the time span units of work.
\end{proof}

Since all variables of Linear System~\eqref{lp-opt} are rational, one can check
in polynomial time whether this system has a solution. Using a binary search,
one can use the above theorem to find an approximation of the optimal maximum
stretch. In fact, one can find the optimal value in polynomial time using a
binary search and a version of Linear System~\eqref{lp-opt} tailored to check
the existence of a solution for a range of stretch values (see~\cite[Section
6]{legrand2008mss} for details). While the underlying assumptions in
Theorem~\ref{thm-opt-stretch} are not met in practice, the optimal maximum
stretch computed via this theorem is a lower bound on the actual optimal maximum
stretch.

\subsection{The Online Case}

Online maximum stretch minimization is known to be theoretically difficult.
Recall that the competitive ratio of an online algorithm is the worst-case ratio
of the performance of that algorithm with the optimal offline algorithm.  Even
in a clairvoyant scenario there is no constant-ratio competitive online
algorithm~\cite{legrand2008mss}. 

In this work we study an online, non-clairvoyant scenario. However, unlike the
work in~\cite{legrand2008mss}, time-sharing of compute nodes is allowed. The
question then is whether this added time-sharing capability can counter-balance
the fact that the scheduler does not know the processing time of jobs when they
arrive in the system. In general, bounds on the competitive ratios of online
algorithms can be expressed as a function of the number of jobs submitted to the
system, here denoted by $|\jobset|$, or as a function of $\Delta$, the ratio
between the processing times of the largest and shortest jobs.

In this section we assume that we have one single-core node at our disposal or,
equivalently, that all jobs are perfectly parallel. We show that, in spite of
this simplification, the problem is still very difficult (i.e., lower bounds on
competitive ratios are large). As a result, the addition of time-sharing does
not change the overall message of the work in~\cite{legrand2008mss}. The first
result is that the bound derived for online algorithms in a clairvoyant setting
without time-sharing holds in a non-clairvoyant, time-sharing context.

\begin{theorem}\label{thm:lb-online-maxstretch-delta}
  There is no $\frac{1}{2}\Delta^{\sqrt{2}-1}$-competitive preemptive
  time-sharing online algorithm for minimizing the maximum stretch if
  at least three jobs have distinct processing times.
\end{theorem}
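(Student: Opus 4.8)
The plan is to prove this lower bound by an adversary argument, proceeding by contradiction: suppose some deterministic preemptive time-sharing online algorithm $A$ is $\rho$-competitive for the maximum stretch with $\rho < \frac{1}{2}\Delta^{\sqrt{2}-1}$ (the randomized case would follow by Yao's principle). Since we are on a single core -- or equivalently all jobs are perfectly parallel -- a schedule is just an assignment, at each instant, of a CPU fraction in $[0,1]$ to the currently released jobs, and a job of processing time $p$ completes once the integral of its allocated fraction reaches $p$. Because $A$ is non-clairvoyant, its allocation up to any time $\tau$ depends only on the release dates seen so far and not on the still-hidden processing times; this indistinguishability is the lever the adversary pulls.

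First I would fix the instance to consist of exactly three jobs whose processing times are in geometric progression, $p_1 : p_2 : p_3$, with $p_{\max}/p_{\min} = \Delta$, leaving the release dates $r_2, r_3$ and the precise ratios as free parameters to be optimized at the end. The adversary releases $J_1$ at time $0$ and watches $A$. The crucial measurement is $w(\tau)$, the total work $A$ has completed on $J_1$ by time $\tau$ (with $w(\tau)\le\tau$ by the single-core bound). Because $A$ cannot see the future, $w(\cdot)$ is identical across all continuations the adversary might choose; the adversary then releases $J_2$, and later $J_3$, at instants chosen so that the residual work on the earlier jobs, combined with the deadlines implied by the target stretch, cannot all be met.

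The core is a two-branch dilemma at each release. If by the release time of the next job $A$ has already invested \emph{too much} CPU in the long job, the adversary makes the newly released job short and urgent, leaving $A$ too little residual capacity to finish it within a small multiple of its tiny processing time and thus forcing a large stretch on the short job; conversely, if $A$ has invested \emph{too little}, the long job is still far from done and its own stretch is driven up once the later, shorter jobs must be served. In both branches I would bound $A$'s maximum stretch from below, while computing the optimal offline value from the single-core feasibility condition of Theorem~\ref{thm-opt-stretch}, which stays small because a clairvoyant scheduler can front-load exactly the right fraction of each job. Taking the ratio and then choosing the geometric spacing and release offsets to equalize the two branches is what produces the exponent $\sqrt{2}-1$ and the constant $\frac{1}{2}$; this balancing is a short but delicate optimization over the free parameters.

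The step I expect to be the main obstacle is ruling out that fractional time-sharing lets $A$ hedge its bets. In the clairvoyant, non-time-sharing setting of~\cite{legrand2008mss} the algorithm must commit a whole core to one job, so the dilemma is clean; here $A$ may split the core arbitrarily among $J_1$, $J_2$, $J_3$, and I must show this extra freedom cannot beat the bound. The remedy is to phrase both branches purely in terms of the scalar $w(\tau)$ rather than the detailed allocation: since residual work and residual capacity are complementary, any split that helps one branch hurts the other by exactly the complementary amount, so the worse of the two branches is minimized precisely at the balancing point, independently of how the fractions were distributed. Making this complementarity rigorous -- reducing an arbitrary fractional schedule to the single pivot $w(\tau)$ and checking that indistinguishability across branches survives the reduction -- is the crux of the proof.
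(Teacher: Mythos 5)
There is a genuine gap: what you have written is a proof \emph{plan}, and the decisive steps are precisely the ones left open. The instance parameters (release dates, size ratios) are never fixed, the optimization that is supposed to produce the constant $\frac{1}{2}$ and the exponent $\sqrt{2}-1$ is never carried out, and --- most importantly --- the claim that arbitrary fractional time-sharing cannot evade the adversary's dilemma is asserted via a ``complementarity'' slogan rather than proved. You yourself flag this last point as ``the crux,'' which is an accurate self-assessment: without it the two-branch dilemma does not yield a bound. Note also that with three jobs and two release instants, the online algorithm's state when the third job arrives is not captured by a single scalar $w(\tau)$ --- the adversary must reason about the pair of cumulative works done on the first two jobs --- so the proposed reduction ``to the single pivot $w(\tau)$'' is not obviously sound as stated and would need to be reworked.

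For comparison, the paper does not construct a new argument at all: it observes that the lower-bound adversary used in the proof of Theorem 14 of~\cite{legrand2008mss} (established there for clairvoyant scheduling without time-sharing) remains valid word for word when preemptive time-sharing is permitted, and that a lower bound against clairvoyant algorithms applies a fortiori to non-clairvoyant ones. So the ``right'' proof here is a verification that an existing construction survives the added capability of time-sharing --- the authors even remark that they could not exploit non-clairvoyance to strengthen the bound. Your sketch is in the same general spirit as that underlying construction (adversarial releases, a short-job/long-job dilemma, balancing to get $\Delta^{\sqrt{2}-1}$), but as submitted it does not establish the theorem.
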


This result is valid for both clairvoyant and non-clairvoyant scenarios and is
established by the proof of Theorem 14 in~\cite{legrand2008mss}, which holds
when time-sharing is allowed. Surprisingly, we were not able to increase this
bound by taking advantage of non-clairvoyance. However, as seen in the next
theorem, non-clairvoyance makes it possible to establish a very large bound with
respect to the number of jobs.

\begin{theorem}\label{thm:lb-online-maxstretch-n}
  There is no (preemptive) online algorithm for the non-clairvoyant
  minimization of max-stretch whose competitive ratio is strictly smaller than
  $|\jobset|$, the number of jobs.
\end{theorem}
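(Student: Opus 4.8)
The plan is to run an adaptive adversary against an arbitrary (deterministic) online algorithm $A$, turning non-clairvoyance into a pigeonhole argument. Suppose for contradiction that $A$ is $c$-competitive with $c < |\jobset| = n$, and work on a single processor (equivalently, perfectly parallel jobs). I would release $n$ jobs simultaneously at time $0$ and consider the \emph{hypothetical} execution in which no job ever completes, i.e.\ all processing times are taken to be infinite. Because $A$ is non-clairvoyant, the instantaneous CPU share $\rho_j(t)$ it assigns to each job $j$, and hence the accumulated work $w_j(t) = \int_0^t \rho_j$, is fully determined in this hypothetical run, and it coincides with $A$'s behaviour on \emph{any} instance up to the first job completion. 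This is the lever that lets the adversary fix processing times after seeing what $A$ does.

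First I would locate the job that $A$ neglects. Fix a work level $\delta > 0$ and let $\Theta$ be the first time at which every job has accumulated at least $\delta$ units of work; let $j^\star$ be the last job to reach this level, so that the first passage of $w_{j^\star}$ through $\delta$ is exactly $\Theta$. Since the rates sum to at most $1$ we have $\sum_j w_j(\Theta) \le \Theta$, while $w_j(\Theta) \ge \delta$ for all $n$ jobs, whence $\Theta \ge n\delta$. (If some job never reaches $\delta$, then $\Theta = \infty$ and the ratio is already unbounded, so I may assume $\Theta$ finite.)

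Now I would pin down the instance: give $j^\star$ the short processing time $p_{j^\star} = \delta$, and give the remaining $n-1$ jobs geometric processing times $M, M^2, \dots, M^{n-1}$ with $M > \Theta$ large. Because each long job needs at least $M > \Theta$ units of work to finish while at most $\Theta$ units of work are performed before time $\Theta$, no long job can complete before $\Theta$; by non-clairvoyance $A$ therefore follows the hypothetical trajectory up to $\Theta$, and $j^\star$ completes exactly at $\Theta$, giving it stretch $\Theta/\delta \ge n$. For the optimal offline schedule I would process jobs shortest-first: $j^\star$ finishes at $\delta$ with stretch $1$, and the job of length $M^k$ finishes at $\delta + M + \cdots + M^k$, whose stretch is $1 + O\!\left(1/(M-1)\right) \le 1 + \varepsilon$ once $M$ is large. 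Hence the optimal maximum stretch is at most $1+\varepsilon$, the ratio on this instance is at least $n/(1+\varepsilon)$, and choosing $\varepsilon$ so that $n/(1+\varepsilon) > c$ contradicts $c$-competitiveness.

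The step I expect to require the most care is guaranteeing \emph{consistency} of the adversarial instance with $A$'s observable behaviour: I must be sure that the long jobs do not finish before $\Theta$ and perturb the schedule I analysed, and that $j^\star$'s declared length is first attained precisely at $\Theta$ rather than earlier. Both are secured by the two structural facts above — non-clairvoyance freezes $A$'s decisions until the first completion, and the choice $M > \Theta$ (legitimate because $\Theta$ is a fixed finite quantity, independent of the yet-unrevealed processing times) pushes every long completion strictly past $\Theta$. The only remaining soft spot is the degenerate case in which $A$ starves some job so badly that $\Theta = \infty$, which I would dispose of separately by noting that it yields an unbounded stretch outright.
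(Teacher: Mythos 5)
Your proposal is correct and follows essentially the same adversary argument as the paper: run the algorithm on a ``no completions'' prefix, give the most-neglected job the shortest processing time and the others geometrically growing ones, and compare against the shortest-processing-time-first offline schedule, whose maximum stretch tends to $1$. Your first-passage definition of $\Theta$ and the choice $M>\Theta$ make the consistency of the adversarial instance with the algorithm's observed behaviour somewhat more explicit than the paper's phrasing (which simply posits that all jobs are kept running until time $|\jobset|$ and that the least-served job then has stretch at least $|\jobset|$), but the underlying construction is identical.
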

\begin{proof}
  By contradiction, let us hypothesize that there exists an algorithm with a 
  competitive ratio strictly smaller than $|\jobset| -\varepsilon$ for some 
  $\varepsilon >0$.

  We consider an instance where jobs are all released at time 0, with 
  processing times large enough such that all jobs are kept running until time 
  $|\jobset|$ regardless of what the algorithm does. The job that has received 
  the smallest cumulative compute time up to time $|\jobset|$ has received at 
  most 1 unit of compute time (one $|\jobset|$-th of the $|\jobset|$ time 
  units). We sort the jobs in increasing order of the cumulative compute time 
  each has received up to time $|\jobset|$. For some value of $\lambda \geq
  |\jobset|$, we construct our instance so that the $i$-th job in this order has
  processing time $\lambda^{i-1}$. The completion time of job $1$, i.e., the job
  that has received the smallest cumulative compute time up to time $|\jobset|$,
  is at least $|\jobset|$. Consequently, its stretch is no smaller than 
  $|\jobset|$ because its processing time is $\lambda^0=1$.

  A possible schedule would have been to execute jobs in order of increasing 
  processing time. The stretch of the job of processing time $\lambda^{i-1}$ 
  would then be:
  \[
    \frac{\sum_{k=1}^{i} \lambda^{k-1}}{\lambda^{i-1}} =
    \frac{\lambda^i-1}{\lambda^{i-1}(\lambda-1)} 
    \xrightarrow[\lambda \to +\infty]{} 1.
  \]
  Therefore, if $\lambda$ is large enough, no job has a stretch greater 
  than $1+\frac{\varepsilon}{|\jobset|}$ in this schedule. Consequently, the 
  competitive ratio of our hypothetical algorithm is no smaller than:
  \[
    \frac{|\jobset|}{1+\frac{\varepsilon}{|\jobset|}} \geq 
    |\jobset| (1-\frac{\varepsilon}{|\jobset|}) = 
    |\jobset| - \varepsilon\;, 
  \]
  which is a contradiction.
\end{proof}

The \equi algorithm, which gives each job an equal share of the platform, is
known to deliver good performance in some non-clairvoyant
settings~\cite{edmonds1999sid}. We therefore assess its performance for maximum
stretch minimization.

\begin{theorem}
   In a non-clairvoyant scenario:\\
  \noindent 1)
  \equi is exactly a $|\jobset|$-competitive online algorithm for maximum 
        stretch minimization;\\
  \noindent 2) There exists an instance for which \equi achieves a maximum
        stretch at least $\frac{\Delta+1}{2+\ln(\Delta)}$ times the optimal.
\end{theorem}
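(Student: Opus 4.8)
My plan is to handle the two claims separately, since part~1 is a clean pair of matching bounds while part~2 needs a carefully engineered adversarial instance.

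\medskip

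\noindent\textbf{Part 1, tight $|\jobset|$-competitiveness.} For the upper bound I would argue directly from the definition of \equi. At any instant the number of released-but-unfinished jobs is at most $|\jobset|$, so \equi assigns each active job a CPU fraction of at least $1/|\jobset|$. Hence every job $j$ accumulates work at rate at least $1/|\jobset|$ whenever it is active, so it completes within $|\jobset|\,p_j$ time units of its release date, which gives a stretch of at most $|\jobset|$. Since the optimal maximum stretch is always at least $1$ (no job finishes faster than on a dedicated machine), \equi's maximum stretch is within a factor $|\jobset|$ of the optimum on every instance, so \equi is at most $|\jobset|$-competitive. For the matching lower bound I would note that \equi is itself an online non-clairvoyant algorithm, so Theorem~\ref{thm:lb-online-maxstretch-n} already forbids a competitive ratio strictly below $|\jobset|$. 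In fact the geometric instance used there (all jobs released at time $0$ with processing times $\lambda^{i-1}$) is tight for \equi: \equi gives all $|\jobset|$ jobs rate $1/|\jobset|$, so the unit-length job completes exactly at time $|\jobset|$ for a stretch of $|\jobset|$, whereas processing jobs in increasing order of length drives the optimum to $1$ as $\lambda\to\infty$. Combining the two bounds yields the word ``exactly.''

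\medskip

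\noindent\textbf{Part 2, a $\frac{\Delta+1}{2+\ln\Delta}$ gap.} The plan is to exhibit a single instance on which \equi is forced into a maximum stretch of roughly $\Delta+1$, while an explicit schedule certifies that the optimum is at most $2+\ln\Delta$; the ratio of these two quantities then gives the claimed bound. The mechanism I would exploit is the well-known separation between equal sharing and a shortest-remaining-processing-time discipline: under \equi a short ``victim'' job is slowed by sharing the CPU equally with every concurrently active job, whereas a clairvoyant schedule can clear the short work promptly and absorb the delay into a long job whose stretch stays small because its processing time is large. Concretely I would take a family of jobs whose processing times span the ratio $\Delta$ and whose release dates are spaced so that, under equal sharing, the active set persists and compounds the slowdown on the victim up to the full factor $\Delta+1$. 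For the upper bound on the optimum I would analyze the reference schedule (processing the jobs essentially in remaining-length order), where the release dates and lengths are chosen so that the contributions telescope into a harmonic/geometric sum evaluating to at most $2+\ln\Delta$; the $\ln\Delta$ term is precisely what such a sum over geometrically spaced lengths produces.

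\medskip

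\noindent The hard part will be designing the instance so that the two phenomena coexist: \equi's slowdown must be \emph{linear} in $\Delta$ while the optimum remains merely \emph{logarithmic}. The natural first attempts fail because making \equi pile work up also overloads the optimum, so the total work and release schedule must be tuned to keep the reference schedule's makespan-to-length ratios bounded by $2+\ln\Delta$ even as the victim suffers a linear penalty. The second technical obstacle is rigorously tracking \equi's evolving active set over time on the constructed instance, i.e.\ solving the recurrence (or fluid integral) for the completion times under equal sharing and showing the victim's stretch is at least $\Delta+1$. Once both the \equi analysis and the reference-schedule bound are in place, the theorem follows by dividing $\Delta+1$ by $2+\ln\Delta$.
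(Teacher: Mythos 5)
Your Part 1 is correct and is essentially the paper's argument: the upper bound follows because \equi always gives each of the $m(t)\le|\jobset|$ active jobs a share at least $1/|\jobset|$, and the matching lower bound is inherited from Theorem~\ref{thm:lb-online-maxstretch-n} since \equi is itself a non-clairvoyant online algorithm. Nothing more is needed there.

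Part 2, however, is a plan rather than a proof, and the piece you defer --- ``designing the instance so that the two phenomena coexist'' and ``solving the recurrence \ldots for the completion times under equal sharing'' --- is precisely the entire content of the paper's argument. The construction is not something that falls out of the general mechanism you describe; it has to be engineered self-consistently. Concretely, the paper takes $n=|\jobset|$ jobs with $j_1,j_2$ released at time $0$ and $r_{j_i}=r_{j_{i-1}}+p_{j_{i-1}}$ for $i\ge 3$, and \emph{defines} each $p_{j_i}$ implicitly as the cumulative share that job would receive under \equi if all jobs finished simultaneously at time $r_{j_n}+n$. Solving the resulting system gives $p_{j_n}=1$, $p_{j_i}=\frac{n-1}{i-1}$ for $i\ge 3$, and $p_{j_1}=p_{j_2}=n-1$, so $\Delta=n-1$ and the last (unit-length) job suffers stretch exactly $n=\Delta+1$ under \equi. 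The certificate schedule runs $j_2,\dots,j_n$ at full speed inside their release windows $[r_{j_i},r_{j_i}+p_{j_i}]$ and defers $j_1$ to the end, giving every job stretch $1$ except $j_1$, whose stretch is $1+\sum_{i=1}^{n-1}\frac{1}{i}\le 2+\ln(n-1)$; dividing yields $\frac{\Delta+1}{2+\ln\Delta}$. Note also that your target of showing the \emph{optimum} is at most $2+\ln\Delta$ is slightly stronger than what is needed or claimed: exhibiting any feasible schedule with that maximum stretch suffices, since the theorem only asserts a ratio against the optimal. Without the explicit instance and the closed-form solution of the recurrence, your proposal establishes nothing for Part 2.
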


\begin{proof} ~\\
  \noindent \textbf{Competitive ratio as a function of $|\jobset|$ --}\\
  At time $t$, \equi gives each of the $m(t)$ not-yet-completed jobs a share of 
  the node equal to $\frac{1}{m(t)} \geq \frac{1}{|\jobset|}$. Hence, no job has
  a stretch greater than $|\jobset|$ and the competitive ratio of \equi is no 
  greater than $|\jobset|$. We conclude using 
  Theorem~\ref{thm:lb-online-maxstretch-n}.

  \noindent \textbf{Competitive ratio as a function of $\Delta$ --}\\
  Let us consider a set $\jobset$ of $n = |\jobset|$ jobs as follows. Jobs 1 and
  2 are released at time 0 and have the same processing time. For 
  $i = 3, \ldots, n$, job $j_i$ is released at time 
  $r_{j_i} = r_{j_{i-1}} + p_{j_{i-1}}$. Job processing times  are 
  defined so that, using \equi, all jobs complete at time $r_{j_n}+n$. 
  Therefore, the $i$-th job is executed during the time interval 
  $[r_{j_i},r_{j_n}+n]$. There are two active jobs during the time interval 
  $[r_{j_1}=r_{j_2}=0,r_{j_3}]$, each receiving one half of the node's 
  processing time. For any $i \in [3,n-1]$, there are $i$ active jobs in 
  the time interval $[r_{j_i},r_{j_{i+1}}]$, each receiving a fraction $1/i$ of 
  the processing time. Finally, there are $n$ jobs active in the time interval 
  $[r_{j_n},r_{j_n}+n]$, each receiving a fraction $1/n$ of the node's compute 
  time.

  The goal of this construction is to have the $n$-th job experience a stretch 
  of $n$. However, by contrast with the previous theorem, the value of $\Delta$ 
  is ``small,'' leading to a large competitive ratio as a function of $\Delta$, 
  but smaller than $n$. Formally, to define the job processing times, we write 
  that the processing time of a job is equal to the cumulative compute time it 
  is given between its release date and its deadline using \equi:
  \[
    \left\{
    \begin{array}{ll}\displaystyle
      \forall i \in [1,2] &\displaystyle p_{j_i} = \frac{1}{2}(r_{j_3}-r_{j_1}) 
        + \sum_{k=3}^{n-1} \frac{1}{k}(r_{j_{k+1}}-r_{j_k}) 
        + \frac{1}{n}((r_{j_n}+n)-r_{j_n}) = 
        \frac{1}{2}p_{j_1} + \sum_{k=3}^{n-1} \frac{1}{k}p_{j_k} + 1\\
      \displaystyle
      \forall i \in [3,n] &\displaystyle p_{j_i} = 
        \sum_{k=i}^{n-1} \frac{1}{k}(r_{j_{k+1}}-r_{j_k}) 
        + \frac{1}{n}((r_n+n)-r_n) = \sum_{k=i}^{n-1} \frac{1}{k}p_{j_k} + 1
    \end{array}
    \right.
  \]
  We first infer from the above system of equations that $p_{j_n}=1$ (and the 
  $n$-th job has a stretch of $n$). Then, considering the equation for $p_{j_i}$
  for $i \in [3,n-1]$, we note that $p_{j_i} - p_{j_{i+1}} = 
  \frac{1}{i} p_{j_i}$. Therefore, $p_{j_i} = \frac{i}{i-1}p_{j_{i+1}}$ and, by 
  induction, $p_{j_i} = \frac{n-1}{i-1}$.  We also have $p_{j_2} - p_{j_3} = 
  \frac{1}{2}p_{j_1}= \frac{1}{2}p_{j_2}$. Therefore, $p_{j_2} = 2p_{j_3} = 
  n-1$.

  Now let us consider the schedule that, for $i \in [2,n]$, executes job $j_i$ 
  in the time interval $[r_{j_i},r_{j_{i+1}}=r_{j_i}+p_{j_i}]$, and that 
  executes the first job during the time interval 
  $[r_{j_n}+p_{j_n}=r_{j_n}+1,r_{j_n}+n]$. With this schedule all 
  jobs have a stretch of 1 except for the first job. The maximum stretch for 
  this schedule is thus the stretch of the first job. The makespan of this job, 
  i.e., the time between its release data and its completion, is:
  \[
    \sum_{i=1}^n p_{j_i} = 2\times p_{j_1} + \sum_{i=3}^n p_{j_i} = 2(n-1) + 
    \sum_{i=3}^n \frac{n-1}{i-1} 
    = (n-1)\left( 1+\sum_{i=2}^n \frac{1}{i-1}\right)
    = (n-1)\left( 1+\sum_{i=1}^{n-1} \frac{1}{i}\right).
  \]
  The first job being of size $n-1$, its stretch is thus: 
  $1+\sum_{i=1}^{n-1} \frac{1}{i} = 2+\sum_{i=2}^{n-1} \frac{1}{i}$. Using a 
  classical bounding technique:
  \[
    \sum_{i=2}^{n-1} \frac{1}{i} \leq 
      \sum_{i=2}^{n-1} \int_{i-1}^{i}\frac{1}{x}dx = 
      \int_{1}^{n-1}\frac{1}{x}dx = \ln(n-1).
  \]

  The competitive ratio of \equi on that instance is no smaller than the ratio 
  of the maximum stretch it achieves ($n$) and of the maximum stretch of 
  any other schedule on that instance. Therefore, the competitive ratio of \equi
  is no smaller than:
  \[
    \frac{n}{2+\sum_{i=2}^{n-1} \frac{1}{i}} \geq 
      \frac{n}{2+\ln(n-1)} = \frac{\Delta+1}{2+\ln(\Delta)}\;,
  \]
  as the smallest job ---the $n$-th one--- is of size 1, and the largest ones 
  ---the first two jobs--- are of size $n-1$.
\end{proof}

\noindent To put the performance of \equi into perspective, First Come First
Served (FCFS) is exactly $\Delta$-competitive~\cite{legrand2008mss}.

\section{DFRS Algorithms}
\label{sec.alg}

The theoretical results from the previous section indicate that non-clairvoyant
maximum stretch optimization is ``hopeless'': no algorithm can be designed with
a low worst-case competitive ratio because of the large number of jobs and/or
the large ratio between the largest and smallest jobs found in HPC workloads.
Instead, we focus on developing non-guaranteed algorithms (i.e., heuristics)
that perform well in practice, hopefully close to the offline bound given in
Section~\ref{sec.offline}.  These algorithms should not use more than a fraction
of the bandwidth available on a reasonable high-end system (for task
preemption/migration). Additionally, because of the online nature of the
problem, schedules should be computed quickly.

We propose to adapt the algorithms we designed in our study of the offline
resource allocation problem for static workloads~\cite{stillwell2010rav,
stillwell2009rav}. Due to memory constraints, it may not always be possible to
schedule all currently pending jobs simultaneously. In the offline scenario,
this leads to a failure condition. In the online scenario, however, some jobs
should be run while others are suspended or postponed. It is therefore
necessary to establish a measure of priority among jobs. In this section we
define and justify our job priority function (Section~\ref{sec.alg.priority}),
describe the greedy (Section~\ref{sec.alg.greedy}) and vector packing based
(Section~\ref{sec.alg.mcb}) task placement heuristics, explain how these
heuristics can be combined to create heuristics for the online problem
(Sections~\ref{sec.alg.when} and ~\ref{sec.alg.names}), give our basic
strategies for resource allocation once tasks are mapped to nodes
(Section~\ref{sec.alg.improve}), and finally provide an alternate algorithm
that attempts to optimize stretch directly instead of relying on the yield
(Section~\ref{sec.alg.stretch}).

\subsection{Prioritizing Jobs}
\label{sec.alg.priority}

We define a priority based on the \emph{virtual time} of a job, that is, the
total subjective execution time experienced by a job. Formally, this is the
integral of the job's yield since its release date. For example, a job that
starts and runs for 10 seconds with a yield of 1.0, that is then paused for 2
minutes, and then restarts and runs for 30 seconds with a yield 0.5 has
experienced 25 total seconds of virtual time ($10 \times 1.0 + 120 \times 0.0 +
30 \times0.5$). An intuitive choice for the priority function is the inverse of
the virtual time: the shorter the virtual time, the higher the priority. A job
that has not yet been allocated any CPU time has a zero virtual time, i.e., an
infinite priority. This ensures that no job is left waiting at its release date,
especially short jobs whose stretch would degrade the overall performance. This
rationale is similar to that found in~\cite{bansal2004nsm}.

This approach, however, has a prohibitive drawback: The priority of paused jobs
remains constant, which can induce starvation. Thus, the priority function
should also consider the \emph{flow time} of a job, i.e., the time elapsed since
its submission. This would prevent starvation by ensuring that the priority of
any paused job increases with time and tends to infinity.

Preliminary experimental results showed that using the inverse of the virtual
time as a priority function leads to good performance, while using the ratio of
flow time to virtual time leads to poor performance. We believe that this poor
performance is due to the priorities of jobs all converging to some constant
value related to the average load on the system. As a result, short-running
jobs, which suffer a greater penalty to their stretch when paused, are not
sufficiently prioritized over longer-running jobs. Consequently, we define the
priority function as: $\text{priority} = \frac {\text{flow time}}{(\text{virtual
time})^2}$. The power of two is used to increase the importance of the virtual
time with respect to the flow time, thereby giving an advantage to short-running
jobs. We break ties between equal-priority jobs by considering their order of
submission.

\subsection{Greedy Task Mapping}
\label{sec.alg.greedy}

A basic greedy algorithm, which we simply call \greedy, allocates nodes to an
incoming job $j$ without changing the mapping of tasks that may currently be
running. It first identifies the nodes that have sufficient available memory to
run at least one task of job $j$. For each of these nodes it computes its
\emph{CPU load} as the sum of the CPU needs of all the tasks currently allocated
to it. It then assigns one task of job $j$ to the node with the lowest CPU load,
thereby picking the node that can lead to the largest yield for the task. This
process is repeated until all tasks of job $j$ have been allocated to nodes, if
possible.

A clear weakness of this algorithm is its admission policy. If a short-running
job is submitted to the cluster but cannot be executed immediately due to memory
constraints, it is postponed. Since we assume no knowledge of job processing
times, there is no way to correlate how long a job is postponed with its
processing time.  A job could thus be postponed for an arbitrarily long period
of time, leading to unbounded maximum stretch. The only way to circumvent this
problem is to force the admission of all newly submitted jobs. This can be
accomplished by pausing (via preemption) and/or moving (via migration) tasks of
currently running jobs. 

We define two variants of \greedy that make use of the priority function as 
defined previously. \greedyp operates like \greedy except that it can pause some 
running jobs in favor of newly submitted jobs. To do so, \greedyp goes through 
the list of running jobs in order of increasing priority and marks them as 
candidates for pausing until the incoming job could be started if all these 
candidates were indeed paused. It then goes through the list of marked jobs in 
decreasing order of priority and determines for each whether it could instead be
left running due to sufficient available memory. Then, running jobs that are 
still marked as candidates for pausing are paused, and the new job is started. 
\greedypm extends \greedyp with the capability of moving rather than pausing
running jobs. This is done by trying to reschedule jobs selected for pausing in
order of their priority using \greedy.

\subsection{Task Mapping as Vector Packing} 
\label{sec.alg.mcb}

An alternative to the \greedy approach is to compute a global solution from
scratch and then preempt and/or migrate tasks as necessary to implement that
solution. As we have two resource dimensions (CPU and memory), our resource
allocation problem is related to a version of bin packing, known as
two-dimensional \emph{vector packing}. One important difference between our
problem and vector packing is that our jobs have fluid CPU needs. This
difference can be addressed as follows: Consider a fixed value of the yield,
$Y$, that must be achieved for all jobs. Fixing $Y$ amounts to transforming all
CPU needs into \emph{CPU requirements}: simply multiply each CPU need by $Y$.
The problem then becomes exactly vector packing and we can apply a preexisting
heuristic to solve it. We use a binary search on $Y$ to find the highest yield
for which the vector packing problem can be solved (our binary search has an
accuracy threshold of 0.01).

In previous work~\cite{stillwell2009rav} we developed an algorithm based on this
principle called \mcb. It makes use of a two-dimensional vector packing
heuristic based on that described by Leinberger et al. in
\cite{leinberger1999mcb}. The algorithm first splits the jobs into two lists,
one containing all the jobs with higher CPU requirements than memory
requirements. Each list is then sorted by non-increasing order of the largest
requirement. The algorithm described in \cite{leinberger1999mcb} uses the sum of
the two resource dimensions to sort the jobs. We found experimentally that, for
our problem and the two dimensional case, using the maximum performs marginally
better~\cite{stillwell2009rav}.

Initially the algorithm assigns the first task of the first job in one of the
lists (picked arbitrarily) to the first node. Subsequently, it searches, in the
list that goes against the current imbalance, for the first job with an
unallocated task that can fit on the node. For instance, if the node's available
memory exceeds its available CPU resource (in percentage), the algorithm searches
for a job in the list of memory-intensive jobs. The rationale is, on each node, to keep
the total requirements of both resources in balance. If no task from any job in
the preferred list fits on the node, then the algorithm considers jobs from the
other list. When no task of any job in either list can fit on the node, the
algorithm repeats this process for the next node. 

If all tasks of all jobs can be assigned to nodes in this manner then resource
allocation is successful. In the event that the \mcb algorithm cannot
find a valid allocation for all of the jobs currently in the system at any yield
value, it removes the lowest priority job from consideration and tries again.

\subsubsection*{Limiting Migration}

The preemption or migration of newly-submitted short-running jobs can lead to
poor performance. To mitigate this behavior, we introduce two parameters. If
set, the \mft parameter (respectively, the \mvt parameter), stipulates that jobs
whose flow-times (resp., virtual times) are smaller than a given bound may be
paused in order to run higher priority jobs, but, if they continue running,
their current node mapping must be maintained. Jobs whose flow-times (resp.,
virtual times) are greater than the specified bound may be moved as previously.
Migrations initiated by \greedypm are not affected by these parameters.

\subsection{When to Compute New Task Mappings}
\label{sec.alg.when}

So far, we have not stated \emph{when} our task mapping algorithms should be
invoked. The most obvious choice is to apply them each time a new job is
submitted to the system and each time some resources are freed due to a job
completion. The \mcb algorithm attempts a global optimization and, thus, can
(theoretically) ``reshuffle'' the whole mapping each time it is
invoked\footnote{In practice this does not happen because the algorithm is
deterministic and always considers the tasks and the nodes in the same order.}.
One may thus fear that applying \mcb at each job submission could lead to a
prohibitive number of preemptions and migrations. On the contrary, \greedy has
low overhead and the addition of a new job should not be overly disruptive to
currently running jobs. The counterpart is that \greedy may generate allocations
that use cluster resources inefficiently. For both of these reasons we
consider variations of the algorithms that: upon job submission, either do nothing
or apply \greedy, \greedyp, \greedypm, or \mcb; upon job completion, either do
nothing or apply \greedy or \mcb; and either apply or do not apply \mcb
periodically.

\subsection{Algorithm Naming Scheme}
\label{sec.alg.names}

\begin{table}
\caption{DFRS Scheduling Algorithms}
\label{tab.algorithms}
\centering
\begin{tabular}{llll}
\hline
Name  & \multicolumn{3}{c}{Action}\\
          & on submission & on completion & periodic\\
\hline
\greedy\activeres             & \greedy   & \greedy & none\\
\greedyp\activeres            & \greedyp  & \greedy & none\\
\greedypm\activeres           & \greedypm & \greedy & none\\
\greedy/\periodic             & \greedy   & none    & \mcb\\
\greedyp/\periodic            & \greedyp  & none    & \mcb\\
\greedypm/\periodic           & \greedypm & none    & \mcb\\
\greedy\activeres/\periodic   & \greedy   & \greedy & \mcb\\
\greedyp\activeres/\periodic  & \greedyp  & \greedy & \mcb\\
\greedypm\activeres/\periodic & \greedypm & \greedy & \mcb\\
\mcb\activeres                & \mcb      & \mcb    & none\\
\mcb/\periodic                & \mcb      & none    & \mcb\\
\mcb\activeres/\periodic      & \mcb      & \mcb    & \mcb\\
/\periodic                    & none      & none    & \mcb\\
/\mcbsp                       & none      & none    & \mcbs\\
\end{tabular}
\end{table}

We use a multi-part scheme for naming our algorithms, using `/' to separate the
parts. The first part corresponds to the policy used for scheduling jobs upon
submission, followed by a `\activeres' if jobs are also scheduled
opportunistically upon job completion (using \mcb if \mcb was used on
submission, and \greedy if \greedy, \greedyp, or \greedypm was used). If the
algorithm applies \mcb periodically, the second part contains ``\periodic''.
For example, the \greedyp\activeres/\periodic algorithm performs a \greedy
allocation with preemption upon job submission, opportunistically tries to
start currently paused jobs using \greedy whenever a job completes, and
periodically applies \mcb. Overall, we consider the 13 combinations shown in
Table~\ref{tab.algorithms} (the 14th row of the table is explained in
Section~\ref{sec.alg.stretch}). Parameters, such as \mvt or \mft, are appended
to the name of the algorithm if set (e.g., \mcb\activeres/\periodic/\mft[300]).

\subsection{Resource Allocation}
\label{sec.alg.improve}

Once tasks have been mapped to nodes one has to decide on a CPU allocation for
each job (all tasks in a job are given identical CPU allocations). All
previously described algorithms use the following procedure: First all jobs are
assigned yield values of $1/\max(1,\Lambda)$, where $\Lambda$ is the maximum CPU
load over all nodes. This maximizes the minimum yield given the mapping of tasks
to nodes. After this step there may be remaining CPU resources on some of the
nodes, which can be used for further improvement (without changing the mapping).
We use two different approaches to exploit remaining resource fractions.

\subsubsection*{Average Yield Optimization}

Once the task mapping is fixed and the minimum yield maximized, we can write a
rational linear program to find a resource allocation that maximizes the average
yield under the constraint that no job is given a yield lower than the maximized
minimum: 

\begin{LinearProgram}[Maximize]{\displaystyle\sum_{j \in \jobset} y_j}
  &\EquationsNumbered{\mark
    \label{lp-avg-minyield}  \forall j \in \jobset \quad
    \frac{1}{\max(1,\Lambda)} \leq y_j \leq 1;\n
    \label{lp-avg-power} \forall i \in \procset,  \quad 
    \sum_{j \in \jobset} \sum_{k \in \njtasks} y_j e_k^i c_j \leq 1.%
 }
  \label{lp-avg-yield}
\end{LinearProgram}

We reuse the notation of Section~\ref{sec.theory}. We use $e_k^i \in \{0,1\}$ to
indicate whether task $k$ of job $j$ is mapped on node $i$ (for any $k$, only
one of the $e_k^i$ is non-zero).  Since the mapping is fixed, the $e_k^i$'s are
constants. Finally, $y_j$ denotes the yield of job $j$. Linear
program~\eqref{lp-avg-yield} states that the yield of any job is no smaller than
the optimal minimum yield (Constraint~\eqref{lp-avg-minyield}), and that the
computational power of a node cannot be exceeded
(Constraint~\eqref{lp-avg-power}). Algorithms that use this optimization have
``\optavg'' as an additional part of their names.

\subsubsection*{Max-min Yield Optimization}

As an alternative to maximizing the average yield, we consider the iterative
maximization of the minimum yield. At each step the minimum yield is maximized
using the procedure described at the beginning of Section~\ref{sec.alg.improve}.
Those jobs whose yield cannot be further improved are removed from
consideration, and the minimum is further improved for the remaining jobs. This
process continues until no more jobs can be improved. This type of max-min
optimization is commonly used to allocate bandwidth to competing network
flows~\cite[Chapter 6]{bertsekas1992dn}. Algorithms that use this optimization
have ``\optmin'' as an additional part of their names.

\subsection{Optimizing the Stretch Directly}
\label{sec.alg.stretch}

All algorithms described thus far optimize the minimum yield as a way to
optimize the maximum stretch. We also consider a variant of /\periodic, called
/\mcbsp, that periodically tries to minimize the maximum stretch directly,
still assuming no knowledge of job processing times. The algorithm it uses,
called \mcbs, can only be applied periodically as it is based on knowledge of
the time between scheduling events. It follows the same general procedure as
\mcb but with the following differences. At scheduling event $i$, the best
estimate of the stretch of job $j$ is the ratio of its flow time (time since
submission), $ft_j(i)$, to its virtual time, $vt_j(i)$: $\hat{S}_j(i) = ft_j(i)
/ vt_j(i)$. Assuming that the job does not complete before scheduling event
$i+1$, then $\hat{S}_j(i+1) = ft_j(i+1) / vt_j(i+1) = (ft_j(i) + T) / (vt_j(i)
+ y_j(i)\times T)$, where $T$ is the scheduling period and $y_j(i)$ is the
yield that /\mcbsp assigns to job $j$ between scheduling events $i$ and $i+1$.
Similar to the binary search on the yield, here we do a binary search to
minimize $\hat{S}(i+1)=\max_j \hat{S}_j(i+1)$. At each iteration of the search,
a target value $\hat{S}(i+1)$ is tested. From $\hat{S}(i+1)$ the algorithm
computes the yield for each job $j$ by solving the above equation for $y_j(i)$
(if, for any job, $y_j(i) > 1$, then $\hat{S}(i+1)$ is infeasible and the
iteration fails). At that point, CPU requirements are defined and \mcb can be
applied to try to produce a resource allocation for the target value. This is
repeated until the lowest feasible value is found. Since the stretch is an
unbounded positive value, the algorithm actually performs a binary search over
the inverse of the stretch, which is between $0$ and $1$. If \mcbs cannot find
a valid allocation for any value, then the job with lowest priority is removed
from consideration and the search is re-initiated. Once a mapping of jobs to
nodes has been computed each task is initially assigned a CPU allocation equal
to the amount of resources it needs to reach the desired stretch. For the
resource allocation improvement phase we use algorithms similar to those
described in Section~\ref{sec.alg.improve}, except that the first (\optavg)
minimizes the average stretch and the second (\optmax) iteratively
minimizes the maximum stretch.

\section{Simulation Methodology}
\label{sec.experiments}

\subsection{Discrete-Event Simulator}

We have developed a discrete event simulator that implements our scheduling
algorithms and takes as input a number of nodes and a list of jobs. Each job is
described by a submit time, a number of tasks, one CPU need and one
memory requirement specification (since all tasks within a job have the same
needs and requirements), and a processing time. Jobs are allocated shares of
memory and CPU resources on simulated compute nodes. As stated previously in
Section~\ref{sec.system}, the use of VM technology allows the CPU resources of
a (likely multi-core) node to be shared precisely and fluidly as a single
resource~\cite{gupta2007ctc}. Thus, for each simulated node, the total amount
of allocated CPU resource is simply constrained to be less than or equal to
$100$\%. However, when simulating a multi-core node, $100$\% CPU resource
utilization can be reached by a single task only if that task is CPU-bound and
is implemented using multiple threads (or processes). A CPU-bound sequential
task can use at most $100/n$\% of the node's CPU resource, where $n$ is the
number of processor cores.

The question of properly accounting for preemption and migration overheads is a
complicated one. For this reason in each simulation experiment we assume this
overhead to be 5 minutes of wall clock time, whatever the job's characteristics
and the number of its tasks being migrated, which is justifiably
high~\footnote{Consider a 128-task job with 1 TB total memory, or 8 GB per task
(our simulations are for a 128-node cluster). Current technology affords
aggregate bandwidth to storage area networks up to tens of GB/sec for reading
and writing~\cite{cochrane2009shi}. Conservatively assuming 10 GB/sec, moving
this job between node memory and secondary storage can be done in under two
minutes.}. We call this overhead the \emph{rescheduling penalty}. In the real
world there are facilities that can allow for the live migration of a running
task between nodes~\cite{clark2005lmv}, but to avoid introducing additional
complexity we make the pessimistic assumption that all migrations are carried
out through a pause/resume mechanism. 

Note that none of the scheduling algorithms are aware of this penalty. Based on
preliminary results, we opt for a default period equal to twice the
rescheduling penalty for all periodic algorithm, i.e., 10 minutes.  We study
the impact of the duration of the period in Section~\ref{sec.period}. The \mft
and \mvt parameters for \mcb and \mcbs are evaluated using time bounds equal to
one and two penalties (i.e., 300s and 600s).

\subsection{Batch Scheduling Algorithms}

We consider two batch scheduling algorithms: FCFS and EASY. FCFS, often used as
a baseline comparator, holds incoming jobs in a queue and assigns them to nodes
in order as nodes become available. EASY~\cite{lifka1995ais}, which is
representative of production batch schedulers, is similar to FCFS but enables
backfilling to reduce resource fragmentation. EASY gives the first job in the
queue a reservation for the earliest possible time it would be able to run with
FCFS, but other jobs in the queue are scheduled opportunistically as nodes
become available, as long as they do not interfere with the reservation for the
first job. EASY thus improves on FCFS by allowing small jobs to run while large
jobs are waiting for a sufficiently large number of nodes. A drawback of EASY
is that it requires estimations of job processing times. In all simulations we
conservatively assume that EASY has perfect knowledge of job processing times.
While this seems a best-case scenario for EASY, studies have shown that for
some workloads some batch scheduling algorithms can produce better schedules
when using non-perfectly accurate processing times (see the discussion
in~\cite{lee2004aur} for more details). In those studies the potential
advantage of using inaccurate estimates is shown to be relatively small, while
our results show that our approach outperforms EASY by orders of magnitude. Our
conclusions thus still hold when EASY uses non-accurate processing time
estimates.

\subsection{Workloads}

\subsubsection{Real-World Workload}

We perform experiments using a real-world workload from a well-established
online repository~\cite{feitelson-pwa}. Most publicly available logs provide
standard information such as job arrival times, start time, completion time,
requested duration, size in number of nodes, etc. For our purpose, we need to
quantify the fraction of the resource allocated to jobs that are effectively
used. We selected the ``cleaned'' version of the HPC2N workload from
\cite{feitelson-pwa}, which is a 182-week trace from a 120-node dual-core
cluster running Linux that has been scrubbed to remove workload flurries and
other anomalous data that could skew the performance comparisons on different
scheduling algorithms~\cite{feitelson2006wsp}. A primary reason for choosing
this workload was that it contains almost complete information regarding memory
requirements, while other publicly available workloads contain no or little such
information.

The HPC2N workload required some amount of processing for use in our simulation
experiments. First, job per-processor memory requirements were set as the
maximum of either requested or used memory as a fraction of the system memory of
2GB, to a minimum of 10\%. Of the 202,876 jobs in the trace, only 2,142 ($\sim
1$\%) did not give values for either used or requested memory and so were
assigned the lower bound of 10\%. Second, the \textsf{swf} file
format~\cite{feitelson-pwa} contains information about the required number of
``processors,'' but not the number of tasks, and so this value had to be
inferred. For jobs that required an even number of processors and had a
per-processor memory requirement less than 50\% of the available node memory, we
assumed that the job used a number of multi-threaded tasks equal to half the
number of processors. In this case, we assume that each task has a CPU need of
100\% (saturating the two cores of a dual-core node) and the memory requirement
was doubled from its initial per-processor value. For jobs requiring an odd
number of processors or more than 50\% of the available node memory per
processor, we assumed that the number of tasks was equal to the number of
processors and that each of these tasks had a CPU need of 50\% (saturating one
core of a dual-core mode). Since we assume CPU-bound tasks, performance
degradation due to CPU resource sharing is maximal. Consequently, our
assumptions are detrimental to our approach and should benefit batch scheduling
algorithms. We split the HPC2N workload into week-long segments, resulting in
182 different experimental scenarios.

\subsubsection{Synthetic Workloads}

We also use synthetic workloads based on the model by Lublin and
Feitelson~\cite{lublin2003wps}, augmented with additional information as
described hereafter. There are several reasons for preferring synthetic
workloads to real workloads for this type of relative performance evaluation:
Real workloads often do not contain all of the information that we require.
Further, real traces may be misleading, or lacking in critical information,
such as system down times that might affect jobs running on the
system~\cite{lublin2003wps}. Also, a real workload trace only provides a single
data point, and may not be generally representative~\cite{feitelson1998mbp}.
That is, the idiosyncrasies of a trace from one site may make it inappropriate
for evaluating or predicting performance at another site~\cite{lo1998csr}.
Further, a real workload trace is the product of an existing system that uses a
particular scheduling policy, and so may be biased or affected by that policy,
while synthetic traces can provide a more neutral environment~\cite{lo1998csr}.
Finally, real workloads may contain spurious or anomalous events like user
flurries that can confound evaluations of the performance of scheduling
algorithms~\cite{frachtenberg2005ppj, feitelson2006wsp}. In fact, long workload
traces often contain such events, and including them can seriously impact
relative performance evaluation~\cite{tsafrir2006ipj}.

For the synthetic workloads we arbitrarily assume quad-core nodes, meaning that
a sequential task would use at most $25$\% of a node's CPU resource. Due to the
lack of real-world data and models regarding the CPU needs of HPC jobs, we make
pessimistic assumptions similar to those described in the previous section. We
assume that the task in a one-task job is sequential, but that all other tasks
are multi-threaded. We assume that all tasks are CPU-bound: CPU needs of
sequential tasks are $25$\% and those of other tasks are $100$\%. 

The general consensus is that there is ample memory available for allocating
multiple HPC job tasks on the same node~\cite{batat2000gsm,
chiang2001cls,li2004wcm}, but no explicit model is available in the literature.
We opt for a simple model suggested by data in Setia et
al.~\cite{setia1999ijm}: 55\% of the jobs have tasks with a memory requirement
of $10$\%. The remaining 45\% of the jobs have tasks with memory requirements
$10\times x$\%, where $x$ is an integer value uniformly distributed over
\{2,\ldots,10\}.

We generated 100 distinct traces of 1,000 jobs assuming a 128-node cluster.  The
time between the submission of the first job and the submission of the last job
is on the order of 4-6 days. To study the effect of the load on our algorithms,
we multiplied the job inter-arrival times in each generated trace by 9 computed
constants, obtaining 9 new traces with identical job mixes but offered
load~\cite{batat2000gsm}, or \emph{load}, levels of $0.1$ to $0.9$ in increments
of $.1$. We thus have 900 \emph{scaled} traces.

\section{Experimental Results}
\label{sec.results}

\subsection{Stretch Results}
\label{sec.stretchresults}

\begin{table*}[Htb]
\caption{%
    Degradation from bound results for all three trace sets, with 
    a 5-minute rescheduling penalty
    \label{tab.deg-from-bound-300-delay}}
\centering
\resizebox{\textwidth}{!}{
\begin{tabular}{|l||r|r|r||r|r|r||r|r|r|} 
\hline
            & \multicolumn{3}{c||}{Real-world trace}
            & \multicolumn{3}{c||}{Unscaled synthetic traces} 
            & \multicolumn{3}{c|}{Scaled synthetic traces}\\
\cline{2-10}
Algorithms  & \multicolumn{3}{c||}{Degradation from bound} 
            & \multicolumn{3}{c||}{Degradation from bound}
            & \multicolumn{3}{c|}{Degradation from bound}\\

            & avg. & std. & max & avg. & std. & max & avg. & std. & max\\
\hline
\hline
FCFS&3,578.5&3,727.8&21,718.4&5,457.2&2,958.5&15,102.7&5,869.3&2,789.1&17,403.3\\
EASY&3,041.9&3,438.0&21,317.4&4,955.4&2,730.6&14,036.8&5,262.0&2,588.9&14,534.1\\
\hline
\hline
\greedy\activeres/\optmin&949.8&1,828.5&11,778.4&2,435.0&2,285.6&11,229.9&3,204.3&2,517.5&19,129.2\\
\greedyp\activeres/\optmin&13.5&68.0&819.2&37.5&156.0&1,204.9&115.7&644.0&10,354.2\\
\greedypm\activeres/\optmin&13.8&68.2&819.2&33.8&154.0&1,321.7&124.0&673.5&9,598.8\\
\hline
\greedy/\periodic/\optmin&28.3&24.9&163.7&30.1&10.2&58.1&29.3&14.3&153.2\\
\greedyp/\periodic/\optmin&18.5&18.6&152.4&20.1&7.3&38.1&17.8&9.6&84.6\\
\greedypm/\periodic/\optmin&18.4&18.8&158.7&20.2&7.3&38.1&17.9&9.8&93.0\\
\hline
\greedy\activeres/\periodic/\optmin&24.3&15.9&81.6&30.4&9.7&65.7&29.1&12.3&101.4\\
\greedyp\activeres/\periodic/\optmin&17.9&19.6&213.5&20.3&6.8&32.0&17.9&8.6&89.9\\
\greedypm\activeres/\periodic/\optmin&17.9&19.3&198.6&20.3&6.9&32.0&17.9&8.6&89.9\\
\hline
\greedyp/\periodic/\optmin/\mvt[600]&8.9&18.9&152.4&5.9&4.5&38.1&7.3&8.5&96.8\\
\greedypm/\periodic/\optmin/\mvt[600]&8.8&18.9&158.7&5.9&4.5&38.1&7.3&8.1&96.8\\
\greedyp\activeres/\periodic/\optmin/\mvt[600]&6.9&14.2&149.3&4.9&2.9&19.2&6.1&6.3&103.5\\
\greedypm\activeres/\periodic/\optmin/\mvt[600]&6.9&14.4&149.6&4.8&2.4&13.6&6.1&5.4&90.2\\
\hline
\hline
\mcb\activeres/\optmin/\mvt[600]&12.0&32.8&370.0&6.9&5.4&44.4&13.2&21.6&270.9\\
\mcb/\periodic/\optmin/\mvt[600]&10.8&25.3&287.6&8.1&6.6&53.3&11.0&12.6&127.5\\
\mcb\activeres/\periodic/\optmin/\mvt[600]&13.6&30.2&318.9&7.8&3.9&21.9&12.2&15.3&195.7\\
\hline
\hline
/\periodic/\optmin/\mvt[600]&105.0&445.6&5,011.9&43.0&19.7&134.7&40.4&25.0&238.3\\
\hline
\hline
/\mcbsp/\optmax/\mvt[600]&105.0&445.6&5,011.9&43.0&19.6&134.7&40.2&24.8&236.9\\
\hline
\end{tabular}
}
\end{table*}

For a given problem instance, and for each algorithm, we define the
\emph{degradation from bound} as the ratio between the maximum stretch achieved
by the algorithm on the instance and the theoretical lower bound on the optimal
maximum stretch obtained in Section~\ref{sec.offline}. A lower value of this
ratio thus denotes better performance.  We report average and standard deviation
values computed over sets of problem instances, i.e., for each of our set of
traces. We also report maximum values, i.e., result for the ``worst trace'' for
each algorithm.

Results are shown in Table~\ref{tab.deg-from-bound-300-delay} for the FCFS and
EASY batch scheduling algorithms and for 18 of our proposed algorithms. Recall
that Table~\ref{tab.algorithms} lists 14 general combinations of mechanisms for
mapping tasks to processors. All these combinations can use either \optavg or
\optmin to compute resource allocations once a mapping has been determined.
Furthermore, the last 11 combinations in Table~\ref{tab.algorithms} use the \mcb
algorithm and thus can use \mft[300], \mft[600], \mvt[300], \mvt[600], or no
mechanism to limit task remapping. Therefore, the total number of potential
algorithms is $3 \times 2 + 11 \times 2 \times 5 = 116$. However, the full
results (see Appendix~\ref{sec.extratabs}) show that on average \optmin is never
worse and often slightly better than \optavg. Consequently, we present results
only for algorithms that use \optmin. Furthermore, we found that among the
mechanisms for limiting task remapping, \mvt is always better than \mft, and
slightly better with the larger 600s bound. Accordingly we also exclude
algorithms that use \mft, and algorithms that use \mvt with a 300s bound.
Table~\ref{tab.deg-from-bound-300-delay} presents results for all 9 greedy
combinations with no mechanism for limiting remapping, and results for 4
selected greedy combinations with \mvt[600], as explained hereafter. It also
presents results for the 3 \mcb combinations, for the /\periodic algorithm,  and
for the /\mcbsp algorithm. All these are with \mvt[600]. We leave results
without \mvt[600] for the 3 \mcb combinations out of the table because these
algorithms perform very poorly.  Indeed, as they apply \mcb upon each job
arrival, they lead to inordinate amounts of remapping and thus are more than one
order of magnitude further away from the bound than when \mvt[600] is used.  For
/\periodic and /\mcbsp, the addition of \mvt[600] does not matter since the
scheduling period is no shorter than 600s. We are left with the 18 algorithms in
the table, which we discuss hereafter.

The results in Table~\ref{tab.deg-from-bound-300-delay} are mostly consistent
across our three sets of traces, with a few exceptions.  Expectedly, EASY
outperforms FCFS. The key observation is that EASY and FCFS are outperformed by
our proposed algorithms by several orders of magnitude, thereby showing that
DFRS is an attractive alternative to batch scheduling. 

The table shows results for 4 groups of greedy algorithms. In the first group
are algorithms that do not apply \mcb periodically (i.e., those without
``\periodic" in their names). On average, these algorithms lead to results
poorer than the 6 algorithms in the next 2 groups, which do apply \mcb
periodically, on the synthetic traces. For real-world traces, we see that
\greedyp\activeres and \greedypm\activeres lead to the best average case results
for algorithms from these 3 groups. For all traces, however, the purely greedy
algorithms lead to standard deviation and maximum values that are orders of
magnitude larger than those obtained by the greedy algorithms that also make use
of \mcb periodically. We conclude that applying \mcb periodically is beneficial
for greedy algorithms. The results also show that \greedyp is better than
\greedy, demonstrating the benefits of preemption. However, the use of migration
by \greedypm does not lead to significant further improvement and can even lead
to a slight decrease in performance. It turns out that the jobs migrated in the
\greedypm approach are often low priority and thus have a high probability of
being preempted at an upcoming scheduling event anyway. Finally, by comparing
the results for the 2nd and 3rd groups of greedy algorithms, we see that
scheduling jobs opportunistically (i.e., as done by algorithms with \activeres
in their names) also seems to have limited, but generally positive, impact on
performance when combined with applying \mcb periodically.

The 4 algorithms from the last group of greedy algorithms use either \greedyp or
\greedypm, with or without opportunistic scheduling, but with the \mvt[600]
feature to limit task remapping by \mcb. We see that these algorithms outperform
all previously discussed algorithms on all 3 sets of traces. For these
algorithms the use of opportunistic scheduling is always beneficial. In this
case, \greedyp and \greedypm lead to similar results. These algorithms are the
best overall, including in terms of standard deviation.

The 3 algorithms in the next group all use \mcb to assign tasks to processors
upon job submission (and possibly completion) rather than a greedy approach.
They all use \mvt[600] because without limiting task remapping they all lead to
poorer performance by orders of magnitude due to job thrashing. Overall, while
these algorithms perform very well, they are not among the best.

The next algorithm, /\periodic/\optmin/\mvt[600], simply applies \mcb
periodically without taking action upon job arrival or job completion. It is
outperformed by the best greedy algorithm more than 6-fold. This result confirms
the notion that a scheduling algorithm should react to job submissions. The
algorithm in the last row of the table, /\mcbsp/\optmax/\mvt[600], optimizes the
stretch directly. It performs on par with /\periodic/\optmin/\mvt[600], but
more than one order of magnitude worse than our best yield-based algorithm. This
demonstrates that yield optimization is a good approach and that an algorithm
for minimizing the stretch directly may not be achievable.

Our overall conclusion from the results in
Table~\ref{tab.deg-from-bound-300-delay} is that, to achieve good performance,
all our techniques should be combined: an aggressive greedy job admission policy
with preemption of running jobs, a periodic use of the \mcb vector-packing
algorithm, an opportunistic use of resources freed upon job completion, and a
grace period that prevents remapping of tasks that have just begun executing.
Note that while the algorithms are executed in an online, non-clairvoyant
context, the computation of the bound on the optimal performance relies on
knowledge of both the release dates and processing times of all jobs.
Furthermore, the bound ignores memory constraints. Nevertheless, in our
experiments, our best algorithms are on average at most a factor 7 away from
this loose bound. We conclude that our algorithms lead to good performance in an
absolute sense.

\begin{figure}[htb]
\centering
  \includegraphics[width=0.45\textwidth]{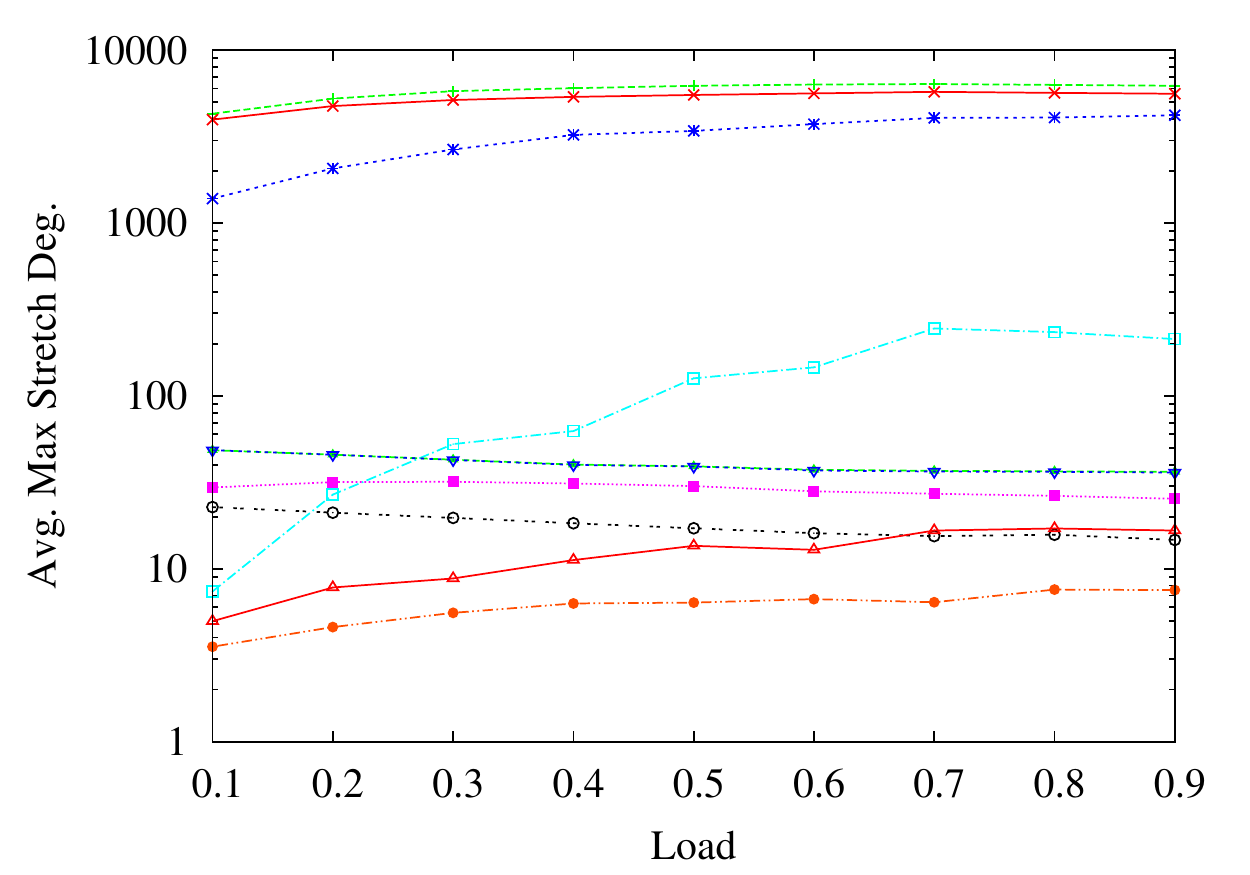}
~\\
  \includegraphics[width=0.4\columnwidth]{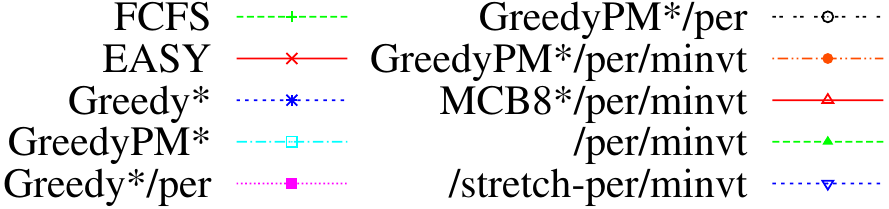}

  \label{fig.maxsdfbound-vs-load-for-300-delay}
\caption{Average degradation from bound vs. load for selected algorithms on
         the scaled synthetic dataset, assuming a 5-minute rescheduling penalty.
         Each data point shows average values over 100 instances. All algorithms
         use \optmin (except /\mcbsp, which uses \optmax) and use \mvt[600] if 
         \mvt is specified.\label{fig.maxdeg}}
\end{figure}

Figure~\ref{fig.maxdeg} plots average degradation factors vs. load for selected
algorithms when applied to scaled synthetic workloads and using the 5-minute
rescheduling penalty, using a logarithmic scale on the vertical axis. We plot
only the \greedy and \greedypm versions of the algorithms, and omit the \optmin
and the ``=600'' parts of the names so as not to clutter the caption. Most of
the observations made for the results in 
Table~\ref{tab.deg-from-bound-300-delay} hold for the results in the figure. One
notable feature here is that the \greedypm\activeres/\optmin algorithm performs
well under low load conditions but quickly leads to poorer performance than most
other algorithms as the load becomes larger than 0.3. Under low-load conditions,
the periodic use of \mcb to remap tasks is not as critical to ensure good
performance. The main observation in this figure is that our best algorithm,
\greedypm\activeres/\periodic/\optmin/\mvt[600], is the best algorithm across
the board regardless of the load level. It is about a factor 3 away from the
bound at low loads, and a factor 7.5 away from it at high loads. 

\subsection{Algorithm Execution Time}

To show that our approach produces resource allocations quickly we record the
running time of \mcb, the most computationally expensive component of our
algorithms by orders of magnitude.  We ran the simulator using the
\mcb\activeres algorithm, which applies \mcb the most frequently, on a system
with a 3.2GHz Intel Xeon CPU and 4GB RAM for the 100 unscaled traces generated
by the Lublin model. This experiment resulted in a total of 197,808
observations. For 67.25\% of these observations \mcb produced allocations for 10
or fewer jobs in less than 0.001 seconds. The remaining observations were for 11
to 102 jobs. The average compute time was about 0.25 seconds with the maximum
under 4.5 seconds. Since typical job inter-arrival times are orders of magnitude
larger~\cite{lublin2003wps}, we conclude that our best algorithms can be used in
practice.

\subsection{Impact of Preemptions and Migrations}
\label{sec.bandwidthresults}

\begin{table*}[Htb]
\centering
\caption{%
     Preemption and migration costs in terms of average bandwidth consumption,
     number of preemption and migration occurrences per hour, and number of
     preemption and migration occurrences per job. Average values over scaled
     synthetic traces with load $\geq 0.7$, with maximum values in 
     parentheses.\label{tab.costs}}
\resizebox{\textwidth}{!} {
\begin{tabular}{|l||r|r|r|r|r|r|}

\hline
          & \multicolumn{2}{c|}{Bandwidth Consumption} &
            \multicolumn{2}{c|}{Frequency of Occurrence} &
            \multicolumn{2}{c|}{\# occurrences per job}\\
Algorithm & \multicolumn{2}{c|}{(GB / sec)} &
            \multicolumn{2}{c|}{(\# occurrences / hour)} &
            \multicolumn{2}{c|}{ }\\
\cline{2-7}
          & pmtn & mig & pmtn & mig & pmtn & mig\\
\hline

EASY&0.00 (0.00)&0.00 (0.00)&0.00 (0.00)&0.00 (0.00)&0.00 (0.00)&0.00 (0.00)\\
FCFS&0.00 (0.00)&0.00 (0.00)&0.00 (0.00)&0.00 (0.00)&0.00 (0.00)&0.00 (0.00)\\
\hline
\greedy\activeres/\optmin&0.00 (0.00)&0.00 (0.00)&0.00 (0.00)&0.00 (0.00)&0.00 (0.00)&0.00 (0.00)\\
\greedyp\activeres/\optmin&0.06 (0.17)&0.00 (0.00)&5.67 (18.00)&0.00 (0.00)&0.57 (2.04)&0.00 (0.00)\\
\greedypm\activeres/\optmin&0.03 (0.07)&0.02 (0.05)&2.25 (10.08)&3.69 (13.32)&0.23 (1.19)&0.36 (1.22)\\
\hline
\greedy/\periodic/\optmin&0.48 (1.08)&0.21 (0.60)&32.58 (83.52)&38.79 (110.52)&5.41 (21.76)&4.81 (16.17)\\
\greedyp/\periodic/\optmin&0.50 (1.11)&0.20 (0.60)&33.34 (85.32)&37.52 (107.64)&5.54 (22.20)&4.67 (15.76)\\
\greedypm/\periodic/\optmin&0.49 (1.10)&0.21 (0.60)&32.93 (84.24)&39.21 (112.32)&5.52 (21.65)&4.85 (16.03)\\
\hline
\greedy\activeres/\periodic/\optmin&0.50 (1.29)&0.27 (0.66)&29.27 (84.96)&58.06 (124.56)&4.49 (22.55)&6.94 (17.65)\\
\greedyp\activeres/\periodic/\optmin&0.58 (1.37)&0.28 (0.65)&39.67 (97.92)&58.56 (127.08)&5.87 (24.87)&7.09 (17.91)\\
\greedypm\activeres/\periodic/\optmin&0.56 (1.37)&0.29 (0.66)&35.78 (96.48)&62.95 (135.72)&5.42 (24.00)&7.59 (18.32)\\
\hline
\greedy\activeres/\periodic/\optmin/\mvt[600]&0.49 (1.27)&0.24 (0.62)&28.08 (83.52)&51.97 (117.36)&4.29 (21.80)&6.25 (16.53)\\
\greedyp\activeres/\periodic/\optmin/\mvt[600]&0.56 (1.36)&0.24 (0.63)&37.70 (97.20)&52.14 (119.16)&5.56 (23.88)&6.34 (16.48)\\
\greedypm\activeres/\periodic/\optmin/\mvt[600]&0.54 (1.34)&0.26 (0.62)&33.80 (94.32)&56.45 (127.08)&5.11 (23.23)&6.84 (16.94)\\
\hline
\mcb\activeres/\optmin&0.42 (1.26)&1.17 (2.36)&61.66 (230.40)&490.48 (1,005.48)&6.67 (18.83)&57.46 (90.29)\\
\mcb\activeres/\periodic/\optmin&0.72 (1.15)&1.21 (2.68)&77.04 (170.28)&479.31 (1,109.52)&13.01 (28.93)&73.61 (117.15)\\
\mcb\activeres/\periodic/\optmin/\mvt[600]&0.54 (1.11)&0.56 (1.53)&37.94 (85.32)&194.57 (836.28)&6.57 (26.52)&22.55 (50.29)\\
\hline
/\periodic/\optmin&0.49 (1.07)&0.21 (0.62)&33.83 (84.24)&38.69 (111.24)&5.65 (23.23)&4.90 (16.58)\\
\hline
\mcbsp/\optmax&0.28 (0.65)&0.39 (0.81)&20.41 (45.36)&67.26 (159.48)&3.79 (16.71)&10.41 (26.96)\\
\hline
\end{tabular}
}
\end{table*}

The previous section shows that our algorithms wildly outperform EASY. One may,
however, wonder whether network and I/O resources are not overly taxed by
preemption and/or migration activity. Table~\ref{tab.costs} shows results
obtained using our synthetic scaled trace data, and only for traces with load
levels 0.7 or higher. The table shows averages and, in parentheses, maxima,
computed over all traces for our two best algorithms from the previous section.
For each algorithm, the first two result columns of the table show the overall
bandwidth consumption due to preemptions and migrations in GB/sec. The next two
columns show the frequency of job preemptions and migrations, and the last two
columns show the number of preemptions and migrations per job. These numbers are
computed by saying that a job is preempted (resp. migrated) whenever one or more
of its tasks is preempted (resp. migrated).

The main observation from the results in Table~\ref{tab.costs} is that the total
bandwidth consumption is reasonable. The two algorithms have a total average
bandwidth consumption under 0.80 GB/sec. Even accounting for maximum bandwidth
consumption, i.e., for the trace that causes the most traffic due to preemptions
and migrations, bandwidth consumptions are under 2.0 GB/sec.  Such numbers
represent only a small fraction of the bandwidth capacity of current
interconnect technology for cluster platforms. The numbers of preemptions per
hour show that under 40 preemptions and under 60 migrations occur each hour,
with each job being preempted under 6 times and migrating under 7 times during
its lifetime, on average.  Again, in light of the good performance achieved by
these algorithms, these numbers are reasonable in spite of our conservative
5-minute rescheduling penalty.  

Results are similar for other algorithms that use the greedy approach and apply
\mcb periodically. The algorithms that use the greedy approach and do not employ
\mcb periodically expectedly lead to lower numbers of migrations and preemptions
per hour (under 10 for each, on average). The algorithms that use the \mcb
algorithm at each job arrival, and possibly at each job completion, lead to
higher number of preemptions per hour and, more noticeably, much higher numbers
of migrations per hour (above 120 and up to 490). These algorithms suffer from
job thrashing, which has been identified in the previous section as the main
cause for their poor performance.  This phenomenon is also seen in per-job
numbers. While all other algorithms migrate a job less than 8 times on average,
these algorithms migrate a job 14 times on average and up to 73 times in the
worst case.

Overall, the bandwidth consumption due to preemption/migration and the number of
preemption/migration events reported in Table~\ref{tab.costs} show that our
algorithms can be put in practice for real-world platforms. Complete tables
detailing the preemption and migration costs for all algorithms are provided in
Appendix~\ref{sec.extratabs}.

\subsection{Platform Utilization}
\label{sec.utilization}

In this section we investigate how our best algorithms, in terms of maximum
stretch, compare to EASY in terms of platform utilization. We introduce a new
metric, called \emph{underutilization}. We contend that this metric helps
quantify schedule quality in terms of utilization, while remaining agnostic to
conditions that can confound traditional metrics, as explained hereafter.

\subsubsection{Measuring System Underutilization}

\begin{figure}[htb]
\centering
\includegraphics[width=0.45\columnwidth]{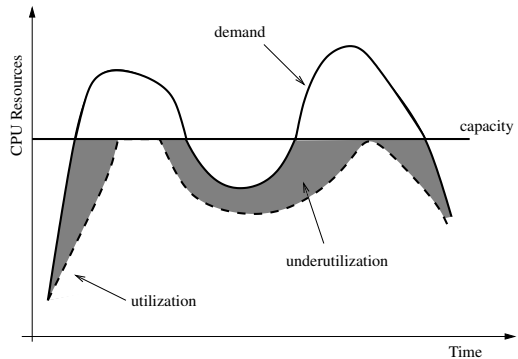}
\caption{Illustration of underutilization}
\label{fig.underutil}
\end{figure}

We have shown that DFRS algorithms are capable of dramatically improving the
performance delivered to jobs when compared with traditional batch scheduling
approaches.  However, job performance is not the sole concern for system
administrators. In fact, a common goal is to maximize platform utilization so as
to both improve throughput and justify the costs associated with maintaining a
cluster. Metrics used to evaluate machine utilization include throughput, daily
peak instantaneous utilization, and average instantaneous utilization over time.
However, these metrics are not appropriate for open, online systems as they are
highly dependent on the arrival process and the requirements of jobs in the
workload~\cite{frachtenberg2005ppj}. Another potential candidate, the makespan,
i.e., the amount of time elapsed between the submission of the first job in the
workload and the completion of last job, suffers from the problem that a very
short job may be submitted at the last possible instant, resulting in all
the scheduling algorithms having (nearly) the same
makespan~\cite{frachtenberg2005ppj}.

Instead, we argue that the quality of a scheduling algorithm (not considering
fairness to jobs) should be judged based on how well it meets demand over the
course of time, bounded by the resource constraints of the system. We call our
measure of this quantity the \emph{underutilization} and define it as follows.
For a fixed set of nodes $\procset$, let $\demand^{\sigma}_{\jobset}(t)$ be the
total CPU demand (i.e., sum of CPU needs) by jobs from $\jobset$ that have been
submitted but have not yet completed at time $t$ when scheduled using algorithm
$\sigma$, and let $u^{\sigma}_{\jobset}(t)$ represent the total system
utilization at time $t$ (i.e., sum of allocated CPU fractions) under the same
conditions. The underutilization of a system (assuming that all release dates
are $\ge 0$) is given by $\int_{0}^{\infty}
\min\{|\procset|,\demand^{\sigma}_{\jobset}(t)\} - u^{\sigma}_{\jobset}(t) dt$.
Note that $u^{\sigma}_{\jobset}(t)$ is constrained to always be less than both
$|\procset|$ and $\demand^{\sigma}_{\jobset}(t)$, and that
$\demand^{\sigma}_{\jobset}(t) = 0$ outside of the time span between when the
first job is submitted and the last one completes.

Figure~\ref{fig.underutil} provides a visual explanation. The horizontal
axis represents time while the vertical axis shows CPU resources. The solid
horizontal line shows the total system CPU capacity, $|\procset|$.  The solid
curve shows the total instantaneous demand, $\demand^{\sigma}_{\jobset}(t)$.
The dashed curve shows the total instantaneous utilization over time,
$u^{\sigma}_{\jobset}(t)$. The gray area, bounded above by the minimum of CPU
capacity and CPU demand, and below by CPU utilization, represents what we call
the \emph{underutilization} over the given time period.

Essentially, underutilization represents a cumulative measure over time of
computational power that could, at least theoretically, be exploited, but that
is instead sitting idle (or being used for non-useful work, such as preemption
or migration). Thus, lower values for underutilization are preferable to higher
values. As this quantity depends on total workload demand, which can vary
considerably, when combining results over a number of workloads we consider the
normalized underutilization, or the underutilization as a fraction of the total
resources required to execute the workload. A normalized underutilization value
of 1.0 would mean that, over the course of executing a workload, another
scheduling algorithm could have used an average of twice as much CPU resource
at any given time (ignoring memory and migration constraints). We contend that
for a fixed platform algorithms that do a better job of allocating resources
will tend to have smaller values for normalized underutilization in the average
case on a given set of workloads. It is important to note that normalized
underutilization will also vary with platform and workload characteristics, and
thus it should not be taken as an absolute measure of algorithm efficiency.

\subsubsection{Experimental Results}
\label{sec.period}

\begin{table*}[Htb]
\caption{Average normalized underutilization for selected algorithms on all
datasets}
\label{tab.underutilization}
\centering
\begin{tabular}{|l|c|c|c|}
\hline
Algorithm 
    & Real-world trace & Unscaled synthetic traces & Scaled synthetic traces\\
\hline
EASY                                    & 0.064  & 0.349 & 0.384\\
\greedyp\activeres/\periodic/\optmin/\mvt[600]  & 0.344  & 0.497 & 0.607\\
\greedypm\activeres/\periodic/\optmin/\mvt[600] & 0.344  & 0.497 & 0.608\\
\hline
\end{tabular}
\end{table*}

In this section we consider only the EASY batch scheduling algorithm and the
two best algorithms identified in Section~\ref{sec.stretchresults}:
\greedyp\activeres/\periodic/\optmin/\mvt[600] and
\greedypm\activeres/\periodic/\optmin/\mvt[600].
Table~\ref{tab.underutilization} shows average normalized underutilization
results for our three sets of traces. We can see that our two algorithms lead to
similar underutilization, and that they both lead to higher underutilization
than EASY.  This is particularly striking for the real-world HPC2N trace, for
which EASY leads to underutilization under 7\% while our algorithms are above
34\%.

While EASY scores better on this initial comparison, there are several factors
to consider. The first is that the very low underutilization shown by EASY
against the real-world trace may be partially due to an artifact of either user
or system behavior (e.g., the system from which the log was harvested also made
use of the EASY scheduler). Also for the real trace, the fact that we
arbitrarily split the entire HPC2N trace into 182 week-long periods may be a
factor. A simulation for the full trace yields a normalized underutilization of
8\% for EASY and of 10\% for our algorithms.

Still, our proposed algorithms also perform worse in terms of underutilization
on the synthetic traces, which should be free from both of the above problems.
We hypothesize that this lower efficiency is caused by time spent doing
preemption and migration. Recall that all our algorithms use a 600 second
period, equal to the rescheduling penalty, by default. By increasing the period,
within reasonable bounds, one can then hope to decrease underutilization. The
trade-off, however, is that the maximum stretch may be increased.

\begin{figure}[htb]
\centering
\subfigure[Real-world traces]{
  \includegraphics[width=0.3\textwidth]{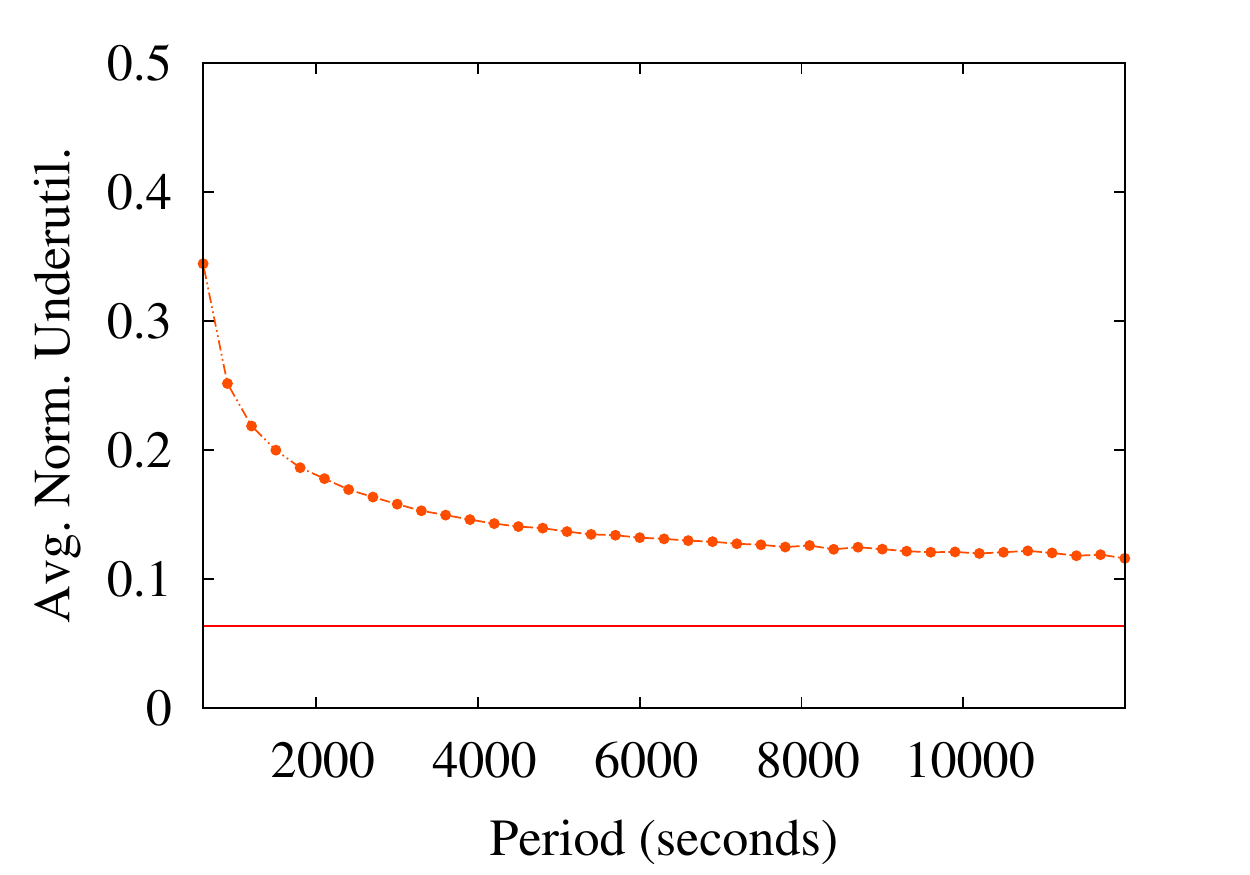}
  \label{fig.underutilization-vs-period-hpc2n}
}
\subfigure[Unscaled synthetic traces]{
  \includegraphics[width=0.3\textwidth]{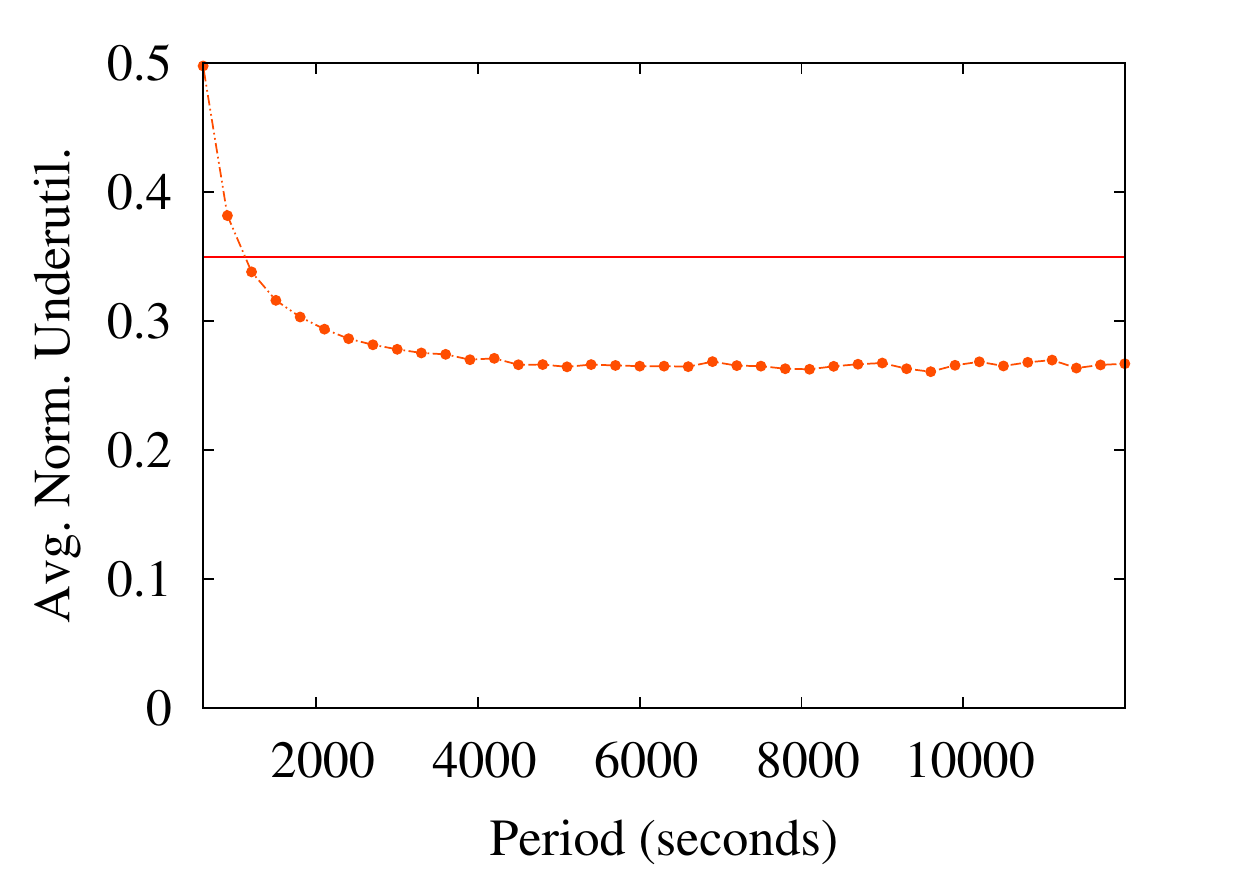}
  \label{fig.underutilization-vs-period-unscaled}
}
\subfigure[Scaled synthetic traces]{
  \includegraphics[width=0.3\textwidth]{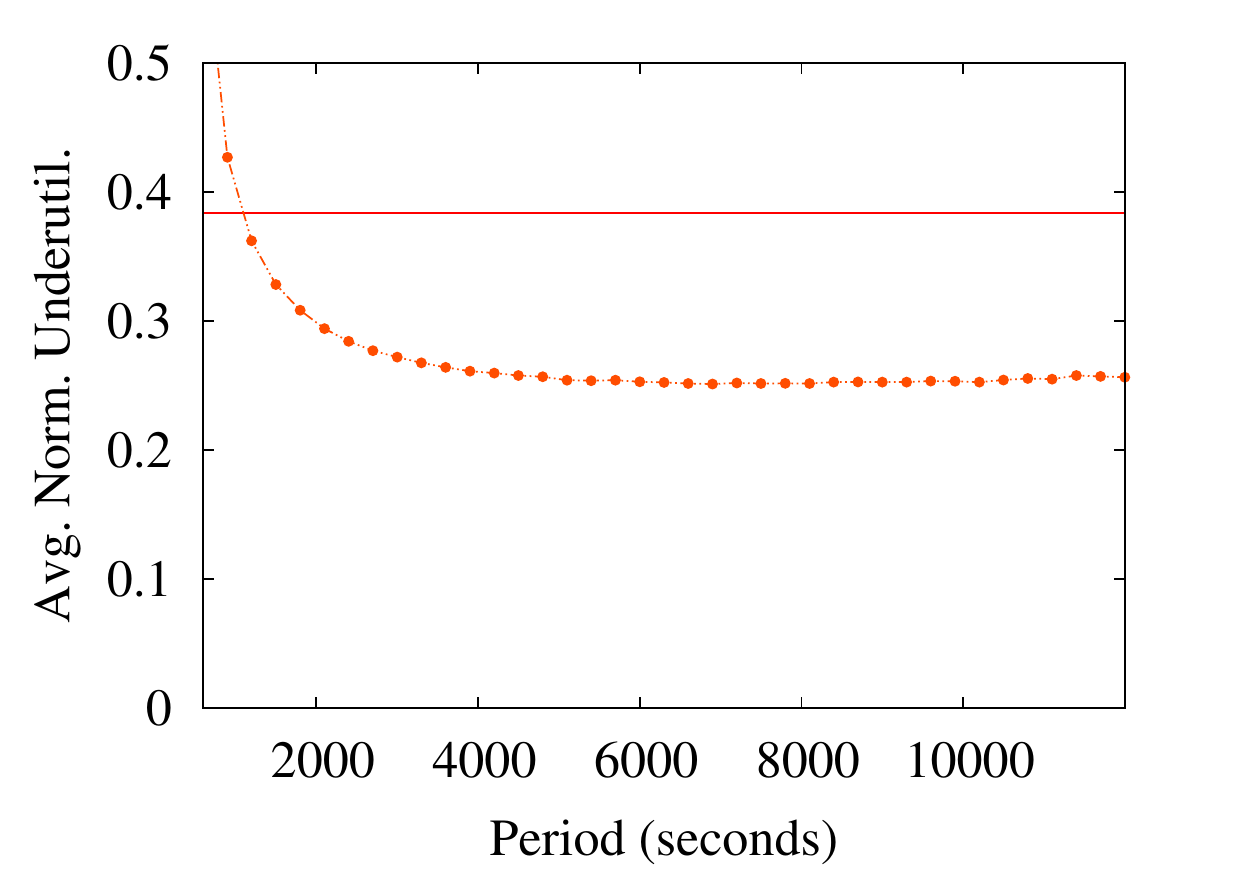}
  \label{fig.underutilization-vs-period-lubscaled}
}
\caption{Average normalized underutilization vs. period for EASY (solid) and  
  \greedypm\activeres/\periodic/\optmin/\mvt[600] (dots)}
\label{fig.underutilperiodic}
\end{figure}

Figure~\ref{fig.underutilperiodic} shows average normalized underutilization
results on our three sets of traces for EASY and the
\greedypm\activeres/\periodic/\optmin/\mvt[600] algorithm (results for
\greedyp\activeres/\periodic/\optmin/\mvt[600] are similar). For each set, we
plot the average normalized underutilization versus the period. In all graphs
the period varies from 600 to 12,000 seconds, i.e., from 2x to 20x the
rescheduling penalty. We do not use a period equal to the penalty as for some
traces it can result in a situation where no job ever makes any progress due to
thrashing. In all graphs normalized underutilization is shown to decrease
steadily. Additional results provided in Appendix~\ref{sec.extragraphs} show
that for extremely large periods (over 15,000 seconds for the synthetic traces)
underutilization expectedly begins to increase. For the two sets of synthetic
traces, as the period becomes larger than 900s (i.e., 1.5x the rescheduling
penalty), our algorithm achieves better average normalized underutilization than
EASY. For the real-world trace, our algorithm achieves higher values than EASY
regardless of the period: EASY achieves low values at around 6.4\%, while our
algorithm plateaus at around 11.8\%, before slowly starting to increase again
for larger periods (see Appendix~\ref{sec.extragraphs}).

\begin{figure}[htb]
\centering
\includegraphics[width=0.45\textwidth]{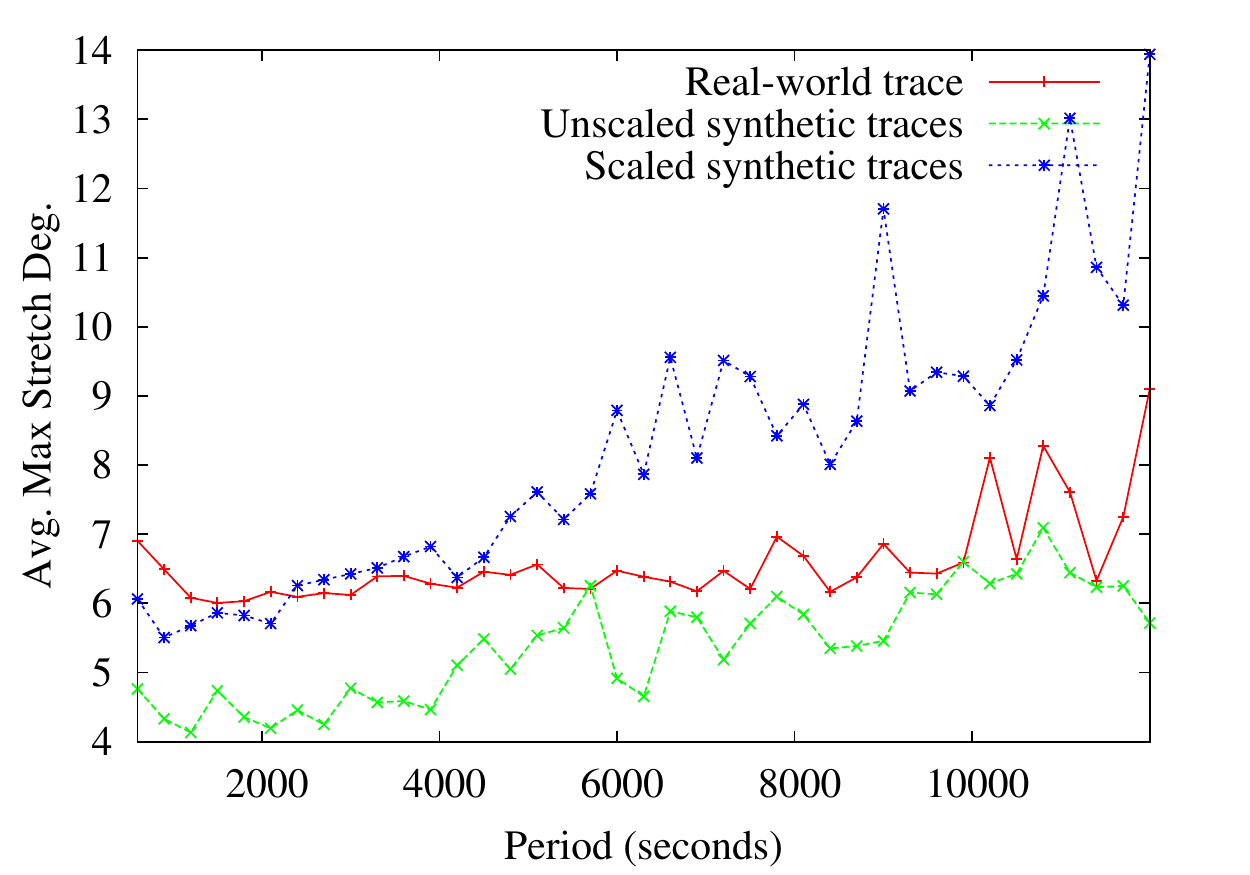}
\caption{Maximum stretch degradation from bound vs. scheduling period for 
    \greedypm\activeres/\periodic/\optmin/\mvt[600] for all three trace sets}
\label{fig.underutilperiodicstretch}
\end{figure}

Figure~\ref{fig.underutilperiodicstretch} shows the average maximum stretch for
our algorithm as the period increases, for each set of traces. The main
observation is that the average maximum stretch degradation increases slowly as
the period increases. The largest increase is seen for the scaled synthetic
traces, for which a 20-fold increase of the period leads to an increase of the
maximum stretch degradation by less than a factor 3. Results for the unscaled
synthetic traces and the real-world trace show much smaller increases (by less
than a factor 1.5 and a factor 2, respectively). Recall from
Section~\ref{sec.stretchresults} that EASY leads to maximum stretch degradation
orders of magnitude larger than our algorithms. Consequently, increasing the
period significantly, i.e., up to 20x the rescheduling penalty, still allows our
algorithms to outperform EASY by at least two orders of magnitude on average.

We conclude that our algorithms can outperform EASY by orders of magnitude in
terms of maximum stretch, and lead to only slightly higher or significantly
lower underutilization provided the period is set appropriately. Our results
indicate that picking a period roughly between 5x and 20x the rescheduling
penalty leads to good results. Our approach is thus robust to the choice of the
period, and in practice a period of, say, 1 hour, is appropriate. Results
summarized in Figure~\ref{fig.underutilperiodicbandwidth} of
Appendix~\ref{sec.extragraphs} show that with a 1-hour period the bandwidth
consumption due to preemptions and migrations is roughly 4 times lower
than that reported in Section~\ref{sec.bandwidthresults}, e.g., under 0.2GB/sec
on the average when considering the scaled synthetic traces with load values
$\geq 0.7$. As a conclusion, we recommend using the
\greedypm\activeres/\periodic/\optmin/\mvt[600] algorithm with a period equal to
10 times the rescheduling penalty.

\section{Related Work}
\label{sec.related}

Gang scheduling~\cite{ousterhout1982stc} is the classical approach to
time-sharing cluster resources. In gang scheduling, tasks in a parallel job are
executed during the same synchronized time slices across cluster nodes. This
requires distributed synchronized context-switching, which may impose
significant overhead and thus long time slices, although solutions have been
proposed~\cite{hori1998oap}. In this work we simply achieve time-sharing in an
uncoordinated and low-overhead manner via VM technology.

A second drawback of gang scheduling is the memory pressure problem, i.e., the
overhead of swapping to disk~\cite{batat2000gsm}. In our approach we completely
avoid swapping by enforcing the rule that a task may be assigned to a node only
if physical memory capacity is not exceeded. This precludes some time sharing
when compared to standard gang scheduling. However, this constraint is eminently
sensible given the performance penalties associated with swapping to disk and
the observation that many jobs in HPC workloads use only a fraction of physical
memory~\cite{setia1999ijm, batat2000gsm, chiang2001cls, li2004wcm}.

Other works have explored the problem of scheduling jobs without knowledge of
their processing time. The famous ``scheduling in the dark''
approach~\cite{edmonds1999sid} shows that, in the absence of knowledge, giving
equal resource shares to jobs is theoretically sound. We use the same approach
by ensuring that all jobs achieve the same yield. Our problem is also related to
the thread scheduling done in operating system kernels, given that thread
processing times are unknown. Our work differs in that we strive to optimize a
precisely defined objective function.

Our work is also related to several previous works that have explored
algorithmic issues pertaining to bin packing and/or multiprocessor scheduling.
There are obvious connections to \emph{fully dynamic} bin packing, a formulation
where items may arrive or depart at discrete time intervals and the goal is to
minimize the maximum number of bins required while limiting re-packing, as
studied by Ivkovic and Lloyd~\cite{ivkovic1999fda}. Coffman studies bin
stretching, a version of bin packing in which a bin may be stretched beyond its
normal capacity~\cite{coffman2006aae}. Epstein studies the online bin stretching
problem as a scheduling problem with the goal of minimizing
makespan~\cite{epstein2003bsr}. 

Our scheduling problem is strongly related to vector packing, i.e., bin packing
with multi-dimensional items and bins. Vector packing has been studied from both
a theoretical standpoint (i.e., guaranteed algorithms) and a pragmatic one
(i.e., efficient algorithms). In this work we employ an algorithm based on a
particular vector packing algorithm, MCB (Multi-Capacity Bin Packing), proposed
by Leinberger et al.~\cite{leinberger1999mcb}. We refer the reader
to~\cite{stillwell2010rav} for an extensive review of the literature on vector
packing algorithms.

Yossi Azar has studied online load balancing of temporary tasks on identical
machines with assignment restrictions~\cite{azar1997lbt}. Each task has a
weight and a duration. The weight is known when the task arrives, but the
duration is not known until the task completes. The problem therein is to assign
incoming tasks to nodes permanently so that the maximum load is minimized over
both nodes and time, which is related to maximizing the minimum yield. 

Finally, previous works have explored the use of VM technology in the HPC
domain. Some of the potential benefits of cluster virtualization include
increased reliability~\cite{emeneker2006irt}, load balancing, and consolidation
to reduce the number of active nodes~\cite{hermenier2009ecm}. Current VM
technology also allows for preemption and checkpointing of MPI
applications~\cite{emeneker2006irt}, as assumed in this work. The broad
consensus is that VM overhead does not represent a barrier to mainstream
deployment~\cite{huang2006chp}. Additional research has shown that the
performance impact on MPI applications is minimal~\cite{youseff2006epi} and that
cache interference effects do not cause significant performance degradation in
commonly-used libraries such as LAPACK and BLAS~\cite{youseff2008ipm}. 

Several groups in academia and the private sector are currently working on
platforms for centralized control of virtualized resources in a distributed
environment~\cite{aron2000crm, bhatia2007vcm, grit2006vmh, grit2007hvm,
hermenier2009ecm, mcnett2007uef, nurmi2008eoc, irwin2006snr,
ramakrishnan2006tdc, virtual_center, xen_enterprise}. These platforms generally
allow a central controller to create VMs, specify resource consumption levels,
migrate VMs between nodes, suspend running instances to disk, and, when
necessary, delete unruly instances. We base our model on such a system and the
capabilities it offers. The Entropy system recently developed by Hermenier et
al.~\cite{hermenier2009ecm} in fact implements all of the basic system
capabilities that we propose to exploit as well as its own set of resource
allocation algorithms, but their approach is based on searching for solutions to
an NP-complete constraint satisfaction problem, while our approach is to develop
polynomial time heuristic algorithms for a well-defined optimization problem.

Other groups have proposed VM-enhanced scheduling algorithms, but for the most
part these have been refinements of or extensions to existing schemes, such as
combining best-effort and reservation based jobs~\cite{sotomayor2008cbe}.
Fallenbeck et al. developed a novel solution that allows two virtual clusters to
coexist on the same physical cluster, with each physical node mapped to two
virtual nodes, but their implementation is meant to solve a problem specific to
a particular site and does not add to the literature in terms of general
scheduling algorithms~\cite{fallenbeck2006xac}. The system proposed by Ruth et
al.  attempts to dynamically migrate tasks from ``overloaded'' hosts, but their
definition of overloaded is vague and they do not propose a well-defined
objective function~\cite{ruth2006ala}.

\section{Conclusion}
\label{sec.conclusion}

In this paper we have proposed DFRS, a novel approach for job scheduling on a
homogeneous cluster. We have focused on an online, non-clairvoyant scenario in
which job processing times are unknown ahead of time. We have proposed several
scheduling algorithms and have compared them to standard batch scheduling
approaches using both real-world and synthetic workloads.  We have found that
several DFRS algorithms lead to dramatic improvement over batch scheduling in
terms of maximum (bounded) stretch. In particular, algorithms that periodically
apply the \mcb vector packing algorithms lead to the best results.  Our results
also show that the network bandwidth consumption of these algorithms for job
preemptions and migrations is only a small fraction of that available in current
clusters. Finally, we have shown that these algorithms can lead to good platform
utilization as long as the period at which \mcb is applied is chosen within a
broad range. The improvements shown in our results are likely to be larger in
practice due the many conservative assumptions in our evaluation methodology.

This work opens a number of promising directions for future research. Our
scheduling algorithms could be improved with a strategy for reducing the yield
of long running jobs. This strategy, inspired by thread scheduling in operating
systems kernels, would be particularly useful for mitigating the negative impact
of long running jobs on shorter ones, thereby improving fairness. While we
considered the case for HPC jobs composed of tasks with homogeneous resource
requirements and needs, the techniques that we developed could easily be
modified to allow for heterogeneous tasks as well (see our paper on the offline
problem~\cite{stillwell2010rav} for an expanded discussion of this issue). Also,
mechanisms for implementing user priorities, such as those supported in batch
scheduling systems, are needed. More broadly, a logical next step is to
implement and benchmark our algorithms as part of a prototype virtual cluster
management system that uses some of the resource need discovery techniques
described in Section~\ref{sec.system}.

\section*{Acknowledgment}

Simulations were carried out using the Grid'5000 experimental testbed, being
developed under the INRIA ALADDIN development action with support from CNRS,
RENATER and several Universities as well as other funding bodies (see
\url{https://www.grid5000.fr}).

\bibliographystyle{IEEEtran}
\bibliography{journals,biblio,procs}

\clearpage

\appendix

\section{Additional Tables}
\label{sec.extratabs}

\begin{center}
\begin{longtable}{|l||r|r|r||}
\caption[Average degradation from bound results for the HPC2N workload.]{%
  Average degradation from bound results for the real-world HPC2N workload. All
  results are for a 5-minute rescheduling penalty.%
  \label{sec.extratabs:tab.deg-from-bound-hpc2n-300-delay}}\\
\hline
Algorithm  & \multicolumn{3}{c||}{Degradation from bound}\\
            & avg. & std. & max \\
\hline
\endfirsthead
\caption[]{(Continued) Average degradation from bound results for the
  real-world HPC2N workload. All results are for a 5-minute rescheduling 
  penalty.}\\
\hline
Algorithm  & \multicolumn{3}{c||}{Degradation from bound}\\
            & avg. & std. & max\\
\hline
\endhead
\hline
\endfoot
\hline
\endlastfoot
FCFS&3,578.5&3,727.8&21,718.4\\
EASY&3,041.9&3,438.0&21,317.4\\
\greedy\activeres/\optavg&1,012.2&2,229.5&19,799.1\\
\greedy\activeres/\optmin&949.8&1,828.5&11,778.4\\
\greedy/\periodic/\optavg&28.9&27.0&212.9\\
\greedy/\periodic/\optavg/\mft[300]&23.3&26.9&182.3\\
\greedy/\periodic/\optavg/\mft[600]&23.5&27.6&212.2\\
\greedy/\periodic/\optavg/\mvt[300]&23.9&26.9&182.3\\
\greedy/\periodic/\optavg/\mvt[600]&23.8&27.8&182.3\\
\greedy/\periodic/\optmin&28.3&24.9&163.7\\
\greedy/\periodic/\optmin/\mft[300]&23.4&26.0&152.0\\
\greedy/\periodic/\optmin/\mft[600]&23.1&24.8&152.5\\
\greedy/\periodic/\optmin/\mvt[300]&23.5&26.9&182.8\\
\greedy/\periodic/\optmin/\mvt[600]&23.0&25.9&152.0\\
\greedy\activeres/\periodic/\optavg&24.5&15.9&81.6\\
\greedy\activeres/\periodic/\optavg/\mft[300]&19.8&17.8&85.6\\
\greedy\activeres/\periodic/\optavg/\mft[600]&19.3&17.6&85.6\\
\greedy\activeres/\periodic/\optavg/\mvt[300]&19.2&17.5&85.6\\
\greedy\activeres/\periodic/\optavg/\mvt[600]&18.9&17.3&74.9\\
\greedy\activeres/\periodic/\optmin&24.3&15.9&81.6\\
\greedy\activeres/\periodic/\optmin/\mft[300]&19.5&17.7&85.6\\
\greedy\activeres/\periodic/\optmin/\mft[600]&19.1&17.5&85.6\\
\greedy\activeres/\periodic/\optmin/\mvt[300]&18.9&17.2&85.6\\
\greedy\activeres/\periodic/\optmin/\mvt[600]&19.0&17.4&66.1\\
\greedyp\activeres/\optavg&20.4&116.7&1,254.2\\
\greedyp\activeres/\optmin&13.5&68.0&819.2\\
\greedyp/\periodic/\optavg&18.4&18.6&152.4\\
\greedyp/\periodic/\optavg/\mft[300]&9.1&18.8&152.4\\
\greedyp/\periodic/\optavg/\mft[600]&9.0&18.7&152.4\\
\greedyp/\periodic/\optavg/\mvt[300]&9.0&18.8&152.4\\
\greedyp/\periodic/\optavg/\mvt[600]&8.9&18.9&152.4\\
\greedyp/\periodic/\optmin&18.5&18.6&152.4\\
\greedyp/\periodic/\optmin/\mft[300]&9.1&18.8&152.4\\
\greedyp/\periodic/\optmin/\mft[600]&9.0&18.9&152.4\\
\greedyp/\periodic/\optmin/\mvt[300]&9.0&18.9&152.4\\
\greedyp/\periodic/\optmin/\mvt[600]&8.9&18.9&152.4\\
\greedyp\activeres/\periodic/\optavg&17.9&19.7&213.5\\
\greedyp\activeres/\periodic/\optavg/\mft[300]&8.2&19.4&213.5\\
\greedyp\activeres/\periodic/\optavg/\mft[600]&7.7&18.5&198.8\\
\greedyp\activeres/\periodic/\optavg/\mvt[300]&7.0&14.0&149.3\\
\greedyp\activeres/\periodic/\optavg/\mvt[600]&6.9&14.0&149.3\\
\greedyp\activeres/\periodic/\optmin&17.9&19.6&213.5\\
\greedyp\activeres/\periodic/\optmin/\mft[300]&7.9&19.0&213.5\\
\greedyp\activeres/\periodic/\optmin/\mft[600]&7.5&18.0&198.8\\
\greedyp\activeres/\periodic/\optmin/\mvt[300]&6.9&14.0&149.3\\
\greedyp\activeres/\periodic/\optmin/\mvt[600]&6.9&14.2&149.3\\
\greedypm\activeres/\optavg&14.1&72.7&880.1\\
\greedypm\activeres/\optmin&13.8&68.2&819.2\\
\greedypm/\periodic/\optavg&18.5&19.2&158.7\\
\greedypm/\periodic/\optavg/\mft[300]&9.1&19.4&158.7\\
\greedypm/\periodic/\optavg/\mft[600]&9.0&19.3&158.7\\
\greedypm/\periodic/\optavg/\mvt[300]&9.1&19.4&158.7\\
\greedypm/\periodic/\optavg/\mvt[600]&8.9&19.5&158.7\\
\greedypm/\periodic/\optmin&18.4&18.8&158.7\\
\greedypm/\periodic/\optmin/\mft[300]&9.2&19.2&158.7\\
\greedypm/\periodic/\optmin/\mft[600]&9.1&19.3&158.7\\
\greedypm/\periodic/\optmin/\mvt[300]&8.9&18.8&158.7\\
\greedypm/\periodic/\optmin/\mvt[600]&8.8&18.9&158.7\\
\greedypm\activeres/\periodic/\optavg&17.8&19.3&198.6\\
\greedypm\activeres/\periodic/\optavg/\mft[300]&8.2&19.1&198.6\\
\greedypm\activeres/\periodic/\optavg/\mft[600]&7.8&19.0&198.8\\
\greedypm\activeres/\periodic/\optavg/\mvt[300]&7.0&14.1&149.6\\
\greedypm\activeres/\periodic/\optavg/\mvt[600]&6.9&14.2&149.6\\
\greedypm\activeres/\periodic/\optmin&17.9&19.3&198.6\\
\greedypm\activeres/\periodic/\optmin/\mft[300]&8.1&19.2&198.6\\
\greedypm\activeres/\periodic/\optmin/\mft[600]&7.9&19.1&198.8\\
\greedypm\activeres/\periodic/\optmin/\mvt[300]&6.9&14.3&149.6\\
\greedypm\activeres/\periodic/\optmin/\mvt[600]&6.9&14.4&149.6\\
\mcb\activeres/\optavg&346.3&1,223.1&13,399.7\\
\mcb\activeres/\optavg/\mft[300]&44.7&144.2&1,082.8\\
\mcb\activeres/\optavg/\mft[600]&20.6&63.9&672.0\\
\mcb\activeres/\optavg/\mvt[300]&14.1&33.2&370.0\\
\mcb\activeres/\optavg/\mvt[600]&12.1&32.8&370.0\\
\mcb\activeres/\optmin&345.6&1,241.5&13,668.8\\
\mcb\activeres/\optmin/\mft[300]&44.7&138.4&1,126.3\\
\mcb\activeres/\optmin/\mft[600]&19.1&59.2&672.0\\
\mcb\activeres/\optmin/\mvt[300]&14.0&33.2&370.0\\
\mcb\activeres/\optmin/\mvt[600]&12.0&32.8&370.0\\
\mcb/\periodic/\optavg&171.4&702.4&8,383.0\\
\mcb/\periodic/\optavg/\mft[300]&15.0&33.5&279.3\\
\mcb/\periodic/\optavg/\mft[600]&12.1&26.9&292.4\\
\mcb/\periodic/\optavg/\mvt[300]&11.5&25.3&287.6\\
\mcb/\periodic/\optavg/\mvt[600]&10.7&25.2&287.6\\
\mcb/\periodic/\optmin&169.1&691.4&8,362.3\\
\mcb/\periodic/\optmin/\mft[300]&15.1&34.5&309.3\\
\mcb/\periodic/\optmin/\mft[600]&12.1&26.9&292.4\\
\mcb/\periodic/\optmin/\mvt[300]&11.5&25.3&287.6\\
\mcb/\periodic/\optmin/\mvt[600]&10.8&25.3&287.6\\
\mcb\activeres/\periodic/\optavg&394.5&1,562.9&17,862.0\\
\mcb\activeres/\periodic/\optavg/\mft[300]&56.1&197.4&1,849.6\\
\mcb\activeres/\periodic/\optavg/\mft[600]&23.6&71.3&799.3\\
\mcb\activeres/\periodic/\optavg/\mvt[300]&15.5&30.6&318.9\\
\mcb\activeres/\periodic/\optavg/\mvt[600]&13.7&30.3&318.9\\
\mcb\activeres/\periodic/\optmin&389.1&1,522.1&17,313.6\\
\mcb\activeres/\periodic/\optmin/\mft[300]&56.8&198.7&1,738.4\\
\mcb\activeres/\periodic/\optmin/\mft[600]&25.1&81.4&799.3\\
\mcb\activeres/\periodic/\optmin/\mvt[300]&15.4&30.6&318.9\\
\mcb\activeres/\periodic/\optmin/\mvt[600]&13.6&30.2&318.9\\
/\periodic/\optavg&105.0&445.6&5,011.9\\
/\periodic/\optavg/\mft[300]&105.0&445.6&5,011.9\\
/\periodic/\optavg/\mft[600]&105.0&445.6&5,011.9\\
/\periodic/\optavg/\mvt[300]&105.0&445.6&5,011.9\\
/\periodic/\optavg/\mvt[600]&105.0&445.6&5,011.9\\
/\periodic/\optmin&105.0&445.6&5,011.9\\
/\periodic/\optmin/\mft[300]&105.0&445.6&5,011.9\\
/\periodic/\optmin/\mft[600]&105.0&445.6&5,011.9\\
/\periodic/\optmin/\mvt[300]&105.0&445.6&5,011.9\\
/\periodic/\optmin/\mvt[600]&105.0&445.6&5,011.9\\
/\mcbsp/\optavg&105.0&445.6&5,011.9\\
/\mcbsp/\optavg/\mft[300]&105.0&445.6&5,011.9\\
/\mcbsp/\optavg/\mft[600]&105.0&445.6&5,011.9\\
/\mcbsp/\optavg/\mvt[300]&105.0&445.6&5,011.9\\
/\mcbsp/\optavg/\mvt[600]&105.0&445.6&5,011.9\\
/\mcbsp/\optmax&105.0&445.6&5,011.9\\
/\mcbsp/\optmax/\mft[300]&105.0&445.6&5,011.9\\
/\mcbsp/\optmax/\mft[600]&105.0&445.6&5,011.9\\
/\mcbsp/\optmax/\mvt[300]&105.0&445.6&5,011.9\\
/\mcbsp/\optmax/\mvt[600]&105.0&445.6&5,011.9\\

\end{longtable}
\end{center}

\pagebreak
\begin{center}
\begin{longtable}{|l||r|r|r||}
\caption[Average degradation from bound results for the unscaled synthetic
  traces.]{%
  Average degradation from bound results for the unscaled synthetic traces. All
  results are for a 5-minute rescheduling penalty.%
  \label{sec.extratabs:tab.deg-from-bound-lubtraces-300-delay}}\\
\hline
Algorithm  & \multicolumn{3}{c||}{Degradation from bound}\\
            & avg. & std. & max \\
\hline
\endfirsthead
\caption[]{(Continued) Average degradation from bound results for the unscaled
  synthetic traces. All results are for a 5-minute rescheduling penalty.}\\
\hline
Algorithm  & \multicolumn{3}{c||}{Degradation from bound}\\
            & avg. & std. & max\\
\hline
\endhead
\hline
\endfoot
\hline
\endlastfoot
FCFS&5,457.2&2,958.5&15,102.7\\
EASY&4,955.4&2,730.6&14,036.8\\
\greedy\activeres/\optavg&2,527.1&2,472.3&12,487.5\\
\greedy\activeres/\optmin&2,435.0&2,285.6&11,229.9\\
\greedy/\periodic/\optavg&30.0&10.2&58.1\\
\greedy/\periodic/\optavg/\mft[300]&26.5&14.4&58.1\\
\greedy/\periodic/\optavg/\mft[600]&25.6&14.2&57.8\\
\greedy/\periodic/\optavg/\mvt[300]&25.7&14.5&57.8\\
\greedy/\periodic/\optavg/\mvt[600]&25.5&14.2&57.8\\
\greedy/\periodic/\optmin&30.1&10.2&58.1\\
\greedy/\periodic/\optmin/\mft[300]&26.0&14.3&58.1\\
\greedy/\periodic/\optmin/\mft[600]&25.9&14.5&58.0\\
\greedy/\periodic/\optmin/\mvt[300]&25.9&14.5&57.9\\
\greedy/\periodic/\optmin/\mvt[600]&25.9&14.2&58.0\\
\greedy\activeres/\periodic/\optavg&30.5&9.8&65.7\\
\greedy\activeres/\periodic/\optavg/\mft[300]&25.6&14.4&58.7\\
\greedy\activeres/\periodic/\optavg/\mft[600]&25.0&14.3&57.5\\
\greedy\activeres/\periodic/\optavg/\mvt[300]&25.3&14.4&57.5\\
\greedy\activeres/\periodic/\optavg/\mvt[600]&24.7&14.1&54.1\\
\greedy\activeres/\periodic/\optmin&30.4&9.7&65.7\\
\greedy\activeres/\periodic/\optmin/\mft[300]&25.1&14.3&57.5\\
\greedy\activeres/\periodic/\optmin/\mft[600]&24.9&14.3&57.5\\
\greedy\activeres/\periodic/\optmin/\mvt[300]&24.9&14.2&54.1\\
\greedy\activeres/\periodic/\optmin/\mvt[600]&24.6&14.3&54.1\\
\greedyp\activeres/\optavg&32.7&146.9&1,230.9\\
\greedyp\activeres/\optmin&37.5&156.0&1,204.9\\
\greedyp/\periodic/\optavg&20.2&7.2&38.1\\
\greedyp/\periodic/\optavg/\mft[300]&6.3&4.3&38.1\\
\greedyp/\periodic/\optavg/\mft[600]&6.1&4.4&38.1\\
\greedyp/\periodic/\optavg/\mvt[300]&6.0&3.9&27.5\\
\greedyp/\periodic/\optavg/\mvt[600]&6.0&4.5&38.1\\
\greedyp/\periodic/\optmin&20.1&7.3&38.1\\
\greedyp/\periodic/\optmin/\mft[300]&6.1&3.8&27.5\\
\greedyp/\periodic/\optmin/\mft[600]&6.1&4.5&38.1\\
\greedyp/\periodic/\optmin/\mvt[300]&5.9&3.8&27.5\\
\greedyp/\periodic/\optmin/\mvt[600]&5.9&4.5&38.1\\
\greedyp\activeres/\periodic/\optavg&20.4&6.8&32.0\\
\greedyp\activeres/\periodic/\optavg/\mft[300]&5.5&2.8&18.0\\
\greedyp\activeres/\periodic/\optavg/\mft[600]&5.1&2.8&18.0\\
\greedyp\activeres/\periodic/\optavg/\mvt[300]&4.9&2.4&13.6\\
\greedyp\activeres/\periodic/\optavg/\mvt[600]&4.8&2.4&13.6\\
\greedyp\activeres/\periodic/\optmin&20.3&6.8&32.0\\
\greedyp\activeres/\periodic/\optmin/\mft[300]&5.2&2.4&13.7\\
\greedyp\activeres/\periodic/\optmin/\mft[600]&5.0&2.7&18.0\\
\greedyp\activeres/\periodic/\optmin/\mvt[300]&4.9&2.7&18.0\\
\greedyp\activeres/\periodic/\optmin/\mvt[600]&4.9&2.9&19.2\\
\greedypm\activeres/\optavg&28.2&104.4&676.2\\
\greedypm\activeres/\optmin&33.8&154.0&1,321.7\\
\greedypm/\periodic/\optavg&20.2&7.2&38.1\\
\greedypm/\periodic/\optavg/\mft[300]&6.3&3.7&27.5\\
\greedypm/\periodic/\optavg/\mft[600]&6.1&4.4&38.1\\
\greedypm/\periodic/\optavg/\mvt[300]&6.2&4.5&38.1\\
\greedypm/\periodic/\optavg/\mvt[600]&5.9&4.4&38.1\\
\greedypm/\periodic/\optmin&20.2&7.3&38.1\\
\greedypm/\periodic/\optmin/\mft[300]&6.1&3.6&27.5\\
\greedypm/\periodic/\optmin/\mft[600]&6.0&4.4&38.1\\
\greedypm/\periodic/\optmin/\mvt[300]&6.0&3.9&27.5\\
\greedypm/\periodic/\optmin/\mvt[600]&5.9&4.5&38.1\\
\greedypm\activeres/\periodic/\optavg&20.4&6.8&32.0\\
\greedypm\activeres/\periodic/\optavg/\mft[300]&5.5&2.6&13.7\\
\greedypm\activeres/\periodic/\optavg/\mft[600]&5.0&2.5&13.7\\
\greedypm\activeres/\periodic/\optavg/\mvt[300]&4.9&2.5&13.8\\
\greedypm\activeres/\periodic/\optavg/\mvt[600]&4.8&2.4&13.6\\
\greedypm\activeres/\periodic/\optmin&20.3&6.9&32.0\\
\greedypm\activeres/\periodic/\optmin/\mft[300]&5.3&2.7&18.0\\
\greedypm\activeres/\periodic/\optmin/\mft[600]&4.9&2.5&13.7\\
\greedypm\activeres/\periodic/\optmin/\mvt[300]&4.9&2.7&18.0\\
\greedypm\activeres/\periodic/\optmin/\mvt[600]&4.8&2.4&13.6\\
\mcb\activeres/\optavg&245.1&130.3&634.2\\
\mcb\activeres/\optavg/\mft[300]&18.0&23.2&206.3\\
\mcb\activeres/\optavg/\mft[600]&9.8&6.4&43.6\\
\mcb\activeres/\optavg/\mvt[300]&8.6&5.6&43.9\\
\mcb\activeres/\optavg/\mvt[600]&7.7&6.9&44.8\\
\mcb\activeres/\optmin&233.2&117.1&634.2\\
\mcb\activeres/\optmin/\mft[300]&16.6&22.8&206.3\\
\mcb\activeres/\optmin/\mft[600]&9.9&8.1&51.3\\
\mcb\activeres/\optmin/\mvt[300]&9.2&8.0&65.3\\
\mcb\activeres/\optmin/\mvt[600]&6.9&5.4&44.4\\
\mcb/\periodic/\optavg&134.7&57.1&324.1\\
\mcb/\periodic/\optavg/\mft[300]&15.2&18.7&173.0\\
\mcb/\periodic/\optavg/\mft[600]&10.2&8.1&65.7\\
\mcb/\periodic/\optavg/\mvt[300]&9.2&6.8&51.3\\
\mcb/\periodic/\optavg/\mvt[600]&8.2&7.0&53.3\\
\mcb/\periodic/\optmin&133.7&57.5&323.7\\
\mcb/\periodic/\optmin/\mft[300]&14.5&18.6&173.0\\
\mcb/\periodic/\optmin/\mft[600]&10.0&8.1&65.7\\
\mcb/\periodic/\optmin/\mvt[300]&9.0&6.7&51.3\\
\mcb/\periodic/\optmin/\mvt[600]&8.1&6.6&53.3\\
\mcb\activeres/\periodic/\optavg&252.1&126.3&634.2\\
\mcb\activeres/\periodic/\optavg/\mft[300]&19.5&35.4&349.2\\
\mcb\activeres/\periodic/\optavg/\mft[600]&10.7&5.6&37.1\\
\mcb\activeres/\periodic/\optavg/\mvt[300]&8.8&3.5&19.0\\
\mcb\activeres/\periodic/\optavg/\mvt[600]&7.8&3.8&21.4\\
\mcb\activeres/\periodic/\optmin&250.6&125.0&634.2\\
\mcb\activeres/\periodic/\optmin/\mft[300]&19.0&35.3&349.2\\
\mcb\activeres/\periodic/\optmin/\mft[600]&10.6&5.7&37.1\\
\mcb\activeres/\periodic/\optmin/\mvt[300]&8.9&3.5&19.0\\
\mcb\activeres/\periodic/\optmin/\mvt[600]&7.8&3.9&21.9\\
/\periodic/\optavg&43.1&19.7&134.7\\
/\periodic/\optavg/\mft[300]&43.0&19.7&134.7\\
/\periodic/\optavg/\mft[600]&43.0&19.7&134.7\\
/\periodic/\optavg/\mvt[300]&43.0&19.8&134.7\\
/\periodic/\optavg/\mvt[600]&43.1&19.7&134.7\\
/\periodic/\optmin&43.0&19.8&134.7\\
/\periodic/\optmin/\mft[300]&43.0&19.8&134.7\\
/\periodic/\optmin/\mft[600]&43.0&19.8&134.7\\
/\periodic/\optmin/\mvt[300]&43.0&19.8&134.7\\
/\periodic/\optmin/\mvt[600]&43.0&19.7&134.7\\
/\mcbsp/\optavg&43.1&19.5&134.7\\
/\mcbsp/\optavg/\mft[300]&43.1&19.5&134.7\\
/\mcbsp/\optavg/\mft[600]&43.1&19.5&134.7\\
/\mcbsp/\optavg/\mvt[300]&43.1&19.5&134.7\\
/\mcbsp/\optavg/\mvt[600]&43.0&19.6&134.7\\
/\mcbsp/\optmax&43.0&19.6&134.7\\
/\mcbsp/\optmax/\mft[300]&43.0&19.6&134.7\\
/\mcbsp/\optmax/\mft[600]&43.0&19.6&134.7\\
/\mcbsp/\optmax/\mvt[300]&43.0&19.6&134.7\\
/\mcbsp/\optmax/\mvt[600]&43.0&19.6&134.7\\

\end{longtable}
\end{center}

\pagebreak
\begin{center}
\begin{longtable}{|l||r|r|r||}
\caption[Average degradation from bound results for the scaled synthetic
  traces.]{%
  Average degradation from bound results for the scaled synthetic traces. All
  results are for a 5-minute rescheduling penalty.%
  \label{sec.extratabs:tab.deg-from-bound-lubscaled-300-delay}}\\
\hline
Algorithm  & \multicolumn{3}{c||}{Degradation from bound}\\
            & avg. & std. & max \\
\hline
\endfirsthead
\caption[]{(Continued) Average degradation from bound results for the scaled
  synthetic traces. All results are for a 5-minute rescheduling penalty.}\\
\hline
Algorithm  & \multicolumn{3}{c||}{Degradation from bound}\\
            & avg. & std. & max\\
\hline
\endhead
\hline
\endfoot
\hline
\endlastfoot
FCFS&5,869.3&2,789.1&17,403.3\\
EASY&5,262.0&2,588.9&14,534.1\\
\greedy\activeres/\optavg&3,326.7&2,561.2&18,310.2\\
\greedy\activeres/\optmin&3,204.3&2,517.5&19,129.2\\
\greedy/\periodic/\optavg&29.2&14.3&153.2\\
\greedy/\periodic/\optavg/\mft[300]&27.4&15.5&153.2\\
\greedy/\periodic/\optavg/\mft[600]&27.3&15.5&152.6\\
\greedy/\periodic/\optavg/\mvt[300]&27.5&15.8&153.2\\
\greedy/\periodic/\optavg/\mvt[600]&27.4&15.9&153.0\\
\greedy/\periodic/\optmin&29.3&14.3&153.2\\
\greedy/\periodic/\optmin/\mft[300]&27.4&15.7&153.2\\
\greedy/\periodic/\optmin/\mft[600]&27.4&15.9&152.6\\
\greedy/\periodic/\optmin/\mvt[300]&27.6&15.9&153.4\\
\greedy/\periodic/\optmin/\mvt[600]&27.0&15.5&152.8\\
\greedy\activeres/\periodic/\optavg&29.2&11.9&101.4\\
\greedy\activeres/\periodic/\optavg/\mft[300]&26.7&13.6&87.1\\
\greedy\activeres/\periodic/\optavg/\mft[600]&25.5&13.2&95.2\\
\greedy\activeres/\periodic/\optavg/\mvt[300]&25.6&13.0&81.0\\
\greedy\activeres/\periodic/\optavg/\mvt[600]&25.4&13.1&95.2\\
\greedy\activeres/\periodic/\optmin&29.1&12.3&101.4\\
\greedy\activeres/\periodic/\optmin/\mft[300]&26.4&13.2&87.1\\
\greedy\activeres/\periodic/\optmin/\mft[600]&25.5&13.3&103.9\\
\greedy\activeres/\periodic/\optmin/\mvt[300]&25.4&13.0&81.0\\
\greedy\activeres/\periodic/\optmin/\mvt[600]&25.1&13.0&95.2\\
\greedyp\activeres/\optavg&114.3&617.3&9,490.0\\
\greedyp\activeres/\optmin&115.7&644.0&10,354.2\\
\greedyp/\periodic/\optavg&18.0&9.7&84.6\\
\greedyp/\periodic/\optavg/\mft[300]&7.7&7.9&84.6\\
\greedyp/\periodic/\optavg/\mft[600]&7.5&7.8&84.6\\
\greedyp/\periodic/\optavg/\mvt[300]&7.4&7.8&84.6\\
\greedyp/\periodic/\optavg/\mvt[600]&7.3&8.4&96.8\\
\greedyp/\periodic/\optmin&17.8&9.6&84.6\\
\greedyp/\periodic/\optmin/\mft[300]&7.6&7.9&84.6\\
\greedyp/\periodic/\optmin/\mft[600]&7.3&7.6&84.6\\
\greedyp/\periodic/\optmin/\mvt[300]&7.3&7.7&84.6\\
\greedyp/\periodic/\optmin/\mvt[600]&7.3&8.5&96.8\\
\greedyp\activeres/\periodic/\optavg&18.1&8.6&89.9\\
\greedyp\activeres/\periodic/\optavg/\mft[300]&7.1&5.6&90.2\\
\greedyp\activeres/\periodic/\optavg/\mft[600]&6.7&6.3&103.5\\
\greedyp\activeres/\periodic/\optavg/\mvt[300]&6.5&6.7&103.5\\
\greedyp\activeres/\periodic/\optavg/\mvt[600]&6.3&6.3&103.5\\
\greedyp\activeres/\periodic/\optmin&17.9&8.6&89.9\\
\greedyp\activeres/\periodic/\optmin/\mft[300]&6.8&6.4&103.5\\
\greedyp\activeres/\periodic/\optmin/\mft[600]&6.3&5.4&90.2\\
\greedyp\activeres/\periodic/\optmin/\mvt[300]&6.1&5.4&90.2\\
\greedyp\activeres/\periodic/\optmin/\mvt[600]&6.1&6.3&103.5\\
\greedypm\activeres/\optavg&124.9&658.4&9,404.5\\
\greedypm\activeres/\optmin&124.0&673.5&9,598.8\\
\greedypm/\periodic/\optavg&18.1&9.9&93.3\\
\greedypm/\periodic/\optavg/\mft[300]&7.8&7.5&84.6\\
\greedypm/\periodic/\optavg/\mft[600]&7.5&7.5&84.6\\
\greedypm/\periodic/\optavg/\mvt[300]&7.4&7.5&84.6\\
\greedypm/\periodic/\optavg/\mvt[600]&7.3&8.0&96.8\\
\greedypm/\periodic/\optmin&17.9&9.8&93.0\\
\greedypm/\periodic/\optmin/\mft[300]&7.6&7.4&84.6\\
\greedypm/\periodic/\optmin/\mft[600]&7.4&7.5&84.6\\
\greedypm/\periodic/\optmin/\mvt[300]&7.3&7.6&84.6\\
\greedypm/\periodic/\optmin/\mvt[600]&7.3&8.1&96.8\\
\greedypm\activeres/\periodic/\optavg&18.2&8.7&89.9\\
\greedypm\activeres/\periodic/\optavg/\mft[300]&7.1&5.2&80.1\\
\greedypm\activeres/\periodic/\optavg/\mft[600]&6.8&6.4&103.5\\
\greedypm\activeres/\periodic/\optavg/\mvt[300]&6.4&5.6&90.2\\
\greedypm\activeres/\periodic/\optavg/\mvt[600]&6.5&6.6&103.5\\
\greedypm\activeres/\periodic/\optmin&17.9&8.6&89.9\\
\greedypm\activeres/\periodic/\optmin/\mft[300]&6.9&6.5&103.5\\
\greedypm\activeres/\periodic/\optmin/\mft[600]&6.4&6.3&103.5\\
\greedypm\activeres/\periodic/\optmin/\mvt[300]&6.3&5.6&90.2\\
\greedypm\activeres/\periodic/\optmin/\mvt[600]&6.1&5.4&90.2\\
\mcb\activeres/\optavg&750.1&1,100.8&6,274.4\\
\mcb\activeres/\optavg/\mft[300]&121.9&351.2&3,609.6\\
\mcb\activeres/\optavg/\mft[600]&33.2&95.9&1,509.2\\
\mcb\activeres/\optavg/\mvt[300]&15.3&20.9&270.9\\
\mcb\activeres/\optavg/\mvt[600]&14.5&40.9&1,068.1\\
\mcb\activeres/\optmin&742.4&1,103.0&6,130.4\\
\mcb\activeres/\optmin/\mft[300]&117.8&358.4&3,680.3\\
\mcb\activeres/\optmin/\mft[600]&31.7&78.0&1,216.7\\
\mcb\activeres/\optmin/\mvt[300]&15.7&22.5&270.9\\
\mcb\activeres/\optmin/\mvt[600]&13.2&21.6&270.9\\
\mcb/\periodic/\optavg&155.8&122.8&913.3\\
\mcb/\periodic/\optavg/\mft[300]&23.5&26.0&231.5\\
\mcb/\periodic/\optavg/\mft[600]&15.8&19.2&231.4\\
\mcb/\periodic/\optavg/\mvt[300]&12.1&12.3&127.5\\
\mcb/\periodic/\optavg/\mvt[600]&11.1&12.6&127.5\\
\mcb/\periodic/\optmin&153.0&118.1&909.5\\
\mcb/\periodic/\optmin/\mft[300]&22.1&24.0&231.5\\
\mcb/\periodic/\optmin/\mft[600]&15.2&18.9&231.4\\
\mcb/\periodic/\optmin/\mvt[300]&12.3&14.2&223.0\\
\mcb/\periodic/\optmin/\mvt[600]&11.0&12.6&127.5\\
\mcb\activeres/\periodic/\optavg&959.5&1,469.0&8,299.4\\
\mcb\activeres/\periodic/\optavg/\mft[300]&168.2&516.8&5,469.8\\
\mcb\activeres/\periodic/\optavg/\mft[600]&40.3&155.5&2,941.5\\
\mcb\activeres/\periodic/\optavg/\mvt[300]&14.2&15.7&195.7\\
\mcb\activeres/\periodic/\optavg/\mvt[600]&12.0&14.5&195.7\\
\mcb\activeres/\periodic/\optmin&956.8&1,486.7&8,398.3\\
\mcb\activeres/\periodic/\optmin/\mft[300]&161.4&481.8&4,590.5\\
\mcb\activeres/\periodic/\optmin/\mft[600]&37.9&126.9&2,400.6\\
\mcb\activeres/\periodic/\optmin/\mvt[300]&14.4&17.4&222.2\\
\mcb\activeres/\periodic/\optmin/\mvt[600]&12.2&15.3&195.7\\
/\periodic/\optavg&40.4&25.1&238.3\\
/\periodic/\optavg/\mft[300]&40.4&25.0&238.3\\
/\periodic/\optavg/\mft[600]&40.4&25.1&238.3\\
/\periodic/\optavg/\mvt[300]&40.4&25.0&238.3\\
/\periodic/\optavg/\mvt[600]&40.4&25.0&238.3\\
/\periodic/\optmin&40.4&25.1&238.3\\
/\periodic/\optmin/\mft[300]&40.4&25.1&238.3\\
/\periodic/\optmin/\mft[600]&40.4&25.1&238.3\\
/\periodic/\optmin/\mvt[300]&40.4&25.0&238.3\\
/\periodic/\optmin/\mvt[600]&40.4&25.0&238.3\\
/\mcbsp/\optavg&40.2&24.8&236.5\\
/\mcbsp/\optavg/\mft[300]&40.2&24.8&236.5\\
/\mcbsp/\optavg/\mft[600]&40.2&24.8&236.5\\
/\mcbsp/\optavg/\mvt[300]&40.2&24.8&236.5\\
/\mcbsp/\optavg/\mvt[600]&40.2&24.8&236.5\\
/\mcbsp/\optmax&40.2&24.8&236.9\\
/\mcbsp/\optmax/\mft[300]&40.2&24.8&236.9\\
/\mcbsp/\optmax/\mft[600]&40.2&24.8&236.9\\
/\mcbsp/\optmax/\mvt[300]&40.2&24.8&236.9\\
/\mcbsp/\optmax/\mvt[600]&40.2&24.8&236.9\\

\end{longtable}
\end{center}

\pagebreak
\begin{center}
\begin{longtable}{|l||r|r||r|r||}
\caption[Preemption and migration bandwidth consumption.]{%
  Preemption and migration bandwidth consumption for DFRS algorithms.
  Average and maximum values over scaled synthetic traces with load $\geq 0.7$.%
  \label{sec.extratabs:tab.bandwidth}}\\
\hline
Algorithm & \multicolumn{4}{c||}{Bandwidth consumption}\\
          & \multicolumn{4}{c||}{(GB / sec)}\\
          & \multicolumn{2}{c||}{pmtn} & \multicolumn{2}{c||}{mig}\\
          & avg. & max & avg. & max\\
\hline
\endfirsthead
\caption[]{(Continued) Preemption and migration bandwidth consumption
  for DFRS algorithms. Average and maximum values over scaled synthetic traces
  with load $\geq 0.7$.}\\
\hline
Algorithm & \multicolumn{4}{c||}{Bandwidth consumption}\\
          & \multicolumn{4}{c||}{(GB / sec)}\\
          & \multicolumn{2}{c||}{pmtn} & \multicolumn{2}{c||}{mig}\\
          & avg. & max & avg. & max\\
\hline
\endhead
\hline
\endfoot
\hline
\endlastfoot
\greedy\activeres/\optavg&0.00&0.00&0.00&0.00\\
\greedy\activeres/\optmin&0.00&0.00&0.00&0.00\\
\greedy/\periodic/\optavg&0.44&1.02&0.21&0.63\\
\greedy/\periodic/\optavg/\mft[300]&0.44&1.04&0.20&0.62\\
\greedy/\periodic/\optavg/\mft[600]&0.44&1.03&0.20&0.62\\
\greedy/\periodic/\optavg/\mvt[300]&0.44&1.03&0.20&0.63\\
\greedy/\periodic/\optavg/\mvt[600]&0.44&1.04&0.19&0.60\\
\greedy/\periodic/\optmin&0.48&1.08&0.21&0.60\\
\greedy/\periodic/\optmin/\mft[300]&0.47&1.08&0.20&0.59\\
\greedy/\periodic/\optmin/\mft[600]&0.47&1.08&0.19&0.58\\
\greedy/\periodic/\optmin/\mvt[300]&0.47&1.06&0.19&0.57\\
\greedy/\periodic/\optmin/\mvt[600]&0.47&1.07&0.18&0.58\\
\greedy\activeres/\periodic/\optavg&0.45&1.28&0.28&0.69\\
\greedy\activeres/\periodic/\optavg/\mft[300]&0.45&1.27&0.27&0.66\\
\greedy\activeres/\periodic/\optavg/\mft[600]&0.44&1.26&0.26&0.67\\
\greedy\activeres/\periodic/\optavg/\mvt[300]&0.44&1.26&0.26&0.65\\
\greedy\activeres/\periodic/\optavg/\mvt[600]&0.44&1.26&0.25&0.65\\
\greedy\activeres/\periodic/\optmin&0.50&1.29&0.27&0.66\\
\greedy\activeres/\periodic/\optmin/\mft[300]&0.50&1.29&0.26&0.65\\
\greedy\activeres/\periodic/\optmin/\mft[600]&0.50&1.29&0.26&0.63\\
\greedy\activeres/\periodic/\optmin/\mvt[300]&0.50&1.27&0.26&0.63\\
\greedy\activeres/\periodic/\optmin/\mvt[600]&0.49&1.27&0.24&0.62\\
\greedyp\activeres/\optavg&0.06&0.17&0.00&0.00\\
\greedyp\activeres/\optmin&0.06&0.17&0.00&0.00\\
\greedyp/\periodic/\optavg&0.46&1.05&0.20&0.64\\
\greedyp/\periodic/\optavg/\mft[300]&0.46&1.07&0.19&0.61\\
\greedyp/\periodic/\optavg/\mft[600]&0.46&1.06&0.19&0.60\\
\greedyp/\periodic/\optavg/\mvt[300]&0.46&1.07&0.19&0.62\\
\greedyp/\periodic/\optavg/\mvt[600]&0.46&1.05&0.18&0.60\\
\greedyp/\periodic/\optmin&0.50&1.11&0.20&0.60\\
\greedyp/\periodic/\optmin/\mft[300]&0.49&1.11&0.19&0.58\\
\greedyp/\periodic/\optmin/\mft[600]&0.49&1.10&0.18&0.58\\
\greedyp/\periodic/\optmin/\mvt[300]&0.49&1.10&0.18&0.57\\
\greedyp/\periodic/\optmin/\mvt[600]&0.49&1.11&0.18&0.57\\
\greedyp\activeres/\periodic/\optavg&0.53&1.36&0.28&0.68\\
\greedyp\activeres/\periodic/\optavg/\mft[300]&0.52&1.35&0.27&0.69\\
\greedyp\activeres/\periodic/\optavg/\mft[600]&0.52&1.36&0.26&0.69\\
\greedyp\activeres/\periodic/\optavg/\mvt[300]&0.52&1.35&0.26&0.66\\
\greedyp\activeres/\periodic/\optavg/\mvt[600]&0.51&1.35&0.25&0.65\\
\greedyp\activeres/\periodic/\optmin&0.58&1.37&0.28&0.65\\
\greedyp\activeres/\periodic/\optmin/\mft[300]&0.57&1.38&0.26&0.65\\
\greedyp\activeres/\periodic/\optmin/\mft[600]&0.57&1.38&0.26&0.65\\
\greedyp\activeres/\periodic/\optmin/\mvt[300]&0.57&1.37&0.25&0.62\\
\greedyp\activeres/\periodic/\optmin/\mvt[600]&0.56&1.36&0.24&0.63\\
\greedypm\activeres/\optavg&0.03&0.08&0.02&0.05\\
\greedypm\activeres/\optmin&0.03&0.07&0.02&0.05\\
\greedypm/\periodic/\optavg&0.46&1.05&0.21&0.64\\
\greedypm/\periodic/\optavg/\mft[300]&0.46&1.05&0.20&0.64\\
\greedypm/\periodic/\optavg/\mft[600]&0.45&1.04&0.20&0.61\\
\greedypm/\periodic/\optavg/\mvt[300]&0.45&1.06&0.20&0.62\\
\greedypm/\periodic/\optavg/\mvt[600]&0.45&1.06&0.19&0.60\\
\greedypm/\periodic/\optmin&0.49&1.10&0.21&0.60\\
\greedypm/\periodic/\optmin/\mft[300]&0.49&1.10&0.20&0.61\\
\greedypm/\periodic/\optmin/\mft[600]&0.49&1.10&0.19&0.58\\
\greedypm/\periodic/\optmin/\mvt[300]&0.49&1.10&0.19&0.57\\
\greedypm/\periodic/\optmin/\mvt[600]&0.49&1.10&0.18&0.57\\
\greedypm\activeres/\periodic/\optavg&0.51&1.33&0.29&0.68\\
\greedypm\activeres/\periodic/\optavg/\mft[300]&0.50&1.33&0.28&0.69\\
\greedypm\activeres/\periodic/\optavg/\mft[600]&0.50&1.35&0.27&0.68\\
\greedypm\activeres/\periodic/\optavg/\mvt[300]&0.49&1.33&0.27&0.67\\
\greedypm\activeres/\periodic/\optavg/\mvt[600]&0.49&1.34&0.26&0.65\\
\greedypm\activeres/\periodic/\optmin&0.56&1.37&0.29&0.66\\
\greedypm\activeres/\periodic/\optmin/\mft[300]&0.55&1.36&0.27&0.66\\
\greedypm\activeres/\periodic/\optmin/\mft[600]&0.55&1.36&0.27&0.65\\
\greedypm\activeres/\periodic/\optmin/\mvt[300]&0.55&1.36&0.27&0.64\\
\greedypm\activeres/\periodic/\optmin/\mvt[600]&0.54&1.34&0.26&0.62\\
\mcb\activeres/\optavg&0.38&1.15&1.16&2.35\\
\mcb\activeres/\optavg/\mft[300]&0.18&0.99&0.86&2.98\\
\mcb\activeres/\optavg/\mft[600]&0.14&0.67&0.73&2.60\\
\mcb\activeres/\optavg/\mvt[300]&0.13&0.48&0.61&2.09\\
\mcb\activeres/\optavg/\mvt[600]&0.12&0.39&0.53&1.49\\
\mcb\activeres/\optmin&0.42&1.26&1.17&2.36\\
\mcb\activeres/\optmin/\mft[300]&0.20&0.98&0.86&2.89\\
\mcb\activeres/\optmin/\mft[600]&0.15&0.78&0.75&2.72\\
\mcb\activeres/\optmin/\mvt[300]&0.13&0.72&0.63&2.12\\
\mcb\activeres/\optmin/\mvt[600]&0.13&0.37&0.53&1.51\\
\mcb/\periodic/\optavg&0.52&1.07&0.69&2.92\\
\mcb/\periodic/\optavg/\mft[300]&0.50&1.05&0.55&2.14\\
\mcb/\periodic/\optavg/\mft[600]&0.49&1.04&0.51&1.66\\
\mcb/\periodic/\optavg/\mvt[300]&0.49&1.07&0.47&1.32\\
\mcb/\periodic/\optavg/\mvt[600]&0.49&1.06&0.43&1.18\\
\mcb/\periodic/\optmin&0.56&1.10&0.69&3.03\\
\mcb/\periodic/\optmin/\mft[300]&0.54&1.10&0.55&2.04\\
\mcb/\periodic/\optmin/\mft[600]&0.53&1.09&0.51&1.68\\
\mcb/\periodic/\optmin/\mvt[300]&0.53&1.11&0.47&1.40\\
\mcb/\periodic/\optmin/\mvt[600]&0.53&1.12&0.43&1.12\\
\mcb\activeres/\periodic/\optavg&0.67&1.08&1.21&2.54\\
\mcb\activeres/\periodic/\optavg/\mft[300]&0.54&1.05&0.89&3.12\\
\mcb\activeres/\periodic/\optavg/\mft[600]&0.51&1.03&0.77&2.69\\
\mcb\activeres/\periodic/\optavg/\mvt[300]&0.50&1.07&0.65&2.08\\
\mcb\activeres/\periodic/\optavg/\mvt[600]&0.50&1.09&0.57&1.53\\
\mcb\activeres/\periodic/\optmin&0.72&1.15&1.21&2.68\\
\mcb\activeres/\periodic/\optmin/\mft[300]&0.58&1.12&0.89&3.02\\
\mcb\activeres/\periodic/\optmin/\mft[600]&0.55&1.10&0.77&2.76\\
\mcb\activeres/\periodic/\optmin/\mvt[300]&0.54&1.11&0.65&2.16\\
\mcb\activeres/\periodic/\optmin/\mvt[600]&0.54&1.11&0.56&1.53\\
/\periodic/\optavg&0.45&1.02&0.21&0.64\\
/\periodic/\optavg/\mft[300]&0.45&1.03&0.21&0.64\\
/\periodic/\optavg/\mft[600]&0.45&1.04&0.21&0.64\\
/\periodic/\optavg/\mvt[300]&0.45&1.02&0.21&0.63\\
/\periodic/\optavg/\mvt[600]&0.45&1.03&0.20&0.62\\
/\periodic/\optmin&0.49&1.07&0.21&0.62\\
/\periodic/\optmin/\mft[300]&0.49&1.07&0.21&0.62\\
/\periodic/\optmin/\mft[600]&0.49&1.07&0.21&0.62\\
/\periodic/\optmin/\mvt[300]&0.49&1.08&0.20&0.60\\
/\periodic/\optmin/\mvt[600]&0.49&1.08&0.19&0.58\\
/\mcbsp/\optavg&0.28&0.66&0.39&0.79\\
/\mcbsp/\optavg/\mft[300]&0.28&0.66&0.39&0.78\\
/\mcbsp/\optavg/\mft[600]&0.28&0.66&0.39&0.78\\
/\mcbsp/\optavg/\mvt[300]&0.28&0.68&0.38&0.79\\
/\mcbsp/\optavg/\mvt[600]&0.28&0.68&0.37&0.78\\
/\mcbsp/\optmax&0.28&0.65&0.39&0.81\\
/\mcbsp/\optmax/\mft[300]&0.28&0.65&0.39&0.81\\
/\mcbsp/\optmax/\mft[600]&0.28&0.65&0.39&0.81\\
/\mcbsp/\optmax/\mvt[300]&0.28&0.65&0.38&0.81\\
/\mcbsp/\optmax/\mvt[600]&0.28&0.64&0.37&0.78\\

\end{longtable}
\end{center}

\pagebreak
\begin{center}
\begin{longtable}{|l||r|r||r|r||}
\caption[Preemption and migration frequency in occurrences per hour.]{%
  Preemption and migration frequency in terms of number of preemption and
  migration occurrences per hour. Average and maximum values over scaled
  synthetic traces with load $\geq 0.7$.%
  \label{sec.extratabs:tab.eventsperhour}}\\
\hline
Algorithm & \multicolumn{4}{c||}{Occurrences / hour}\\
          & \multicolumn{2}{c||}{pmtn} & \multicolumn{2}{c||}{mig}\\ 
          & avg. & max & avg. & max\\
\hline
\endfirsthead
\caption[]{(Continued) Preemption and migration frequency in terms of number of 
  preemption and migration occurrences per hour. Average and maximum values over
  scaled synthetic traces with load $\geq 0.7$.}\\
\hline
Algorithm & \multicolumn{4}{c||}{Occurrences / hour}\\
          & \multicolumn{2}{c||}{pmtn} & \multicolumn{2}{c||}{mig}\\ 
          & avg. & max & avg. & max\\
\hline
\endhead
\hline
\endfoot
\hline
\endlastfoot
\greedy\activeres/\optavg&0.00&0.00&0.00&0.00\\
\greedy\activeres/\optmin&0.00&0.00&0.00&0.00\\
\greedy/\periodic/\optavg&30.12&75.96&38.81&124.56\\
\greedy/\periodic/\optavg/\mft[300]&29.82&75.24&36.41&110.88\\
\greedy/\periodic/\optavg/\mft[600]&29.63&75.96&35.46&112.32\\
\greedy/\periodic/\optavg/\mvt[300]&29.69&74.88&35.46&118.08\\
\greedy/\periodic/\optavg/\mvt[600]&29.55&74.52&33.91&107.64\\
\greedy/\periodic/\optmin&32.58&83.52&38.79&110.52\\
\greedy/\periodic/\optmin/\mft[300]&32.28&82.80&36.23&107.28\\
\greedy/\periodic/\optmin/\mft[600]&32.18&82.80&35.15&106.56\\
\greedy/\periodic/\optmin/\mvt[300]&32.25&82.80&35.11&103.68\\
\greedy/\periodic/\optmin/\mvt[600]&32.04&83.16&33.60&103.68\\
\greedy\activeres/\periodic/\optavg&26.07&73.08&55.45&126.00\\
\greedy\activeres/\periodic/\optavg/\mft[300]&25.68&71.64&53.16&119.16\\
\greedy\activeres/\periodic/\optavg/\mft[600]&25.39&71.28&51.96&117.36\\
\greedy\activeres/\periodic/\optavg/\mvt[300]&25.23&71.28&51.74&123.12\\
\greedy\activeres/\periodic/\optavg/\mvt[600]&24.86&70.56&50.07&120.24\\
\greedy\activeres/\periodic/\optmin&29.27&84.96&58.06&124.56\\
\greedy\activeres/\periodic/\optmin/\mft[300]&28.74&83.16&55.46&123.84\\
\greedy\activeres/\periodic/\optmin/\mft[600]&28.58&83.88&54.32&123.12\\
\greedy\activeres/\periodic/\optmin/\mvt[300]&28.50&83.52&53.86&120.96\\
\greedy\activeres/\periodic/\optmin/\mvt[600]&28.08&83.52&51.97&117.36\\
\greedyp\activeres/\optavg&5.78&20.52&0.00&0.00\\
\greedyp\activeres/\optmin&5.67&18.00&0.00&0.00\\
\greedyp/\periodic/\optavg&30.86&76.68&37.70&111.96\\
\greedyp/\periodic/\optavg/\mft[300]&30.46&74.88&35.37&108.72\\
\greedyp/\periodic/\optavg/\mft[600]&30.31&75.24&34.51&110.88\\
\greedyp/\periodic/\optavg/\mvt[300]&30.39&77.04&34.63&106.56\\
\greedyp/\periodic/\optavg/\mvt[600]&30.17&77.04&33.08&103.68\\
\greedyp/\periodic/\optmin&33.34&85.32&37.52&107.64\\
\greedyp/\periodic/\optmin/\mft[300]&32.95&83.88&35.05&104.40\\
\greedyp/\periodic/\optmin/\mft[600]&32.79&84.60&34.07&103.68\\
\greedyp/\periodic/\optmin/\mvt[300]&32.88&84.60&34.17&105.84\\
\greedyp/\periodic/\optmin/\mvt[600]&32.70&83.52&32.74&101.52\\
\greedyp\activeres/\periodic/\optavg&37.08&83.52&56.42&123.12\\
\greedyp\activeres/\periodic/\optavg/\mft[300]&35.90&84.96&53.48&123.12\\
\greedyp\activeres/\periodic/\optavg/\mft[600]&35.58&84.60&52.38&120.60\\
\greedyp\activeres/\periodic/\optavg/\mvt[300]&35.33&84.24&52.26&117.00\\
\greedyp\activeres/\periodic/\optavg/\mvt[600]&34.70&82.80&50.41&114.84\\
\greedyp\activeres/\periodic/\optmin&39.67&97.92&58.56&127.08\\
\greedyp\activeres/\periodic/\optmin/\mft[300]&38.67&98.28&55.64&124.20\\
\greedyp\activeres/\periodic/\optmin/\mft[600]&38.39&96.48&54.46&122.40\\
\greedyp\activeres/\periodic/\optmin/\mvt[300]&38.27&97.56&53.82&119.52\\
\greedyp\activeres/\periodic/\optmin/\mvt[600]&37.70&97.20&52.14&119.16\\
\greedypm\activeres/\optavg&2.31&9.72&3.75&14.04\\
\greedypm\activeres/\optmin&2.25&10.08&3.69&13.32\\
\greedypm/\periodic/\optavg&30.36&77.04&39.63&114.84\\
\greedypm/\periodic/\optavg/\mft[300]&29.96&75.24&37.04&114.12\\
\greedypm/\periodic/\optavg/\mft[600]&29.79&77.40&36.13&109.80\\
\greedypm/\periodic/\optavg/\mvt[300]&29.90&75.60&36.03&113.40\\
\greedypm/\periodic/\optavg/\mvt[600]&29.70&77.04&34.83&116.28\\
\greedypm/\periodic/\optmin&32.93&84.24&39.21&112.32\\
\greedypm/\periodic/\optmin/\mft[300]&32.46&84.60&36.78&108.36\\
\greedypm/\periodic/\optmin/\mft[600]&32.34&83.88&35.87&108.72\\
\greedypm/\periodic/\optmin/\mvt[300]&32.44&84.24&35.78&108.72\\
\greedypm/\periodic/\optmin/\mvt[600]&32.24&83.16&34.31&106.56\\
\greedypm\activeres/\periodic/\optavg&32.78&80.28&60.90&129.60\\
\greedypm\activeres/\periodic/\optavg/\mft[300]&31.93&81.00&57.83&123.48\\
\greedypm\activeres/\periodic/\optavg/\mft[600]&31.57&81.00&56.72&119.88\\
\greedypm\activeres/\periodic/\optavg/\mvt[300]&31.19&78.84&56.59&121.32\\
\greedypm\activeres/\periodic/\optavg/\mvt[600]&30.60&79.56&54.88&120.96\\
\greedypm\activeres/\periodic/\optmin&35.78&96.48&62.95&135.72\\
\greedypm\activeres/\periodic/\optmin/\mft[300]&34.88&95.04&59.84&132.48\\
\greedypm\activeres/\periodic/\optmin/\mft[600]&34.45&92.88&58.68&130.32\\
\greedypm\activeres/\periodic/\optmin/\mvt[300]&34.25&94.32&58.31&128.52\\
\greedypm\activeres/\periodic/\optmin/\mvt[600]&33.80&94.32&56.45&127.08\\
\mcb\activeres/\optavg&56.54&214.56&470.14&998.28\\
\mcb\activeres/\optavg/\mft[300]&20.23&162.72&327.56&1,313.64\\
\mcb\activeres/\optavg/\mft[600]&12.86&90.36&279.56&1,201.68\\
\mcb\activeres/\optavg/\mvt[300]&10.04&51.84&238.86&1,101.60\\
\mcb\activeres/\optavg/\mvt[600]&9.27&37.80&206.62&822.60\\
\mcb\activeres/\optmin&61.66&230.40&490.48&1,005.48\\
\mcb\activeres/\optmin/\mft[300]&21.60&174.96&337.86&1,343.16\\
\mcb\activeres/\optmin/\mft[600]&13.24&102.96&291.29&1,306.80\\
\mcb\activeres/\optmin/\mvt[300]&10.25&78.48&247.80&1,083.24\\
\mcb\activeres/\optmin/\mvt[600]&9.51&38.16&212.16&825.48\\
\mcb/\periodic/\optavg&38.66&83.52&235.57&1,181.88\\
\mcb/\periodic/\optavg/\mft[300]&35.19&80.28&164.35&798.48\\
\mcb/\periodic/\optavg/\mft[600]&34.72&79.56&149.35&709.92\\
\mcb/\periodic/\optavg/\mvt[300]&34.57&80.64&139.75&600.84\\
\mcb/\periodic/\optavg/\mvt[600]&34.15&78.48&124.47&480.24\\
\mcb/\periodic/\optmin&41.59&86.76&243.13&1,269.00\\
\mcb/\periodic/\optmin/\mft[300]&38.09&86.04&167.67&826.56\\
\mcb/\periodic/\optmin/\mft[600]&37.47&84.96&151.97&755.64\\
\mcb/\periodic/\optmin/\mvt[300]&37.32&86.40&142.42&619.92\\
\mcb/\periodic/\optmin/\mvt[600]&36.93&86.76&126.13&490.68\\
\mcb\activeres/\periodic/\optavg&70.98&163.44&461.18&1,045.44\\
\mcb\activeres/\periodic/\optavg/\mft[300]&44.52&128.52&314.64&1,351.80\\
\mcb\activeres/\periodic/\optavg/\mft[600]&38.07&83.52&263.46&1,245.60\\
\mcb\activeres/\periodic/\optavg/\mvt[300]&35.99&80.64&222.61&1,057.32\\
\mcb\activeres/\periodic/\optavg/\mvt[600]&35.09&77.04&191.91&795.24\\
\mcb\activeres/\periodic/\optmin&77.04&170.28&479.31&1,109.52\\
\mcb\activeres/\periodic/\optmin/\mft[300]&47.63&135.00&325.09&1,370.16\\
\mcb\activeres/\periodic/\optmin/\mft[600]&41.08&93.60&271.59&1,337.76\\
\mcb\activeres/\periodic/\optmin/\mvt[300]&38.88&86.40&227.80&1,129.32\\
\mcb\activeres/\periodic/\optmin/\mvt[600]&37.94&85.32&194.57&836.28\\
/\periodic/\optavg&31.29&76.32&38.76&113.76\\
/\periodic/\optavg/\mft[300]&31.17&75.96&39.00&114.84\\
/\periodic/\optavg/\mft[600]&31.23&75.96&38.96&113.40\\
/\periodic/\optavg/\mvt[300]&31.16&75.24&37.86&111.24\\
/\periodic/\optavg/\mvt[600]&31.04&75.24&36.62&111.60\\
/\periodic/\optmin&33.83&84.24&38.69&111.24\\
/\periodic/\optmin/\mft[300]&33.83&84.24&38.69&111.24\\
/\periodic/\optmin/\mft[600]&33.83&84.24&38.69&111.24\\
/\periodic/\optmin/\mvt[300]&33.88&83.52&37.62&107.64\\
/\periodic/\optmin/\mvt[600]&33.76&84.60&36.21&104.04\\
/\mcbsp/\optavg&20.64&43.20&62.89&140.04\\
/\mcbsp/\optavg/\mft[300]&20.65&42.48&63.03&140.40\\
/\mcbsp/\optavg/\mft[600]&20.62&43.56&62.88&138.96\\
/\mcbsp/\optavg/\mvt[300]&20.62&43.20&61.58&136.80\\
/\mcbsp/\optavg/\mvt[600]&20.60&43.56&59.78&132.48\\
/\mcbsp/\optmax&20.41&45.36&67.26&159.48\\
/\mcbsp/\optmax/\mft[300]&20.41&45.36&67.26&159.48\\
/\mcbsp/\optmax/\mft[600]&20.41&45.36&67.26&159.48\\
/\mcbsp/\optmax/\mvt[300]&20.35&46.44&65.60&158.40\\
/\mcbsp/\optmax/\mvt[600]&20.25&46.80&63.87&153.36\\

\end{longtable}
\end{center}

\pagebreak
\begin{center}
\begin{longtable}{|l||r|r||r|r||}
\caption[Preemption and migration frequency in occurrences per job.]{%
  Preemption and migration frequency in terms of number of preemption and
  migration occurrences per job. Average and maximum values over scaled
  synthetic traces with load $\geq 0.7$.%
  \label{sec.extratabs:tab.eventsperjob}}\\
\hline
Algorithm & \multicolumn{4}{c||}{Occurrences / job}\\
          & \multicolumn{2}{c||}{pmtn} & \multicolumn{2}{c||}{mig}\\ 
          & avg. & max & avg. & max\\
\hline
\endfirsthead
\caption[]{(Continued) Preemption and migration frequency in terms of number of
  preemption and migration occurrences per job. Average and maximum values over
  scaled synthetic traces with load $\geq 0.7$.}\\
\hline
Algorithm & \multicolumn{4}{c||}{Occurrences / job}\\
          & \multicolumn{2}{c||}{pmtn} & \multicolumn{2}{c||}{mig}\\ 
          & avg. & max & avg. & max\\
\hline
\endhead
\hline
\endfoot
\hline
\endlastfoot
\greedy\activeres/\optavg&0.00&0.00&0.00&0.00\\
\greedy\activeres/\optmin&0.00&0.00&0.00&0.00\\
\greedy/\periodic/\optavg&5.03&21.58&4.79&18.26\\
\greedy/\periodic/\optavg/\mft[300]&4.98&21.52&4.52&15.80\\
\greedy/\periodic/\optavg/\mft[600]&4.94&20.66&4.41&16.26\\
\greedy/\periodic/\optavg/\mvt[300]&4.95&21.03&4.40&17.08\\
\greedy/\periodic/\optavg/\mvt[600]&4.91&21.01&4.22&15.65\\
\greedy/\periodic/\optmin&5.41&21.76&4.81&16.17\\
\greedy/\periodic/\optmin/\mft[300]&5.36&21.55&4.52&15.40\\
\greedy/\periodic/\optmin/\mft[600]&5.33&21.31&4.39&15.27\\
\greedy/\periodic/\optmin/\mvt[300]&5.34&21.83&4.37&15.21\\
\greedy/\periodic/\optmin/\mvt[600]&5.29&21.94&4.20&14.83\\
\greedy\activeres/\periodic/\optavg&3.95&20.26&6.59&17.29\\
\greedy\activeres/\periodic/\optavg/\mft[300]&3.89&19.76&6.36&16.62\\
\greedy\activeres/\periodic/\optavg/\mft[600]&3.85&19.57&6.23&16.52\\
\greedy\activeres/\periodic/\optavg/\mvt[300]&3.81&19.60&6.18&16.42\\
\greedy\activeres/\periodic/\optavg/\mvt[600]&3.75&19.24&5.98&16.01\\
\greedy\activeres/\periodic/\optmin&4.49&22.55&6.94&17.65\\
\greedy\activeres/\periodic/\optmin/\mft[300]&4.41&22.07&6.66&17.31\\
\greedy\activeres/\periodic/\optmin/\mft[600]&4.38&21.89&6.55&17.38\\
\greedy\activeres/\periodic/\optmin/\mvt[300]&4.37&21.71&6.48&17.26\\
\greedy\activeres/\periodic/\optmin/\mvt[600]&4.29&21.80&6.25&16.53\\
\greedyp\activeres/\optavg&0.58&2.09&0.00&0.00\\
\greedyp\activeres/\optmin&0.57&2.04&0.00&0.00\\
\greedyp/\periodic/\optavg&5.16&21.11&4.67&16.20\\
\greedyp/\periodic/\optavg/\mft[300]&5.09&21.05&4.41&15.45\\
\greedyp/\periodic/\optavg/\mft[600]&5.05&20.97&4.32&15.12\\
\greedyp/\periodic/\optavg/\mvt[300]&5.06&20.52&4.33&15.29\\
\greedyp/\periodic/\optavg/\mvt[600]&5.02&20.36&4.14&14.68\\
\greedyp/\periodic/\optmin&5.54&22.20&4.67&15.76\\
\greedyp/\periodic/\optmin/\mft[300]&5.47&21.45&4.39&15.09\\
\greedyp/\periodic/\optmin/\mft[600]&5.45&21.41&4.28&15.30\\
\greedyp/\periodic/\optmin/\mvt[300]&5.46&21.94&4.29&15.04\\
\greedyp/\periodic/\optmin/\mvt[600]&5.41&21.37&4.11&14.29\\
\greedyp\activeres/\periodic/\optavg&5.37&22.68&6.78&17.07\\
\greedyp\activeres/\periodic/\optavg/\mft[300]&5.21&22.27&6.47&17.30\\
\greedyp\activeres/\periodic/\optavg/\mft[600]&5.17&21.34&6.35&17.24\\
\greedyp\activeres/\periodic/\optavg/\mvt[300]&5.12&21.58&6.32&16.78\\
\greedyp\activeres/\periodic/\optavg/\mvt[600]&5.03&21.52&6.09&16.26\\
\greedyp\activeres/\periodic/\optmin&5.87&24.87&7.09&17.91\\
\greedyp\activeres/\periodic/\optmin/\mft[300]&5.73&23.92&6.78&17.45\\
\greedyp\activeres/\periodic/\optmin/\mft[600]&5.69&24.22&6.65&17.17\\
\greedyp\activeres/\periodic/\optmin/\mvt[300]&5.66&24.30&6.54&17.04\\
\greedyp\activeres/\periodic/\optmin/\mvt[600]&5.56&23.88&6.34&16.48\\
\greedypm\activeres/\optavg&0.24&1.11&0.37&1.18\\
\greedypm\activeres/\optmin&0.23&1.19&0.36&1.22\\
\greedypm/\periodic/\optavg&5.11&20.79&4.88&16.67\\
\greedypm/\periodic/\optavg/\mft[300]&5.03&20.61&4.60&15.78\\
\greedypm/\periodic/\optavg/\mft[600]&5.01&20.41&4.50&15.52\\
\greedypm/\periodic/\optavg/\mvt[300]&5.02&20.99&4.48&15.41\\
\greedypm/\periodic/\optavg/\mvt[600]&4.98&21.06&4.33&16.83\\
\greedypm/\periodic/\optmin&5.52&21.65&4.85&16.03\\
\greedypm/\periodic/\optmin/\mft[300]&5.43&21.97&4.58&15.71\\
\greedypm/\periodic/\optmin/\mft[600]&5.41&21.39&4.49&15.59\\
\greedypm/\periodic/\optmin/\mvt[300]&5.42&21.67&4.45&15.15\\
\greedypm/\periodic/\optmin/\mvt[600]&5.37&22.19&4.28&15.07\\
\greedypm\activeres/\periodic/\optavg&4.88&22.02&7.30&18.27\\
\greedypm\activeres/\periodic/\optavg/\mft[300]&4.76&21.49&6.98&17.71\\
\greedypm\activeres/\periodic/\optavg/\mft[600]&4.70&21.30&6.85&17.02\\
\greedypm\activeres/\periodic/\optavg/\mvt[300]&4.65&21.03&6.81&17.11\\
\greedypm\activeres/\periodic/\optavg/\mvt[600]&4.55&20.70&6.61&17.24\\
\greedypm\activeres/\periodic/\optmin&5.42&24.00&7.59&18.32\\
\greedypm\activeres/\periodic/\optmin/\mft[300]&5.29&23.42&7.27&18.23\\
\greedypm\activeres/\periodic/\optmin/\mft[600]&5.23&23.76&7.14&17.80\\
\greedypm\activeres/\periodic/\optmin/\mvt[300]&5.19&23.12&7.07&17.32\\
\greedypm\activeres/\periodic/\optmin/\mvt[600]&5.11&23.23&6.84&16.94\\
\mcb\activeres/\optavg&6.10&17.36&55.01&86.16\\
\mcb\activeres/\optavg/\mft[300]&2.02&14.26&31.45&80.96\\
\mcb\activeres/\optavg/\mft[600]&1.31&9.68&25.24&65.40\\
\mcb\activeres/\optavg/\mvt[300]&1.03&2.88&20.89&42.69\\
\mcb\activeres/\optavg/\mvt[600]&0.97&3.42&18.17&41.67\\
\mcb\activeres/\optmin&6.67&18.83&57.46&90.29\\
\mcb\activeres/\optmin/\mft[300]&2.18&15.58&32.59&89.17\\
\mcb\activeres/\optmin/\mft[600]&1.37&8.67&26.28&68.39\\
\mcb\activeres/\optmin/\mvt[300]&1.06&3.08&21.73&46.10\\
\mcb\activeres/\optmin/\mvt[600]&1.00&2.89&18.68&41.15\\
\mcb/\periodic/\optavg&6.34&25.03&25.02&41.57\\
\mcb/\periodic/\optavg/\mft[300]&6.04&24.52&18.31&36.68\\
\mcb/\periodic/\optavg/\mft[600]&5.97&24.57&16.87&35.67\\
\mcb/\periodic/\optavg/\mvt[300]&5.94&25.07&15.76&33.88\\
\mcb/\periodic/\optavg/\mvt[600]&5.85&24.90&14.18&33.19\\
\mcb/\periodic/\optmin&6.83&26.38&25.74&43.24\\
\mcb/\periodic/\optmin/\mft[300]&6.51&26.03&18.75&38.89\\
\mcb/\periodic/\optmin/\mft[600]&6.42&26.10&17.23&36.68\\
\mcb/\periodic/\optmin/\mvt[300]&6.39&26.37&16.13&36.79\\
\mcb/\periodic/\optmin/\mvt[600]&6.30&26.13&14.45&34.48\\
\mcb\activeres/\periodic/\optavg&11.97&26.78&70.57&106.14\\
\mcb\activeres/\periodic/\optavg/\mft[300]&7.54&26.28&39.94&105.55\\
\mcb\activeres/\periodic/\optavg/\mft[600]&6.59&24.99&31.32&78.30\\
\mcb\activeres/\periodic/\optavg/\mvt[300]&6.24&25.69&25.53&53.96\\
\mcb\activeres/\periodic/\optavg/\mvt[600]&6.08&25.58&22.24&50.41\\
\mcb\activeres/\periodic/\optmin&13.01&28.93&73.61&117.15\\
\mcb\activeres/\periodic/\optmin/\mft[300]&8.11&27.86&41.44&112.18\\
\mcb\activeres/\periodic/\optmin/\mft[600]&7.13&27.14&32.52&81.60\\
\mcb\activeres/\periodic/\optmin/\mvt[300]&6.72&27.27&26.33&54.97\\
\mcb\activeres/\periodic/\optmin/\mvt[600]&6.57&26.52&22.55&50.29\\
/\periodic/\optavg&5.25&22.34&4.89&16.87\\
/\periodic/\optavg/\mft[300]&5.24&22.85&4.92&17.02\\
/\periodic/\optavg/\mft[600]&5.24&22.90&4.92&16.88\\
/\periodic/\optavg/\mvt[300]&5.23&22.16&4.78&16.52\\
/\periodic/\optavg/\mvt[600]&5.20&22.95&4.63&16.52\\
/\periodic/\optmin&5.65&23.23&4.90&16.58\\
/\periodic/\optmin/\mft[300]&5.65&23.23&4.90&16.58\\
/\periodic/\optmin/\mft[600]&5.65&23.23&4.90&16.58\\
/\periodic/\optmin/\mvt[300]&5.65&23.25&4.77&16.00\\
/\periodic/\optmin/\mvt[600]&5.63&22.78&4.59&15.51\\
/\mcbsp/\optavg&3.79&16.03&9.58&23.98\\
/\mcbsp/\optavg/\mft[300]&3.79&15.86&9.60&24.02\\
/\mcbsp/\optavg/\mft[600]&3.78&16.13&9.58&23.79\\
/\mcbsp/\optavg/\mvt[300]&3.77&16.24&9.36&23.27\\
/\mcbsp/\optavg/\mvt[600]&3.77&16.00&9.11&22.63\\
/\mcbsp/\optmax&3.79&16.71&10.41&26.96\\
/\mcbsp/\optmax/\mft[300]&3.79&16.71&10.41&26.96\\
/\mcbsp/\optmax/\mft[600]&3.79&16.71&10.41&26.96\\
/\mcbsp/\optmax/\mvt[300]&3.78&17.13&10.14&26.75\\
/\mcbsp/\optmax/\mvt[600]&3.76&17.52&9.87&25.78\\

\end{longtable}
\end{center}

\clearpage

\section{Additional Graphs}
\label{sec.extragraphs}

\begin{figure}[htb]
\begin{minipage}[b]{0.45\linewidth}
\centering
\includegraphics[width=0.9\textwidth]{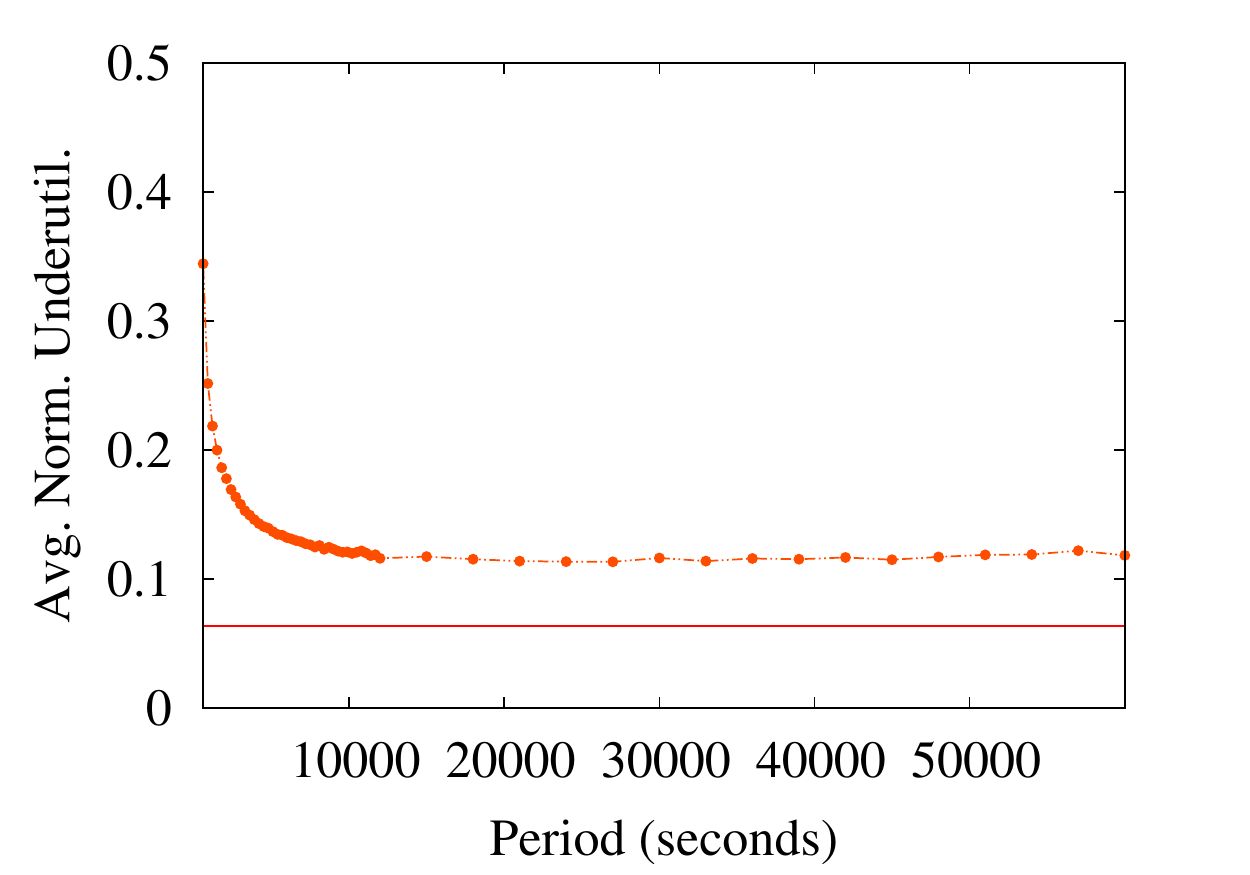}
\caption{Average normalized underutilization vs. period for EASY (solid) and  
  \greedypm\activeres/\periodic/\optmin/\mvt[600] (dots) 
  on Real-world traces, to 60,000 seconds}
\label{fig.underutilperiodic-hpc2n-120-ext}
\end{minipage}
\hspace{0.5cm}
\begin{minipage}[b]{0.45\linewidth}
\centering
\includegraphics[width=0.9\textwidth]{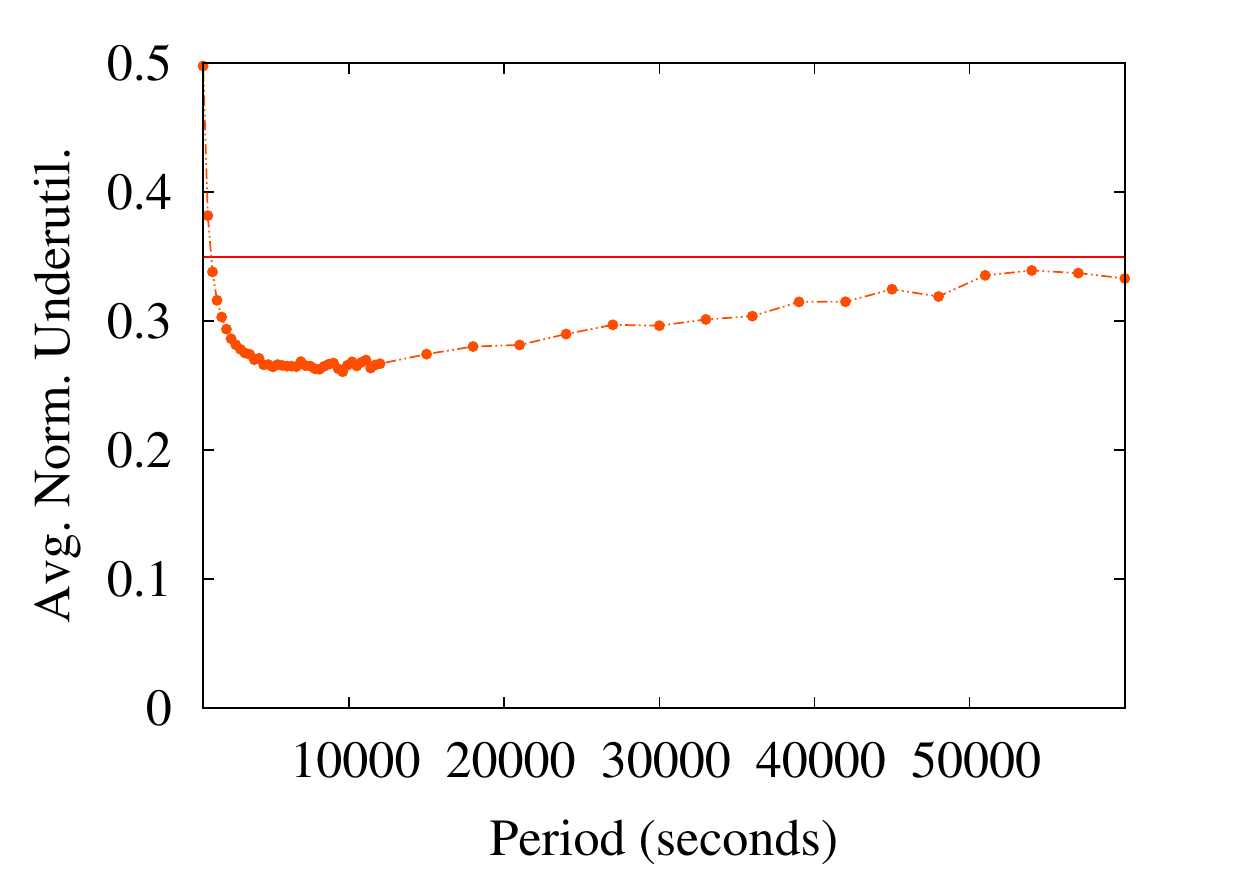}
\caption{Average normalized underutilization vs. period for EASY (solid) and  
  \greedypm\activeres/\periodic/\optmin/\mvt[600] (dots) 
  on Unscaled synthetic traces, to 60,000 seconds}
\label{fig.underutilperiodic-lubtraces-128-ext}
\end{minipage}
~\\
\begin{minipage}[b]{0.45\linewidth}
\centering
\includegraphics[width=0.9\textwidth]{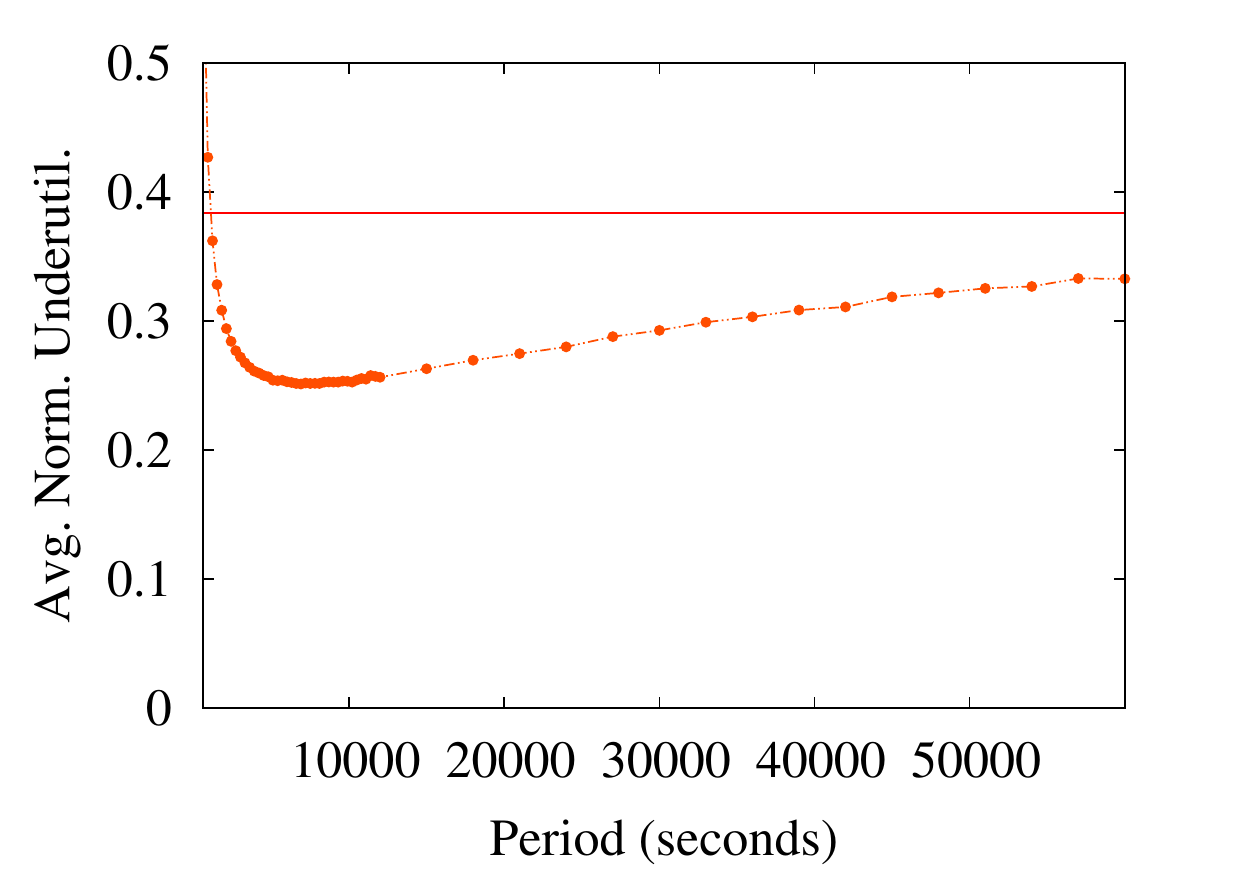}
\caption{Average normalized underutilization vs. period for EASY (solid) and
  \greedypm\activeres/\periodic/\optmin/\mvt[600] (dots)
  on Scaled synthetic traces, to 60,000 seconds}
\label{fig.underutilperiodic-lubscaled-128-ext}
\end{minipage}
\hspace{0.5cm}
\begin{minipage}[b]{0.45\linewidth}
~
\end{minipage}
\end{figure}

\begin{figure}[htb]
\centering
\begin{minipage}[b]{0.66\linewidth}
\centering
\includegraphics[width=0.9\textwidth]{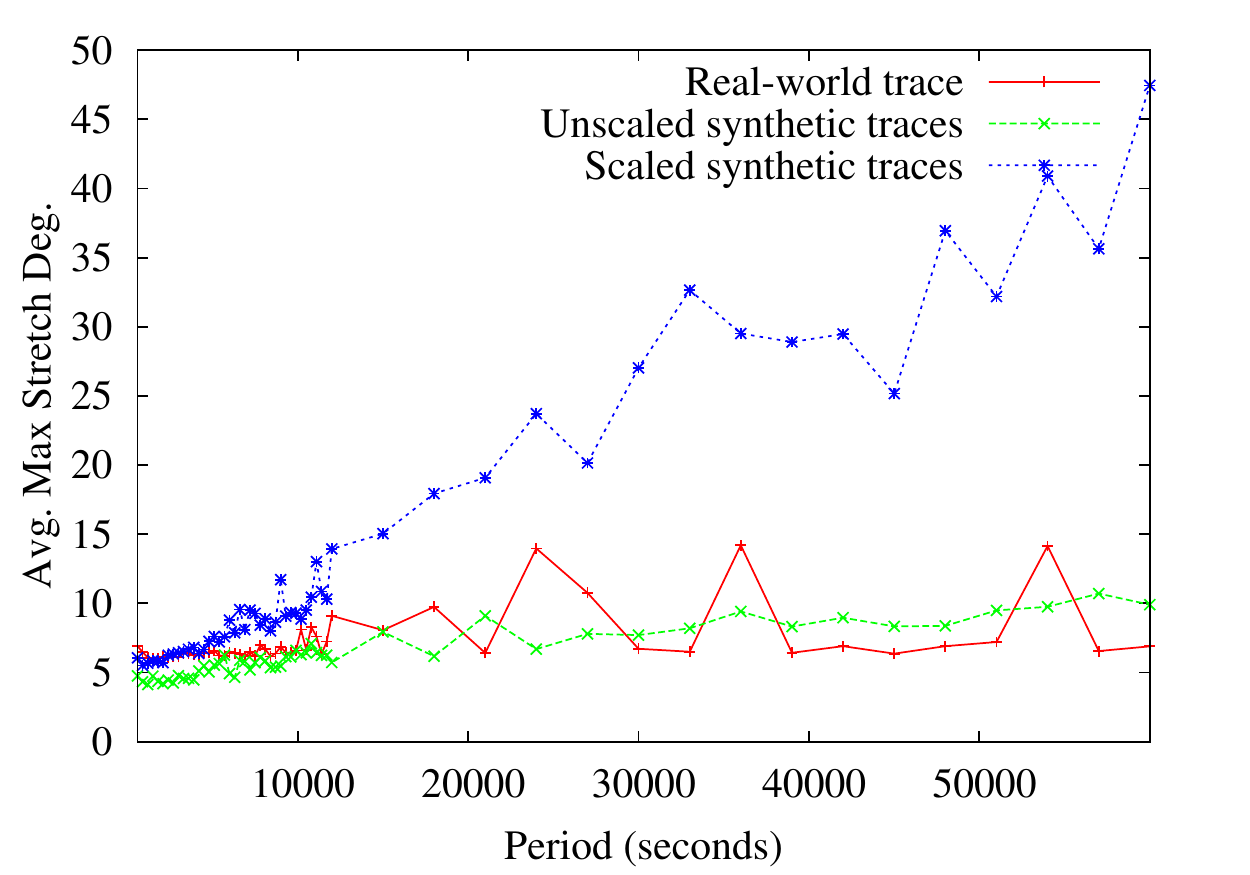}
\caption{Maximum stretch degradation from bound vs. scheduling period for
    \greedypm\activeres/\periodic/\optmin/\mvt[600] for all three trace sets,
    to 60,000 seconds}
\label{fig.underutilperiodicstretch-ext}
\end{minipage}
~\\
\begin{minipage}[b]{0.66\linewidth}
\centering
\includegraphics[width=0.9\textwidth]{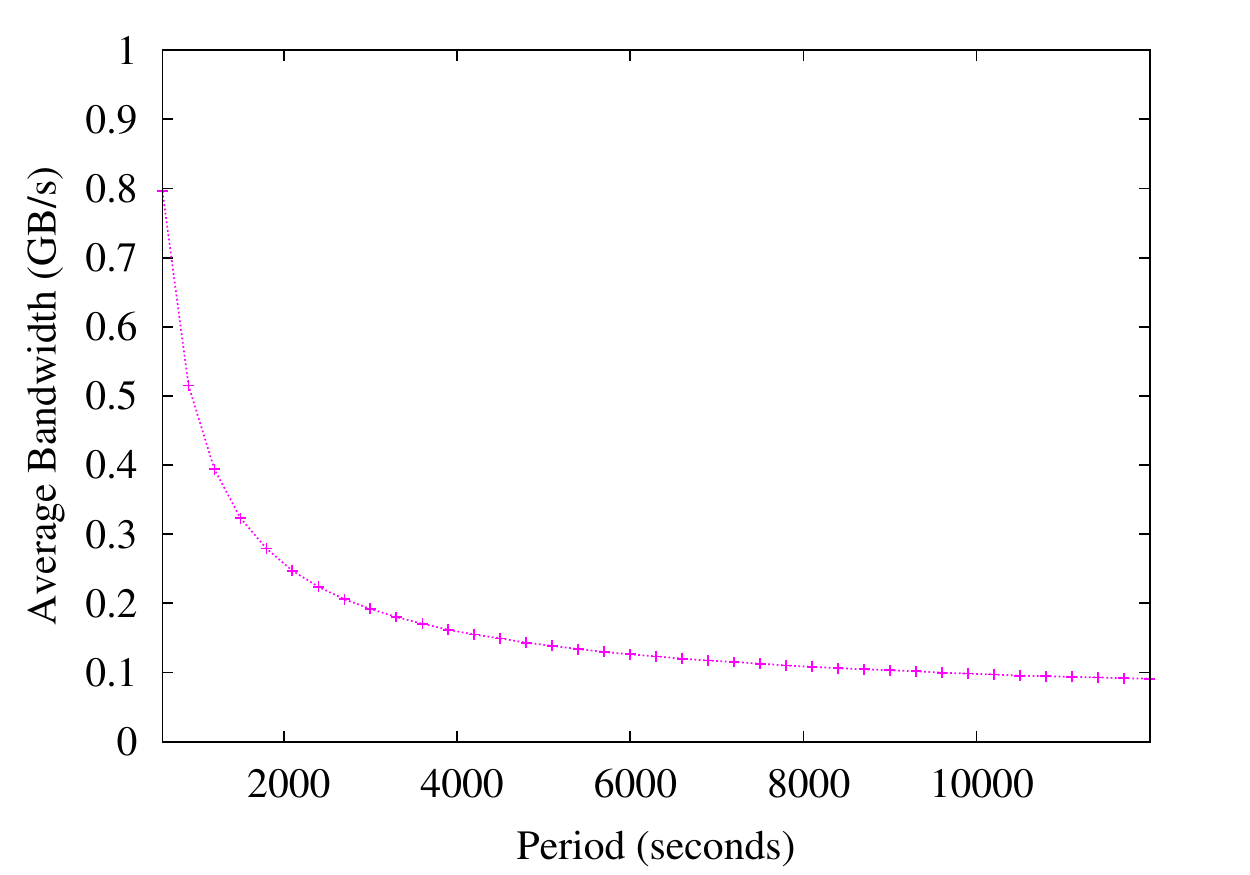}
\caption{Bandwidth consumption vs. period for
    \greedypm\activeres/\periodic/\optmin/\mvt[600] for the scaled synthetic
traces with load values $\geq 0.7$, to 12,000 seconds}
\label{fig.underutilperiodicbandwidth}
\end{minipage}
\end{figure}

\end{document}